\newcommand{\receiver}{\ensuremath{\mathcal{R}}}
\newcommand{\sender}{\ensuremath{\mathcal{S}}}
\newcommand{\opt}{\ensuremath{\operatorname{OPT}}}
\newcommand{\prob}{\ensuremath{\mathbb{P}}}
\newcommand{\rv}{\rho}
\newcommand{\receivervalue}{\rv}
\newcommand{\sv}{\xi}
\newcommand{\sendervalue}{\sv}
\newcommand{\vecrv}{\bm{\rv}}
\newcommand{\vecsv}{\bm{\sv}}
\newcommand{\HIRE}{\text{HIRE}}
\newcommand{\NOHIRE}{\text{NOT}}
\newcommand{\signal}{\sigma}
\newcommand{\stateON}{\theta}
\newcommand{\vecState}{\bm{\stateON}}
\newcommand{\vecstate}{\vecState}
\newcommand{\Ex}{\mathbb{E}}
\newcommand{\ap}{\ensuremath{\operatorname{SP}}}
\newcommand{\vecx}{\bm{x}}
\newcommand{\vecy}{\bm{y}}
\newcommand{\growingmid}{\mathrel{}\middle|\mathrel{}}
\newcommand{\D}{\displaystyle}
\let\oldnl\nl% Store \nl in \oldnl
\newcommand{\nonl}{\renewcommand{\nl}{\let\nl\oldnl}}% Remove line number for one 
\newtheorem{claim}{Claim}
\newtheorem{theorem}{Theorem}[section]
\newtheorem{corollary}[theorem]{Corollary}
\newtheorem{lemma}[theorem]{Lemma}
\newtheorem{proposition}[theorem]{Proposition}
\renewcommand{\epsilon}{\varepsilon}
\title{The Secretary Recommendation Problem}
\author{Niklas Hahn\thanks{Institute of Computer Science, Goethe University Frankfurt/Main, Germany} \and Martin Hoefer\thanks{Institute of Computer Science, Goethe University Frankfurt/Main, Germany} \and Rann Smorodinsky\thanks{Faculty of Industrial Engineering \& Management, Technion, Israel}
}
\date{}
\begin{document}

\maketitle	

	\begin{abstract}
		In this paper we revisit the basic variant of the classical secretary problem. We propose a new approach in which we separate between an agent that evaluates the secretary performance and one that has to make the hiring decision. The evaluating agent (the sender) signals the quality of the candidate to the hiring agent (the receiver) who must make a decision. Whenever the two agents' interests are not fully aligned, this induces an information transmission (signaling) challenge for the sender. We study the sender's optimization problem subject to persuasiveness constraints of the receiver for several variants of the problem. 

        Our results quantify the loss in performance for the sender due to online arrival. We provide optimal and near-optimal persuasive mechanisms that recover at least a constant fraction of a natural utility benchmark for the sender. The separation of evaluation and decision making can have a substantial impact on the approximation results. While in some scenarios, techniques and results closely mirror the conditions in the standard secretary problem, we also explore conditions that lead to very different characteristics.
        
    \end{abstract}

%\pagenumbering{gobble}
%\clearpage
%\pagenumbering{arabic}

	\noindent\textbf{Keywords: } Bayesian Persuasion, Secretary Problem, Online Algorithms, Approximation Algorithms 
	
	\section{Introduction}

	In the classical secretary problem a sequence of $n$ candidates with unknown valuations arrives sequentially in uniform random order. Upon arrival, a decision maker observes the value of a candidate and must make an irreversible decision of whether to hire the candidate or observe the subsequent one. The lion's share of this literature studies optimal or near-optimal algorithms for the decision maker and provides many elegant results on how well she can perform (see, e.g.,~\cite{Dynkin63,Ferguson89,Dinitz13, BabaioffIKK08,KesselheimRTV13,Kleinberg05,GoebelHKSV14,ChenHKLM15,FeldmanSZ18,HoeferK17,Reiffenhaeuser19,Rubinstein16} and many more).
	
	In this problem, the decision maker is tasked with two responsibilities: (1) evaluate the candidate, and (2) make the decision of whether or not to hire. In many settings, these two tasks are separated and the evaluation is done separately from the hiring decision. In this paper, we initiate the study of (near-)optimal persuasive mechanisms for the secretary problem when the evaluation is done by one agent while the decision is taken by another. Following tradition in the literature, we shall often refer to the latter as a `receiver' and to the former as a 'sender'. The separation between evaluation and hiring decision introduces incentives that are immaterial to the original setting.
	
	The aforementioned separation is wide-spread in applications. Consider, for example, the original motivating story of hiring an employee (e.g., a secretary). It is often the case that firms separate between the evaluation process, often led by a specialized HR department, and the hiring individual (the prospective boss). Another setting that has such a separation is in financial projects and investment opportunities. Many firms hire third-party consultants to evaluate a sequence of financial opportunities and use the consultant's report in their decision process. An opportunity not seized is often an opportunity foregone.
	
	To capture the incentive discrepancies between the sender and receiver, we endow each candidate with two valuations, one that is associated with the sender and one with the receiver. Candidates arrive according to a uniform random order. Upon arrival, the valuation pair is disclosed to the sender, who then makes some recommendation to the receiver. The receiver in turn takes one of two irrevocable actions, to hire or not to hire. Throughout the paper, we assume the sender has commitment power, and we design mechanisms for her recommendation such that the receiver always finds it in her interest to adopt the recommended action. Our goal is to design mechanisms that achieve (near-)optimal expected utility for the sender.
	
	Initially, we study an elementary scenario -- primarily for didactic reasons -- in which we assume that the valuation pairs of all candidates are known to both players in advance. Both players a-priori have full information about (1) the set of valuation pairs, (2) the fact that candidates arrive in uniform random order, and (3) the mechanism the sender uses for recommendation. In round $t$, the sender gets to see the arriving candidate and makes a recommendation using the mechanism. By applying Bayesian updates based on the a-priori information and the signals received in this and the previous rounds, the receiver can then decide whether it is in her interest to follow the recommendation or not. We provide optimal persuasive mechanisms for the sender, and we use this scenario as benchmark for the subsequent analysis. Hence, this is our \emph{basic} or \emph{benchmark scenario}. 

    Our main interest lies in the \emph{secretary scenario} as it is more reminiscent of the original secretary problem. Here the valuation pairs are unknown to the two players and are set by an adversary, but their arrival order is known to be uniform at random. The unknown adversarial design of valuations raises questions about modeling the incentives of the receiver. To determine her interests, the receiver should be able to evaluate what is ``in store'' for her if she adopts the sender's recommendation or deviates to another hiring strategy. To enable this decision, we adopt a robust approach and require that the mechanism is persuasive even if the receiver knows all valuation pairs (and, thus, has the same a-priori information as in the basic scenario above). Note that such a mechanism stays persuasive even if the receiver does not know the valuation pairs, since persuasiveness is merely a guarantee that it is in the receiver's interest to always follow the recommendation. Therefore, on a technical level, the secretary scenario differs from the basic scenario by the \emph{sender} a-priori lacking information about valuation pairs. In the secretary scenario, we provide a persuasive mechanism that guarantees a constant-factor approximation in the expected utility for the sender when compared to the optimum from the basic scenario. Hence, for a sender in a signaling context, \emph{secretary-style online arrival only results in a performance deterioration by a small constant factor}.

	Moreover, we study both scenarios in two different variants: In the first one, the receiver only receives the sender's signals; in the second one, she is also informed about the valuations of rejected candidates. We refer to this variant as a scenario \emph{with disclosure}. For motivation, one can think of candidates that are foregone as candidates that go on the market which, inter-alia, extracts and makes public all the relevant information about the candidate. In the example with financial investments, one can think of a sequence of investment opportunities given to some prominent investor (e.g., a top-ten private equity fund) before they go on the market. In the basic scenario with disclosure, we again obtain a mechanism with a constant approximation ratio. However, disclosure in the secretary scenario can lead to a deterioration in the expected utility by a factor of $\Theta(1/n)$.

	We examine all four scenarios (basic and secretary, both with and without disclosure) with different utilities that are standard in the literature on the secretary problem. With \emph{cardinal utility} a player strives to maximize her expected valuation of the hired candidate. In contrast, with \emph{ordinal utility} she strives to maximize the probability that her best candidate is hired. Our goal in all variants is to design good persuasive mechanisms for the sender. In almost all variants, we show constant approximation ratios, thus, \emph{even with online arrival and/or disclosure the expected utility for the sender deteriorates only by a small constant factor}. Also, we show that many ratios are optimal or asymptotically optimal with respect to any persuasive mechanism.

    %Overall, we examine a total of sixteen scenarios.

    %	Our main results are persuasive mechanisms that recover a constant fraction of the optimal utility that can be achieved by the sender in a corresponding benchmark. For most scenarios, we show that our bounds are asymptotically optimal with respect to any persuasive mechanism. Our results quantify the loss in performance for the sender due to online arrival. Moreover, we prove that the separation of evaluation and decision making can have a substantial impact. We identify scenarios that yield optimal bounds of $1/e$ as in the standard secretary problem, but also show scenarios that can lead to (provably) different optimal bounds. In some cases, this difference can be severe (constant vs.\ $1/n$). For example, in the secretary scenario with disclosure the approximation ratios of 1/4 and $1/n$ (for ordinal and cardinal receiver utility, resp.) are asymptotically optimal and achieved by our mechanisms.

	\subsection{Related Literature}
	Our work can be seen as an extension of the celebrated \emph{secretary problem} which first appeared in print in Martin Gardner's 1960 \emph{Scientific American} column \cite{Gardner60} (but apparently originated much earlier, see \cite{Ferguson89}). The problem gained considerable popularity, and subsequently various extensions have been studied (c.f.\ \cite{Ferguson89} for an early survey and \cite{Dinitz13} for recent extensions). In all the literature we are familiar with, a separation between candidate evaluation and hiring decision is non-existent.
	
	In our model the sender has commitment power and hence our work contributes to the burgeoning literature on Bayesian persuasion, which originated in Aumann and Maschler \cite{AumannM66} and more recently enjoyed a renaissance through the work of Gentzkow and Kamenica \cite{KamenicaG11}. In particular, one can think of our model as an online Bayesian persuasion model, where the state of nature is revealed in a round-wise fashion to the sender who in turn sends a signal to the receiver in each round. Whereas dynamic Bayesian persuasion problems have been studied by various authors (e.g., \cite{Ely17, ElyFK15, Au15}), we are not familiar with an analysis of an online variant similar to what we study. Ely~\cite{Ely17} studies the case that the sender observes the current state and sends a message to the receiver. Using this information, the receiver's beliefs are updated, and she takes an action according to her current belief. The state of nature evolves according to a stochastic process, namely Poisson transitions over states. In principle, our approach might also be cast within a framework of a single, evolving state of nature, but the specific details of this approach would turn out to be very different from~\cite{Ely17}.
    Au~\cite{Au15} studies an approach, where in each round the receiver either takes an action or goes on to the next round. Unlike our model, the action is the same throughout. Once the action is taken, the process ends and both parties get their respective utilities. The sender extracts positive utility from the fact that the receiver has taken the action, regardless of the state of nature. Hence, the sender's objective is persuading the receiver to take the action. The receiver's utility depends on the state.
	A somewhat different approach is taken by Ely et al.\ \cite{ElyFK15} where the authors study the notions of suspense and surprise, i.e. the variance in the next update of belief and the distance of two consecutive beliefs, resp. The sender uses these objectives to design the disclosure policies.
%	\Niklas{The difference between dynamic and online Bayesian persuasion should be clarified. (cf EC reviews)}
	
%	\Niklas{Dughmi paper(s), e.g. \cite{DughmiX15}}
	In the algorithmic literature, Dughmi and Xu \cite{DughmiX15} study the offline variant of Bayesian persuasion where candidates' valuations are drawn independently from known distributions. In~\cite{DughmiX17, BabichenkoB17}, the authors study algorithms and approximation for Bayesian persuasion with multiple receivers, each with a binary choice of actions. More recent additions to this literature include, e.g., duality considerations~\cite{DughmiNPW19} or the study of risk-conscious agents~\cite{AnunrojwongIL19}. In contrast, our approach extends the Bayesian persuasion model with a single sender and a single receiver in a different direction, by studying an online approach with adversarial valuations and random-order arrival.

%	\Martin{Discuss Kleinberg/Kleinberg delegation paper and some prominent econ citations from that paper.}
	%Somewhat less related is the body of literature on the multi-armed bandit problem. Originally, this is a single agent problem where the agent collects information about various armed bandits and must choose which arm to pull at every stage while balancing optimally between exploration and exploitation. Recently, \cite{KremerMP14} study such explore and exploit mechanisms when there is separation between the information collecting agent, which they refer to as a `recommendation engine' and the myopic agents that need to decide on the arm. In their work, as well as subsequent extensions (e.g., \cite{MansourSS15, BaharST16}) the separation introduces incentive constraints that require new algorithmic approaches in order to achieve optimal outcomes.
	
	Somewhat related is previous work in the context of delegation~\cite{Holmstrom84}. Here, a principal delegates a search problem to an agent. Both have individual interests which might be misaligned. The principal can either accept or reject the agent's proposed solution and will do so according to a specified mechanism. A key difference between the two models lies in the commitment power. In our model, the party that observes the information designs the mechanism while in delegation, the mechanism is designed by the principal who only has a-priori information. Kleinberg and Kleinberg~\cite{KleinbergK18} recently studied how techniques and results from stochastic online optimization can be used to approximately solve delegation problems. In contrast to our work, they study application of these techniques to approximately solve an \emph{offline} delegation problem.
	
	\subsection{Model}
	We consider a Bayesian persuasion problem with online arrival. There is a \emph{sender} $\sender$ and a \emph{receiver} $\receiver$. A set of candidates arrives sequentially in $n$ rounds over time in \emph{uniform random order}. Each candidate comes with a pair of valuations, one for $\sender$ and one for $\receiver$. In the beginning, $\receiver$ knows the set of all these pairs and that their arrival order is uniform at random. $\sender$ always knows that the arrival order is uniform at random. In the \emph{basic} scenario, she also knows the set of valuation pairs, in the \emph{secretary} scenario she does not.

    In round $t$, a candidate arrives and a state of nature $\stateON_t \in \Theta_t$ is revealed to $\sender$. We use $\vecstate = (\stateON_1,\ldots,\stateON_n)$ to denote the vector of the states of nature revealed in all rounds $t \in \{1,\ldots,n\} = [n]$. Associated with $\stateON_t$ is the pair of non-negative values $\sendervalue(\stateON_t), \receivervalue(\stateON_t) \ge 0$, where $\rv(\stateON_t)$  is the utility of $\receiver$ and $\sv(\stateON_t)$ is the utility of $\sender$ when the candidate gets hired. We use $\vecrv$ and $\vecsv$ to denote the vectors of all possible utility values of all candidates in all states of nature, and $\rv_{\max} = \max_{\stateON_t} \rv(\stateON_t)$ and $\sv_{\max} = \max_{\stateON_t} \sv(\stateON_t)$ as well as $c_\sender = \arg \max_{\stateON_t} \sv(\stateON_t)$ and $c_\receiver = \arg \max_{\stateON_t} \rv(\stateON_t)$.\footnote{For simplicity, we assume w.l.o.g.\ that for all candidates all utility values $\rv_i$ and $\sv_i$ are mutually disjoint.} Having observed $\stateON_t$ and the corresponding values, $\sender$ transmits a signal $\signal_t$ to $\receiver$. Based on the signal, $\receiver$ then decides whether to hire the current candidate in round $t$ or not. Every decision is final and cannot be revoked later on. If $\receiver$ decides to hire, the process ends. Otherwise, round $t+1$ starts and the next candidate arrives. The process ends by the end of round $n$ at the latest. 

	Following~\cite{DughmiX15, KamenicaG11} we assume that there is commitment power, i.e., $\sender$ shall commit a-priori on a signaling strategy $\phi$, mapping each partially revealed vector of states of nature $(\stateON_1,\ldots,\stateON_t)$ to a signal $\signal_t$, for all $t \in [n]$. The order of events is (1) $\sender$ commits to a signaling strategy $\phi$, (2) first candidate arrives, (3) $\sender$ learns $\stateON_1$ and sends signal $\phi(\stateON_1)$ to $\receiver$, (4) $\receiver$ decides to hire or not, (5) repeat from step (2) with $\stateON_2$ and $\phi(\stateON_1,\stateON_2)$ etc., until a candidate gets hired or all $n$ candidates arrived. In a scenario \emph{without disclosure}, the sender's signals $(\phi(\stateON_1),\ldots,\phi(\stateON_1,\ldots,\stateON_t))$ are the only information for $\receiver$ in the beginning of round $t$ (in addition to the a-priori information). In a scenario \emph{with disclosure}, $\receiver$ also gets informed about the rejected candidates $(\stateON_1,\ldots,\stateON_{t-1})$ in the beginning of round $t$. Hence, with disclosure $\receiver$ knows the set of remaining candidates that will arrive in rounds $t,\ldots,n$.

    The goal of both $\sender$ and $\receiver$ is to maximize their individual objective. We consider two variants, for each $\sender$ and $\receiver$: The cardinal objective, where a player strives to maximize the \emph{expected utility}; and the ordinal objective, where the sender (receiver) wants to maximize the \emph{success probability}, i.e., the probability that the hired candidate is the optimal $c_\sender$ $(c_\receiver)$, resp. All probabilities/expectations are with respect to input randomization and internal randomization of the mechanism. Our mechanisms for $\sender$ are truthful-in-expectation mechanisms, where we strive to maximize the sender's objective.

    Applying a revelation-principle style argument~\cite{KamenicaG11,ArieliB16}, $\sender$ can restrict herself to a signaling strategy $\phi$ that is \emph{direct} and \emph{persuasive}. For direct signals, $\sender$ directly recommends one candidate for hire, i.e., $\signal_t \in \{\HIRE, \NOHIRE\}$. A direct mechanism $\phi$ is persuasive if $\receiver$ in every round maximizes her expected utility (from the candidate she hires eventually) by being obedient, i.e., always hiring upon receiving the signal $\HIRE$ and never hiring upon receiving the signal $\NOHIRE$. We assume that ties are broken in favor of $\sender$, i.e., only when $\receiver$ can strictly increase her success probability/expected utility, there is an incentive to deviate from the signals of $\phi$.

	\subsection{Results and Techniques}

    Our paper provides a comprehensive analysis of sixteen variants of the recommendation problem. For exposure, we here focus on two canonical cases in which $\sender$ and $\receiver$ either both have cardinal utilities (cardinal case) or both have ordinal utilities (ordinal case). Results for the remaining scenarios (e.g., cardinal sender and ordinal receiver) follow along the same lines. Throughout the paper, all asymptotics are in $n$, the number of candidates. 

    \paragraph{Cardinal Case} Our baseline is the \emph{basic scenario}, where we propose the Pareto mechanism (based on the Pareto procedure in Algorithm~\ref{algo:Pareto}), an optimal persuasive mechanism (Proposition \ref{prop:Pareto}) that determines an optimal signal based on the Pareto curve of the candidate set. We use the optimal sender utility (denoted by $\opt$) in this scenario as a benchmark for the other variants. $\opt$ can differ substantially from $\sv_{\max}$, can depend on the complete candidate set, and has no immediate closed-form expression. Perhaps interestingly, in the Pareto mechanism, if the sender-optimal candidate $c_\sender$ offers at least $\mu^{\receiver} = \frac{1}{n} \sum_i \rv_i$ to $\receiver$, then it gets recommended deterministically. Otherwise, the mechanism will compose a sender-optimal signal that has expected value $\mu^{\receiver}$ for $\receiver$, i.e., does not offer useful information.

    In the \emph{secretary scenario}, $\sender$ faces a problem with the main characteristics of the secretary problem, i.e., unknown candidate values and random-order arrival. The values are only revealed to $\sender$ once the candidate arrives. Observe that this scenario strictly generalizes the standard secretary problem. In particular, if $\sv_i = \rv_i$ for every candidate $i \in [n]$, the incentives of both players are perfectly aligned. In the basic scenario, when $\sender$ knows all candidates upfront, she would simply choose $c_\sender = c_\receiver$ and recommend it for hire. Thus, the benchmark in this case is expected utility $\opt = \sv_{\max}$. It is easy to see that in such instances no persuasive mechanism in the secretary scenario can beat the $1/e-o(1)$ guarantee from the classic optimal algorithm~\cite{Dynkin63}.

    When the incentives are not perfectly aligned, our mechanisms depart substantially from this classic template. Here, every persuasive mechanism must guarantee an expected utility for the receiver of $\mu^\receiver $. The Pareto mechanism optimizes the sender utility under this constraint. In the secretary scenario, we apply this technique adaptively to the set of arrived candidates. Perhaps surprisingly, this suffices to generate a distribution of $\HIRE$ signals, which leaves $\receiver$ with no additional information beyond the guarantee of expected utility $\mu^\receiver$. We obtain a persuasive mechanism and a $(1/(3\sqrt{3})-o(1))$-approximation of $\opt$ (Theorem \ref{theo:senderCardinalReceiverCardinalSecretaryNoDisclosure}).

    In the basic scenario with disclosure, we give a characterization of the optimal mechanism using an exponential family of nested linear programs. In terms of polynomial-time algorithms, we design another adaptive version of the Pareto mechanism w.r.t.\ the shrinking set of non-arrived candidates. It obtains an expected utility for $\sender$ of at least $1/3-o(1)$ times $\opt$ (Theorem \ref{theo:senderCardinalReceiverCardinalKnownUtilDisclosure}). 

    For the most challenging scenario with disclosure, we show that there are cases in which no persuasive mechanism can obtain an expected utility of more than $1/n$ -- while in the benchmark scenario the expected utility is $1/2$. Thus, in terms of approximation of $\opt$, no persuasive mechanism can have a better approximation ratio than $2/n$ (Theorem \ref{theo:senderOrdinalReceiverCardinalSecretaryDisclosure} and Corollary \ref{cor:senderCardinalReceiverCardinalSecretaryDisclosure}) -- while $1/n$ is always achieved by the trivial strategy of recommending a candidate uniformly at random. 
    
    \paragraph{Ordinal Case}
    Our results for the ordinal case are closer to existing results and techniques for the secretary problem. Our baseline is again the basic scenario. For ordinal utility, the receiver's only interest is to find her single best candidate. Without the sender signal this is quite hopeless, especially when $n$ grows large. Hence, it is not surprising that we can obtain a success probability of $1-o(1)$ for the sender. The benchmark is simply the standard benchmark of the secretary problem -- the best candidate $c_\sender$ for $\sender$. The intuition applies even in the basic scenario with disclosure -- we again obtain success probability of $1-o(1)$ for the sender (Proposition \ref{prop:receiverOrdinalKnownUtilDisclosure}).

    In the secretary scenario, we apply a variation of the classic template -- we mix the optimal algorithm for $\sender$ and for $\receiver$ and decide randomly upfront which version is applied. Although intuitive, it requires some effort to show that the mix retains persuasiveness for $\receiver$ even when it strongly favors $\sender$. The mechanism yields a success probability of $1/e-o(1)$ (Theorem~\ref{theo:receiverOrdinalSecretaryNoDisclosure}).

    In contrast, in the secretary scenario with disclosure, such an approach is not sufficient to incentivize $\receiver$ to follow the mechanism. The first round, in which a best-so-far candidate for $\receiver$ is rejected and revealed, $\receiver$ learns that the sender-optimal version is used. This can give $\receiver$ an incentive to ignore any upcoming $\HIRE$ signal. Instead, we run the two optimal algorithms in parallel and hire the first candidate that either algorithm would hire (first-come first-serve). We term this the First-Opt algorithm, and the success probability for $\sender$ becomes at least $1/4-o(1)$. Our main insight is that for negatively correlated utilities, this algorithm and the guarantee are indeed optimal, which we prove via a generalized dynamic programming technique (Theorems \ref{theo:receiverOrdinalSecretaryDisclosure} and \ref{thm:BeckmannUB}).

\begin{table}
	\centering
	\renewcommand{\arraystretch}{1.66}
	\begin{tabular}{c||c|c||c|c}
		& \multicolumn{2}{c||}{without Disclosure} & \multicolumn{2}{c}{with Disclosure} \\
		& Ordinal $\sender$ & Cardinal $\sender$ & Ordinal $\sender$ & Cardinal $\sender$\\  \hline \hline 
		Basic & \multicolumn{2}{c||}{Optimal mechanism {\footnotesize (Prop.\ \ref{prop:Pareto})}} & \textbf{1/2} {\footnotesize (Thm.\ \ref{theo:senderOrdinalReceiverCardinalKnownUtilDisclosureUB},  \ref{theo:senderOrdinalReceiverCardinalKnownUtilDisclosure})} & 1/3 {\footnotesize (Thm.\ \ref{theo:senderCardinalReceiverCardinalKnownUtilDisclosure})}  \\ \hline
		Secretary & 1/4 {\footnotesize (Thm.\ \ref{theo:senderOrdinalReceiverCardinalSecretaryNoDisclosure})} & $1/(3\sqrt{3})$ {\footnotesize (Thm.\ \ref{theo:senderCardinalReceiverCardinalSecretaryNoDisclosure})} & $\mathbf{\Theta(1/n)}$ {\footnotesize (Thm.\ \ref{theo:senderOrdinalReceiverCardinalSecretaryDisclosure})} & $\mathbf{\Theta(1/n)}$ {\footnotesize (Cor.\ \ref{cor:senderCardinalReceiverCardinalSecretaryDisclosure})}
	\end{tabular}
	\caption{Approximation guarantees of persuasive mechanisms for cardinal receiver utility discussed in this paper. All bounds stated without lower-order terms. Results indicated in bold have asymptotically matching upper bounds.}
	\label{table:lowerBoundsReceiverCard}
\end{table}	

	\begin{table}
	\centering
	\renewcommand{\arraystretch}{1.66}
	\begin{tabular}{c||c|c||c|c}
		& \multicolumn{2}{c||}{without Disclosure} & \multicolumn{2}{c}{with Disclosure} \\ 	
		& Ordinal $\sender$ & Cardinal $\sender$ & Ordinal $\sender$ & Cardinal $\sender$ \\  \hline \hline 
		Basic & \textbf{1} {\footnotesize (Prop.\ \ref{prop:receiverOrdinalKnownUtilNoDisclosure})} & \textbf{1} {\footnotesize (Prop.\ \ref{prop:receiverOrdinalKnownUtilNoDisclosure})} & \textbf{1} {\footnotesize (Prop.\ \ref{prop:receiverOrdinalKnownUtilDisclosure})} & \textbf{1} {\footnotesize (Prop.\ \ref{prop:receiverOrdinalKnownUtilDisclosure})}
			 \\ \hline 
		Secretary & \textbf{1/e} {\footnotesize (Thm.\ \ref{theo:receiverOrdinalSecretaryNoDisclosure})} & \textbf{1/e} {\footnotesize (Thm.\ \ref{theo:receiverOrdinalSecretaryNoDisclosure})} & \textbf{1/4} {\footnotesize (Thm.\ \ref{theo:receiverOrdinalSecretaryDisclosure}, \ref{thm:BeckmannUB})} & \textbf{1/4} {\footnotesize (Thm.\ \ref{theo:receiverOrdinalSecretaryDisclosure}, \ref{thm:BeckmannUB})}  \\
	\end{tabular}
	\caption{Approximation guarantees of persuasive mechanisms for ordinal receiver utility discussed in this paper. All bounds stated without lower-order terms. Results indicated in bold have asymptotically matching upper bounds.}
	\label{table:lowerBoundsReceiverOrd}
\end{table} 

    \subsection{Overview}
    In Table~\ref{table:lowerBoundsReceiverCard}, we summarize the approximation guarantees when $\receiver$ has cardinal utility, in Table~\ref{table:lowerBoundsReceiverOrd} the ones when $\receiver$ has ordinal utility. We view the basic scenario without disclosure as our benchmark scenario. Hence, all other entries in both tables represent approximation ratios with respect to the optimal value for $\sender$ that can be obtained in this case. In particular, for ordinal sender utility $\opt = 1-o(1)$, the optimal success probability in the benchmark scenario. Horizontal comparisons imply whether or not disclosure of rejected candidates to $\receiver$ comes at a cost for $\sender$. Vertical comparisons yield the loss in performance due to $\sender$ not knowing valuation pairs upfront. Observe that except for two variants (secretary with disclosure and cardinal receiver), all bounds are reasonable constants. Hence, when $\sender$ does not know the future or rejected candidates are revealed, it usually implies just a limited loss in her expected utility. 

    We complement the majority of our ratios with asymptotically tight upper bounds. Some of these upper bounds are new, such as $1/2$ in the basic scenario with disclosure, or $1/4$ and $O(1/n)$ in the secretary scenario with disclosure. These bounds imply structural differences between our signaling scenarios and the standard variant of the secretary problem. 

	The subsequent technical parts are organized as follows. Section~\ref{sec:cardinal} focuses on the case where the receiver maximizes her expected utility. The techniques in this section depart from the classical secretary problem. The section discusses the four scenarios of the problem: basic vs.\ secretary and disclosure vs.\ non-disclosure. Each scenario is analyzed for ordinal and cardinal sender utility. We open the section with the benchmark case (basic, non-disclosure) and conclude with the most complex one (secretary with disclosure). Each scenario is analyzed for cardinal and ordinal sender utility, respectively. For ordinal receiver utility, we proceed similarly in Appendix~\ref{sec:ordinal}.

	\section{Cardinal Utility for $\receiver$}
    \label{sec:cardinal}
	We consider cardinal receiver utility, i.e., $\receiver$ wants to maximize the expected utility of the hired candidate. To capture the sender's value (i.e., her success probability/expected utility), we use the Pareto frontier, a geometric interpretation of the candidate set. It consists of the upper boundary of the convex hull of the candidates' valuations. Hence, for increasing receiver-values, the Pareto frontier is non-increasing in the sender-values. This means that the value for $\sender$ of the Pareto curve at $\rv = \mu^{\receiver}$ is an obvious upper bound on the sender's expected utility.
	
	\begin{algorithm}[t]
		\caption{\label{algo:Pareto}Pareto Procedure}
		\DontPrintSemicolon
		\KwIn{A set of valuation pairs $(\rv_i, \sv_i)_{i \in [n]}$}
        Scale and normalize the candidate values, set $\mu^{\receiver} \leftarrow \sum_{i=1}^n\rv_i/n$ \;
		%Auxiliary candidates $0$ and $n+1$ with $(\rv_0,\sv_0) = (0,\sv_{\max})$ and $(\rv_{n+1},\sv_{n+1}) = (\rv_{\max}, 0)$\;
		Let $\mathcal{P} = \{ (\rv_i,\sv_i) \mid  i \in \{1,\ldots,n\}\}$ and $conv(\mathcal{P})$ be the convex hull of $\mathcal{P}$\;
		Let $PC(\mathcal{P})$ be the Pareto frontier of $conv(\mathcal{P})$\;
        If $\rv_{c_\sender} \ge \mu^{\receiver}$, then set $a = b = c_\sender$; otherwise, find $a, b \in \{1,\ldots,n\}$ s.t.\ $(a,b)$ is the segment of $PC(\mathcal{P})$ that intersects with line $\rho = \mu^{\receiver}$ (see Fig.~\ref{fig:mechanism} for an illustration)  \hfil \texttt{// $a=b$ possible} \;
		%\nonl \hspace{0.1cm} (1) $\rv_a \le \mu^{\receiver} \le \rv_b$, and\\ \nonl \hspace{0.1cm} (2) $\sv_a$ maximal such that entire line $\{ \lambda \cdot (\rv_a, \sv_a) + (1-\lambda)\cdot(\rv_b,\sv_b) \mid \lambda \in [0,1]\} \subseteq PC(\mathcal{P})$ \;
		Determine probability for candidate $a$:\\
		\nonl \hspace{0.1cm}If $\rv_a = \rv_b$ then $\alpha \leftarrow 1$; if $\rv_a \ne \rv_b$ and $\sv_a = \sv_b$, then $\alpha \leftarrow 0$; else  $\alpha \leftarrow \frac{\mu^{\receiver} - \rv_b}{\rv_a - \rv_b}$.\;
		Draw $x$ uniform from $[0,1]$. If $x \le \alpha$ then $c \leftarrow a$, else $c \leftarrow b$.\;
		\Return candidate $c$\;
	\end{algorithm}

	\subsection{Benchmark: Basic Scenario without Disclosure}

	In the benchmark scenario we assume that the point set $\mathcal{P} = \{ (\rv_i, \sv_i) \mid i \in [n] \}$ is known a-priori to both $\sender$ and $\receiver$. Only upon hiring, the identity of the hired candidate becomes known to $\receiver$. The identity of rejected candidates is not revealed. Consider the \emph{Pareto mechanism}. It is based on the Pareto procedure in Algorithm~\ref{algo:Pareto} that computes the best linear combination of points (denoted by \opt) within the convex hull of $\mathcal{P}$ that guarantees at least the average value $\mu_{\receiver}$ for $\receiver$. W.l.o.g.\ this is a linear combination of at most two points on the Pareto frontier. We interpret the linear combination as the probability distribution to signal \HIRE\ for the corresponding candidates (see Fig.~\ref{fig:mechanism} for an example). The procedure returns the chosen candidate $c$. The Pareto mechanism then signals $\sigma_t = \HIRE$ exactly in the round $t$ in which candidate $\theta_t = c$ arrives.

    We assume that internally the Pareto procedure works with a scaled and normalized version of its input. First, it scales all values in $\vecsv$ such that $\sv_{\max} = 1$, and all values in $\vecrv$ such that $\rv_{\max} = 1$. Second, in case of ordinal sender utility, it sets $\sv_{i'} = 0$ for all other candidates $i' \neq c_\sender$. In this way, the procedure can be applied to both instances with ordinal or cardinal sender utility, and its output is based on a lottery over candidates that maximizes success probability or expected utility for $\sender$, resp. Observe that in the scenarios discussed below, the procedure is applied with different subsets of candidate values as input. In this case, the procedure applies the internal adjustment w.r.t.\ the subset of candidates provided in the input.

	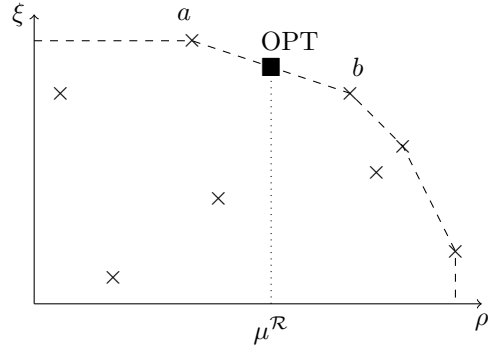
\begin{figure}[t]
		\centering
		\begin{footnotesize}
			\begin{tabular}{c||c|c|c|c|c|c|c|c||c|c}
				& 1 & 2 &  3 & 4 &  5 & 6  &  7 &  8 & $\mu$ & $\opt$\\ \hline \hline
				$\rv$ & 1 & 3 & 6  & 7 & 12 & 13 & 14 & 16 & 9   & 9\\ \hline
				$\sv$ & 8 & 1 & 10 & 4 &  8 &  5 &  6 &  2 & 6.5 & 9
			\end{tabular}
		\end{footnotesize}
		\hspace{0.25cm}
		\begin{tikzpicture}[baseline=2cm, scale = .35]
		\draw[->] (0,0) -- (17,0) node[below] {$\rv$}; 
		\draw[->] (0,0) -- (0,11) node[left] {$\sv$};

		\draw[dashed] (0,10) -- (6,10) -- (12,8) -- (14,6) -- (16,2) -- (16,0);
		
		\node[] at (1,8) (1) {$\times$};
		\node[]	at (3,1) (2) {$\times$};
		\node[]	at (6,10) (3) {$\times$};
		\node[] at (5.7,11) (a) {$a$};
		\node[] at (7,4) (4) {$\times$};
		\node[]	at (12,8) (5) {$\times$};
		\node[] at (12.3,9) (b) {$b$};
		\node[]	at (13,5) (6) {$\times$};
		\node[]	at (14,6) (7) {$\times$};
		\node[]	at (16,2) (8) {$\times$};
		
		\node[] at (9, -1)  (mur) {$\mu^{\receiver}$};
		\node[] at (9,9) (opt) {$\blacksquare$};
		\node[] at (9.7,10) (opt) {$\opt$};
		
		\draw[dotted] (9,0) -- (9,8.5);
		\end{tikzpicture}
		\caption{\label{fig:mechanism} Geometric interpretation of the Pareto mechanism. The instance has 8 candidates with values as shown in the table. Candidates $a=3$ and $b=5$ compose the optimal pair. $\HIRE$ signals are sent with probability $(\alpha, 1-\alpha) = (\nicefrac 12, \nicefrac 12)$. The optimal expected sender utility is $\opt = 9$.}
	\end{figure}

	\begin{proposition}
        \label{prop:Pareto}
		The Pareto mechanism is an optimal persuasive mechanism.
	\end{proposition}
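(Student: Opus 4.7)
The plan splits the proof into persuasiveness and optimality. Denote by $c\in\{a,b\}$ the candidate drawn by Algorithm~\ref{algo:Pareto}, with $\prob[c=a]=\alpha$ and $\prob[c=b]=1-\alpha$ independent of the arrival permutation; the mechanism emits $\HIRE$ exactly in the round in which $c$ arrives. By the construction of $\alpha$, we have $\alpha\rv_a+(1-\alpha)\rv_b=\mu^{\receiver}$ (or $\rv_{c_\sender}\ge\mu^{\receiver}$ in the degenerate case $a=b=c_\sender$).

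For persuasiveness, I would establish the pointwise identity $\expec[\rv(\stateON_t)\mid h]=\mu^{\receiver}$ for every signal history $h$ through round $t$ that is consistent with the mechanism. Bayes' rule together with the fact that $\prob[h\mid c=a]=\prob[h\mid c=b]$ on every consistent $h$ shows that the posterior on $c$ always equals the prior. If $h$ ends with $\HIRE$ at round $t$, then $\stateON_t=c$, yielding $\alpha\rv_a+(1-\alpha)\rv_b=\mu^{\receiver}$. If $h$ ends with $\NOHIRE$ at round $t$, then conditionally on $c$ the candidate at round $t$ is uniform on $[n]\setminus\{c\}$ by exchangeability of the non-$c$ candidates, and averaging over the prior gives $(\sum_i\rv_i-\mu^{\receiver})/(n-1)=\mu^{\receiver}$. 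The tower property then implies that every receiver stopping rule $\tau$ measurable w.r.t.\ the signal filtration yields expected value $\mu^{\receiver}\cdot\prob[\tau<\infty]\le\mu^{\receiver}$, while obedience always hires and achieves exactly $\mu^{\receiver}$. Sender-favorable tie-breaking therefore makes obedience optimal at every information set.

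For optimality, I would apply a convex-hull bound. Any persuasive mechanism induces probabilities $q_i$ of hiring candidate $i$; setting $q_0:=1-\sum_iq_i\ge 0$, the point $\sum_iq_i(\rv_i,\sv_i)+q_0(0,0)$ lies in $\operatorname{conv}(\mathcal{P}\cup\{(0,0)\})$. The receiver's ``hire the first arrival'' deviation guarantees her $\mu^{\receiver}$, so persuasiveness forces $\sum_iq_i\rv_i\ge\mu^{\receiver}$. Maximizing $\sum_iq_i\sv_i$ subject to this constraint then yields the Pareto-frontier value at $\max(\mu^{\receiver},\rv_{c_\sender})$, which is precisely the $\opt$ computed by Algorithm~\ref{algo:Pareto}; the Pareto mechanism attains this value by construction.

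The main obstacle is the persuasiveness step, specifically ruling out arbitrary adaptive multi-round deviations of the receiver rather than only single-round ``hire now versus obey'' comparisons. The crux is the symmetry-and-Bayes identity $\expec[\rv(\stateON_t)\mid h]=\mu^{\receiver}$, uniform in $h$; once established, it collapses the conditional value of every round, and hence every stopping rule, to the same number $\mu^{\receiver}$, at which point the sender-favorable tie-breaking assumption closes the argument.
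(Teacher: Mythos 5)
Your proposal is correct and follows essentially the same two-step structure as the paper's proof: persuasiveness via the observation that the receiver's conditional expected value is (at least) $\mu^{\receiver}$ upon $\HIRE$ and (at most) $\mu^{\receiver}$ upon $\NOHIRE$ at every information set, and optimality via the LP/convex-hull bound with constraints $\sum_i q_i \rv_i \ge \mu^{\receiver}$ and $\sum_i q_i = 1$. The paper compresses your Bayes/exchangeability reasoning into the statement that the conditional distributions $\vecx$ (given $\HIRE$) and $\vecy$ (given $\NOHIRE$) are round-independent and satisfy $(\vecx+(n-1)\vecy)^T\vecrv = n\mu^{\receiver}$, which is the same content in more compact form; your tower-property/optional-stopping phrasing makes explicit why single-round comparisons suffice against adaptive deviations, a point the paper leaves implicit.
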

	\begin{proof}
		Consider the Pareto mechanism and the event that $\receiver$ gets a signal $\signal_t = \HIRE$ in round $t$. We denote by $\vecx$ the distribution over candidates in round $t$ conditioned on this event. Clearly $\vecx$ is the same distribution, no matter in which round $t$ $\receiver$ gets a signal to hire. Now denote by $\vecy$ the distribution over candidates in round $t'$ conditioned on a $\HIRE$ signal in round $t \neq t'$. Clearly, $\vecy$ is the same distribution, for any $t,t'$ with $t \neq t'$.
		
		For $\receiver$, the expected utility of following a \HIRE\ signal is $\vecx^T \vecrv \ge \mu^{\receiver}$ by construction of the mechanism. By simply hiring in round 1 deterministically, $\receiver$ gets a value of $\mu^{\receiver}$ and a \HIRE\ signal with probability $1/n$. Hence,  $(\vecx + (n-1) \vecy)^T \vecrv = n \mu^{\receiver}$. Consequently, the expected utility of $\receiver$ for hiring upon a \NOHIRE\ signal is $\vecy^T  \vecrv \le \mu^{\receiver}$. Thus, the mechanism is persuasive.

    	Now consider any persuasive mechanism $\phi$ used by $\sender$. Let $x_i$ denote the probability that (over random arrival and randomization in $\phi$) candidate $i \in [n]$ is the first one with signal $\HIRE$. We use $\vecx$ to denote the vector. 
		Truthfulness of $\phi$ implies the following constraints: (1) $\vecx^T \vecrv \ge \mu^{\receiver}$, since this expected utility can be achieved by $\receiver$ simply by hiring in any fixed round; (2) $\| \vecx \|_1 \le 1$, since the mechanism can be assumed to recommend at most one candidate; (2) $\| \vecx \|_1 \ge 1$, since all $\rv_i \ge 0$ at least one candidate must be recommended (otherwise, $\receiver$ will deviate from $\phi$ by hiring in the last round). Hence, the distribution $\vecx$ resulting from the optimal persuasive mechanism is a feasible solution to the following maximization problem for the expected sender utility
		\begin{equation}
		\label{eq:basicLP}
		\begin{array}{lrcll}
		\renewcommand{\arraystretch}{0.5}
		\mbox{Max.} & \multicolumn{2}{l}{\bm{x}^T \vecsv} \\
		\mbox{s.t.} & \vecx^T \vecrv & \ge & \mu^{\receiver} \\
		& \|\vecx\|_1      & = & 1\\
		& x_i                          & \ge & 0 &\hspace{0.5cm} \mbox{for all } i=1,\ldots,n
		\end{array}
		\end{equation}
		It is straightforward to see that the Pareto mechanism computes a distribution that represents an optimum solution to the above LP. Hence, it is an optimal persuasive mechanism.
	\end{proof}

	\subsection{Secretary Scenario without Disclosure}
	The secretary scenario differs from the previous one by $\sender$ a-priori lacking knowledge of the valuation pairs. To cope with this challenge, we apply the Pareto procedure adaptively. Our mechanism $\phi(s)$ first samples a number of rounds and signals $\signal_t = \NOHIRE$ for rounds $t = 1,\ldots, s$. Then, in each round $t = s+1,\ldots,n-1$ let $A_t$ be the set of all previously arrived candidates, including the one currently under consideration in round $t$. The mechanism invokes the Pareto procedure on the set $A_t$. It signals $\signal_t = \HIRE$ if and only if the candidate chosen by the Pareto procedure has arrived in the current round $t$.

	\begin{algorithm}[t]
		\caption{\label{algo:GrowingPareto} Growing Pareto Mechanism}
		\DontPrintSemicolon
		\KwIn{Number of candidates $n$, sample size $s$}
		$A_0 \leftarrow \emptyset$ \;
		\For{$t=1$ to $n-1$}{
        $A_t \leftarrow A_{t-1} \cup \{\theta_t\}$\;
        \lIf{$t \le s$}{Signal $\NOHIRE$}
		\Else{$c_t \leftarrow $ candidate chosen by Pareto procedure on the set $A_t$ \\
			\lIf{$c_t = \stateON_t$}{Signal $\HIRE$ and end mechanism; \textbf{else} Signal $\NOHIRE$}}}
		Signal $\HIRE$ on $n$-th candidate.
	\end{algorithm}
	
	When $\signal_t = \HIRE$ and $\receiver$ deviates by refusing to hire, the mechanism signals \NOHIRE\ in every subsequent round. If $t=n$ and the mechanism has not signaled \HIRE\ so far, it sets $\signal_n = \HIRE$ deterministically in the last round. Due to the dependence on a growing candidate set $A_t$, we term this mechanism the \emph{Growing Pareto mechanism} (see Algorithm~\ref{algo:GrowingPareto}).

    The main results about the Growing Pareto mechanism are as follows. We first show that the mechanism is persuasive. We then prove the approximation guarantee of $1/4 - o(1)$ for the ordinal sender utility and $1/(3\sqrt{3}) - o(1)$ for the cardinal sender utility.

    Let us first consider persuasiveness. Consider round $t$. We follow~\cite{KesselheimRTV13} and rephrase the generation of the candidate arriving in round $t$ as follows: First choose the subset $A_t$ of candidates that arrived in rounds $1,\ldots,t$ uniformly at random from $[n]$, then pick the candidate arriving in round $t$ uniformly at random from $A_t$.

	\begin{lemma}
		\label{lem:signalT}
		Consider a given round $t$ and a given subset of candidates $A_t$ that arrived up to round $t$. In the Growing Pareto mechanism 
		$$ \Pr\left[\bigwedge_{i=1}^{t-1} \signal_i = \NOHIRE \growingmid A_{t-1} \right] = \begin{cases} 1 & t = 2,\ldots,s+1\\
		\frac{s}{t-1} & t = s+2 ,\ldots,n\end{cases}$$
		and
		$$\Pr[\signal_t = \HIRE \mid A_t] = \begin{cases} 0 & t = 1,\ldots,s\\
		\frac 1t \cdot \frac{s-1}{t-1} & t = s+1,\ldots,n-1\\
		\frac{s}{n-1} & t = n\end{cases} .$$
	\end{lemma}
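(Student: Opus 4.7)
The plan is to establish both identities by induction on $t$, relying on two structural facts about the mechanism. First, by the rephrasing of uniform random arrival introduced above (following~\cite{KesselheimRTV13}), conditional on the set $A_t$ the identity $\theta_t$ of the round-$t$ arrival is uniform on $A_t$. Second, the candidate $c_t$ returned by the Pareto procedure at round $t$ is a function of the \emph{set} $A_t$ and of fresh internal randomness; hence, given $A_t$, $c_t$ is independent of $\theta_t$ and of all prior randomness (the arrivals $\theta_1,\ldots,\theta_{t-1}$ and the Pareto outputs at rounds $s+1,\ldots,t-1$). The atomic consequence I need is
\[
\Pr[c_t = \theta_t \mid A_t] \;=\; \sum_{j \in A_t} \Pr[c_t = j \mid A_t]\cdot \tfrac{1}{t} \;=\; \tfrac{1}{t},
\]
irrespective of the particular lottery the Pareto procedure puts on $A_t$.

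I then prove the first identity by induction on $t$. The range $t \le s+1$ is immediate, since signals in the first $s$ rounds are $\NOHIRE$ by construction. For $t \ge s+2$, I apply the chain rule to write the probability as $\Pr[\signal_{t-1} = \NOHIRE \mid \mathcal{E}_{t-2}, A_{t-1}] \cdot \Pr[\mathcal{E}_{t-2} \mid A_{t-1}]$, where $\mathcal{E}_{t-2}$ denotes the event $\bigwedge_{i=1}^{t-2} \signal_i = \NOHIRE$. For the second factor, I average over $A_{t-2}$ (a uniformly random $(t-2)$-subset of $A_{t-1}$) and plug in the inductive hypothesis; a key point is that the IH value depends on $A_{t-2}$ only through its cardinality, so the average equals the IH value itself. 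For the first factor, I combine the atomic identity above with the observation that conditioning on $\mathcal{E}_{t-2}$ preserves the uniformity of $\theta_{t-1}$ on $A_{t-1}$ (again because the IH expression is symmetric over candidates). Multiplying the two factors telescopes to the claimed closed form.

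For the second identity, I decompose $\{\signal_t = \HIRE\}$ as $\mathcal{E}_{t-1} \cap \{c_t = \theta_t\}$ and condition on $\theta_t = j$ for $j \in A_t$. This pins down $A_{t-1} = A_t \setminus \{j\}$, so $\Pr[\mathcal{E}_{t-1} \mid A_{t-1}]$ takes the value given by the first identity and is independent of $j$; meanwhile, $\{c_t = j\}$ has probability equal to the Pareto procedure's weight on $j$ in $A_t$ and is independent of $\mathcal{E}_{t-1}$ given $A_t$. Summing over $j$ then multiplies the first-identity value by the atomic factor $1/t$. The boundary case $t = n$ is handled separately: the mechanism deterministically signals $\HIRE$ whenever it is still running, so $\Pr[\signal_n = \HIRE \mid A_n] = \Pr[\mathcal{E}_{n-1} \mid A_n]$, which reduces via averaging over $A_{n-1} \subset A_n$ to the first identity. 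The main obstacle throughout is verifying that conditioning on ``no $\HIRE$ so far'' preserves the uniformity of the current arrival on the observed set; this is rescued precisely by the cardinality-only dependence built into the inductive hypothesis, and the independence of the round-$t$ Pareto randomness from the full prior history completes the factorization.
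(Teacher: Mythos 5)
Your argument is correct and follows essentially the same route as the paper's proof: a reverse peeling of the arrival history via the Kesselheim-style decomposition, exchangeability of the round-$t'$ arrival on $A_{t'}$ conditional on the set, and independence of the Pareto procedure's fresh internal randomness from the permutation, yielding round-wise factors of $\frac{t'-2}{t'-1}$ that telescope. Your framing as an explicit induction, together with the observation that conditioning on the event $\mathcal{E}_{t-2}$ preserves uniformity of $\theta_{t-1}$ on $A_{t-1}$ precisely because the inductive expression depends on $A_{t-2}$ only through its cardinality, is a cleaner presentation of what the paper argues more informally; the underlying ideas coincide.

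One thing you should have flagged explicitly: carrying out your final substitution for the second identity produces $\Pr[\sigma_t = \HIRE \mid A_t] = \frac{1}{t}\cdot\frac{s}{t-1}$ for $t = s+1,\ldots,n-1$ (the atomic factor $\tfrac{1}{t}$ times the first-identity value), whereas the lemma as stated reads $\frac{1}{t}\cdot\frac{s-1}{t-1}$. This mismatch is a typo in the lemma, not a gap in your proof: at $t = s+1$ all prior signals are $\NOHIRE$ by the sampling phase, so the probability is exactly $\frac{1}{s+1}$, which agrees with $\frac{1}{t}\cdot\frac{s}{t-1}$ and not $\frac{1}{t}\cdot\frac{s-1}{t-1}$; moreover the proofs of Theorems~\ref{theo:senderCardinalReceiverCardinalSecretaryNoDisclosure} and~\ref{theo:senderOrdinalReceiverCardinalSecretaryNoDisclosure} both substitute $\frac{1}{t}\cdot\frac{s}{t-1}$ for $\Pr[\sigma_t = \HIRE]$. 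When a derivation yields an expression that disagrees with the claim being proved, that discrepancy must be surfaced rather than left implicit in the phrase ``multiplies the first-identity value by $1/t$.''
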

	\begin{proof}
		Given that the set $A_t$ of candidates arrived up to round $t$, we draw all possible arrival sequences of $A_t$ in the first $t$ rounds in a reverse fashion. For $A_t$, the Pareto procedure singles out the candidates $a$ and $b$. The candidate $i_t \in [n]$ that arrives in round $t$ is chosen uniformly at random from $A_t$. Thus, the probability is $\frac 1t$ that the candidate is in $\{a,b\}$ and that in addition the Pareto procedure applied in round $t$ would return this candidate. For a signal $\signal_t = \HIRE$ of the Growing Pareto mechanism, however, it also needs to hold that $\signal_i = \NOHIRE$ for all $i \le t-1$.
		
		Given candidate set $A_t$ arrives in the first $t$ rounds and candidate $i_t$ in round $t$, consider the signal in round $t-1$. Now given the set $A_{t-1} = A_t \setminus\{i_t\}$, the Pareto procedure singles out some candidates $a$ and $b$. Since the candidate $i_{t-1}$ arriving in round $t-1$ is chosen uniformly at random from $A_{t-1}$, the probability that the Pareto procedure in round $t-1$ results in signal \NOHIRE\ is $\frac{t-2}{t-1}$.
		
		Suppose in round $t-1$ a candidate $i_{t-1}$ arrives and the Pareto procedure results in signal $\signal_{t-1} = \NOHIRE$. Then, we can apply the same argument for round $t-2$ and candidate set $A_{t-2} = A_t \setminus \{i_t, i_{t-1}\}$. Since we need a \NOHIRE\ signal in all previous rounds $i \in [t-1]$, we continue to apply the argument pointwise for all subsets $A_{i}$. Note that for $i \le s$ the probability $\Pr[\signal_i = \NOHIRE] = 1$ always.
		
		Hence, for every set $A_{t-1}$ we obtain 
		$$\Pr\left[\bigwedge_{i=1}^{t-1} \signal_i = \NOHIRE \growingmid A_{t-1} \right]= \frac{t-2}{t-1} \cdot \frac{t-3}{t-2} \cdots \frac{s}{s+1} = \frac{s}{t-1}
		$$
		for rounds $t = s + 2,\ldots,n$. Thus, for rounds $t = s+1,\ldots,n-1$ we see
		$$ \Pr[\signal_t = \HIRE \mid A_t] = \Ex_{i_t} \left[\Pr[\sigma_t = \HIRE \mid A_t, i_t] \cdot \Pr\left[\bigwedge_{i=1}^{t-1} \signal_i = \NOHIRE \growingmid A_t \setminus \{i_t\}\right]\right] = \frac{1}{t} \cdot \frac{s-1}{t-1}
		$$
		and, analogously, $ \Pr[\signal_t = \HIRE \mid A_t] = 1\cdot \frac{s}{n-1}$ for $t = n$.
	\end{proof}
	
	\begin{lemma}
        \label{lem:GrowingParetoIC}
		The Growing Pareto mechanism is persuasive in the secretary scenario without disclosure.
	\end{lemma}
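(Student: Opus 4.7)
The plan is to show that in every information set $\receiver$ faces, following the mechanism weakly dominates every deviation; ties are then broken in favor of $\sender$ by assumption. The crucial technical input is Lemma~\ref{lem:signalT}: because both $\Pr[\bigwedge_{i=1}^{t-1}\signal_i = \NOHIRE \mid A_{t-1}]$ and $\Pr[\signal_t = \HIRE \mid A_t]$ depend on these sets only through their cardinalities, a short Bayes computation shows that, conditional on any signal history consistent with the mechanism's rules, the posterior on $A_t$ remains uniform over all size-$t$ subsets of $[n]$, and conditional on $A_t$ the candidate $\stateON_t$ remains uniform on $A_t$. I would record this symmetry as a small preliminary fact, since the rest of the argument rides on it.

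For the \HIRE\ signal, the value of obeying is $\Ex[\rho_{c_t} \mid \signal_t = \HIRE] \ge \mu^{\receiver}$: by construction of the Pareto procedure, $\Ex[\rho_{c_t} \mid A_t, \signal_t = \HIRE] \ge \mu^{\receiver}(A_t) := \tfrac{1}{t}\sum_{i\in A_t}\rho_i$, and $\Ex_{A_t}[\mu^{\receiver}(A_t)] = \mu^{\receiver}$ under the uniform posterior. The receiver's best deviation is to refuse the \HIRE\ and hire later; since the mechanism then transmits \NOHIRE\ deterministically in every subsequent round, her optimal alternative reduces to committing to some fixed future round $t' > t$. Using that $\stateON_{t'}$ is uniform on $[n] \setminus A_t$ given $(A_t, \stateON_t)$ and integrating out $A_t$, I would obtain $\Ex[\rho_{\stateON_{t'}} \mid \text{history}] = \mu^{\receiver}$, so the deviation is weakly worse and $\receiver$ obeys.

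For the \NOHIRE\ signal I would argue by backward induction on $t$. Hiring the current candidate yields $\Ex[\rho_{\stateON_t} \mid \signal_t = \NOHIRE, \text{history}]$; expressing this via the Pareto two-point decomposition gives the identity $\Ex[\rho_{\stateON_t} \mid A_t, \signal_t = \NOHIRE] = \tfrac{1}{t-1}\bigl(t\,\mu^{\receiver}(A_t) - \Ex[\rho_{c_t}\mid A_t, \signal_t=\HIRE]\bigr)$, which is at most $\mu^{\receiver}(A_t)$ and hence at most $\mu^{\receiver}$ after averaging. Continuing, on the other hand, visits future rounds in which every \HIRE\ expectation is $\ge \mu^{\receiver}$ by the previous paragraph, and the mechanism guarantees that some candidate is hired by round $n$ at the latest, so the continuation value is $\ge \mu^{\receiver}$. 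Thus \NOHIRE\ is weakly preferred and the induction closes. The main obstacle I anticipate is the careful composition of the two Bayes updates -- for the all-\NOHIRE\ prefix and for the round-$t$ signal -- needed to establish posterior uniformity of $A_t$; once that symmetry is in hand, the remainder is a linearity-of-expectation argument mirroring the proof of Proposition~\ref{prop:Pareto}.
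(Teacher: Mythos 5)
Your proof is correct and follows essentially the same route as the paper: both arguments hinge on Lemma~\ref{lem:signalT} to establish that the posterior on $A_t$ is uniform over $t$-subsets given any consistent signal history, and then reduce persuasiveness to linearity-of-expectation comparisons against $\mu^{\receiver}$, exactly as in Proposition~\ref{prop:Pareto}. The only differences are presentational: you make the continuation-value bound at \NOHIRE\ rounds explicit via backward induction (where the paper leaves the assembly implicit), and you derive the $\le \mu^\receiver$ bound for hiring on \NOHIRE\ from the two-point Pareto decomposition conditional on $A_t$ rather than from the unconditional uniformity of $\stateON_t$; one small imprecision is that, after further conditioning on the realized value of $\signal_t$, $\stateON_t$ is no longer uniform on $A_t$ (it becomes the Pareto lottery), though you correctly use the Pareto guarantee rather than uniformity at that point, so the argument is unaffected.
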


	\begin{proof}%[Proof of Lemma~\ref{lem:GrowingParetoIC}]
		Suppose the Growing Pareto mechanism sets $\signal_t = \HIRE$ in some round $t \in \{s+1,\ldots,n\}$ and $\sigma_{t'} = \NOHIRE$ for all $t' \in [t-1]$. Note $A_t$ is a subset chosen uniformly at random. For every $i \in A_t$, the probability to receive only signals \NOHIRE\ for $A_t \setminus \{i\}$ in rounds $1,\ldots,t-1$ is the same. Hence, by following the \HIRE\ signal, $\receiver$ obtains an expected utility of 
		$$ \Ex[\rv_t \mid \signal_t = \HIRE] \quad \ge \quad \Ex_{A_t}\left[\sum_{i \in A_t} \frac{\rv_i}{t}\right] \quad = \quad \mu^{\receiver}.$$
		Consider the event that $\receiver$ sees only $\NOHIRE$ in all rounds up to $t-1$. By Lemma~\ref{lem:signalT}, this event has the same probability for all subsets $A_{t-1}$. Hence, even conditioned on this event, the candidate arriving in round $t$ is uniformly distributed. Thus,
		$ \Ex\left[\rv_t \growingmid \bigwedge_{t'=1}^{t-1} \signal_{t'} = \NOHIRE\right] = \mu^{\receiver}$. Therefore, the expected utility of hiring when \NOHIRE\ is received also in round $t$ is $\Ex\left[\rv_t \growingmid \bigwedge_{t'=1}^{t} \signal_{t'} = \NOHIRE\right] \le \mu^{\receiver}$.
		
		Now consider the event that $\receiver$ observes $\sigma_t = \HIRE$. For all possible subsets $A_t$ this event has the same probability by Lemma~\ref{lem:signalT}. Hence, for any round $r > t$, candidate $\theta_r$ is uniformly distributed and gives expected value of $\mu^{\receiver}$. Thus, no $r > t$ yields a profitable deviation. The lemma follows.
	\end{proof}

	We now prove the approximation result, first for the case when $\sender$ has cardinal utility and aims to maximize her expected utility. The Growing Pareto mechanism provides a constant-factor approximation to the optimum in the corresponding benchmark scenario.

    \begin{theorem}
	   \label{theo:senderCardinalReceiverCardinalSecretaryNoDisclosure}
		In the secretary scenario without disclosure, the Growing Pareto mechanism with $s= \lfloor n/\sqrt{3} \rfloor$ yields a $\left(\frac{1}{3\sqrt{3}} - o(1)\right)$-appro\-xi\-ma\-tion of the optimal expected utility in the corresponding benchmark scenario.
    \end{theorem}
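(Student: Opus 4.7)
The plan is to analyze the Growing Pareto mechanism round by round and reduce the statement to a combinatorial claim about the Pareto LP on a random subsample. For round $t\in\{s+1,\dots,n-1\}$, Lemma~\ref{lem:signalT} gives $\Pr[\sigma_t=\HIRE\mid A_t]$ as a quantity of order $s/(t(t-1))$ that is independent of the realization of $A_t$. Moreover, conditioned on $A_t$ and on $\sigma_t=\HIRE$, the hired candidate is distributed exactly as the output of the Pareto procedure on $A_t$, because the procedure's internal randomization is independent of both the arrival order and the history of past signals. Combining these facts, the expected sender value contributed by round $t$ is
\[
\Ex\big[\sv(\stateON_t)\,\mathbf{1}[\sigma_t=\HIRE]\big] \;=\; \frac{s}{t(t-1)}\,\Ex_{A_t}\!\left[\opt(A_t)\right],
\]
where $\opt(A_t)$ denotes the value of the Pareto LP~\eqref{eq:basicLP} instantiated on candidate set $A_t$ with receiver constraint $\mu^{\receiver}(A_t)$; the round $t=n$ adds a nonnegative term that I safely drop for the lower bound.

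The technical heart of the argument is the inequality
\[
\Ex_{A_t}\!\left[\opt(A_t)\right] \;\ge\; (1-o(1))\left(\tfrac{t}{n}\right)^{3}\opt,
\]
whose cubic scaling in $t/n$ is precisely what produces the ratio $1/(3\sqrt{3})$. To establish it, let $\{a^{*},b^{*}\}$ be the (at most) two candidates on whose convex combination the benchmark $\opt$ is attained. A direct attempt to reuse the benchmark lottery $\vecx^{*}$ on every subset containing $\{a^{*},b^{*}\}$ succeeds with probability $\Theta((t/n)^{2})$, but can fail when the realized $\mu^{\receiver}(A_t)$ exceeds the global $\mu^{\receiver}$, making $\vecx^{*}$ infeasible on $A_t$. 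The remedy is to augment the witness with a third ``calibration'' candidate from $A_t$ whose receiver value compensates the excess; a combinatorial and convexity argument then shows that such a three-point witness can be constructed with probability $\Theta((t/n)^{3})$ while losing only an $o(1)$ fraction of sender value.

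Plugging the bound back in, the expected sender utility of the mechanism is at least
\[
\opt\cdot\sum_{t=s+1}^{n-1}\frac{s}{t(t-1)}\left(\tfrac{t}{n}\right)^{3}
\;\ge\; \left(\frac{s(n^{2}-s^{2})}{2\,n^{3}}-o(1)\right)\opt.
\]
The continuous prefactor $\beta(1-\beta^{2})/2$ with $\beta=s/n$ is maximized at $\beta=1/\sqrt{3}$ with value $1/(3\sqrt{3})$, so the choice $s=\lfloor n/\sqrt{3}\rfloor$ delivers the claimed $(1/(3\sqrt{3})-o(1))$-approximation.

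The main obstacle is the lower bound on $\Ex_{A_t}[\opt(A_t)]$. Because $\mu^{\receiver}(A_t)$ is a random average that can exceed $\mu^{\receiver}$, the LP on $A_t$ is not monotonically related to the LP on $[n]$, and the naive two-candidate witness using $\{a^{*},b^{*}\}$ alone need not be feasible. Constructing a robust three-candidate witness whose presence in $A_t$ has the right cubic probability, and controlling the $o(1)$ loss in sender value incurred by the calibration step, is the delicate combinatorial core of the proof; it is exactly this cubic scaling that, when combined with the signal probability $s/(t(t-1))$, makes $s=\lfloor n/\sqrt{3}\rfloor$ the optimal threshold.
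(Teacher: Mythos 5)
Your high-level decomposition (signal probability from Lemma~\ref{lem:signalT}, a lower bound on $\Ex_{A_t}[\opt(A_t)]$ of order $(t/n)^3\opt$, and a sum that maximizes at $\beta=1/\sqrt{3}$) matches the paper's structure, and the final integral calculation is correct. However, the ``technical heart'' of your argument --- the lower bound $\Ex_{A_t}[\opt(A_t)]\ge(1-o(1))(t/n)^3\opt$ via a three-candidate calibration witness --- is neither proved nor a correct plan for a proof. You implicitly assert that conditional on three specific candidates $\{a^*,b^*,c\}\subseteq A_t$ (which occurs with probability $\frac{t(t-1)(t-2)}{n(n-1)(n-2)}\approx(t/n)^3$), the LP value on $A_t$ is $(1-o(1))\opt$. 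This is false in general: $\mu^\receiver(A_t)$ is a random average that can exceed $\rv_{b^*}$ with constant probability whenever $\rv_{b^*}$ is not much above $\mu^\receiver$; in that regime the Pareto frontier of $A_t$ at $\rho=\mu^\receiver(A_t)$ lies strictly below the original segment between $a^*$ and $b^*$, and the sender value drops by an amount controlled by the slope of the frontier, which can be a \emph{constant fraction} of $\opt$ rather than $o(1)$. Moreover, a ``calibration'' candidate $c$ that works cannot be fixed independently of $A_t$, which breaks the clean $(t/n)^3$ inclusion probability; if $c$ depends on the realized $A_t$, the combinatorics no longer factor into a probability times a conditional value.

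The paper instead proves the cubic bound via a one-step-at-a-time removal argument that has no analogue in your sketch. Lemma~\ref{lem:lossPerRoundCard} shows $\sum_{i\neq a,b}\opt_{-\{i\}}\ge (n-3)\opt$ by a geometric argument: after zeroing all sender values outside $\{a,b\}$, one underestimates the Pareto frontier by the chord from $(\mu^\receiver,\opt)$ to $(\rv_{\max},0)$ and tracks how $\mu^\receiver$ shifts when a single random candidate is removed, giving total expected loss at most $\opt$ from low-$\rho$ removals and a nonnegative gain from high-$\rho$ removals. Averaging over the removed candidate (Corollary~\ref{cor:valueRoundTCard}) gives $\opt_{n-1}\ge\frac{n-3}{n}\opt$, and iterating yields $\opt_t\ge\frac{t(t-1)(t-2)}{n(n-1)(n-2)}\opt$ for free. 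The ``$-3$'' comes from the fact that removing one of the two witness candidates $\{a,b\}$ may drop the value to zero, contributing two ``lost'' slots; the Pareto-frontier chord underestimation controls the third slot. Your three-candidate-witness route does not reproduce this cancellation and you would still need to control the variance of $\mu^\receiver(A_t)$ relative to the local slope of the Pareto frontier, which can be arbitrarily steep. In short, the outer sum is right but the key per-round bound is a genuine gap, and the plan you describe for closing it does not work without a substantially different argument (for which the paper's one-step removal lemma is essentially the right tool).
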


	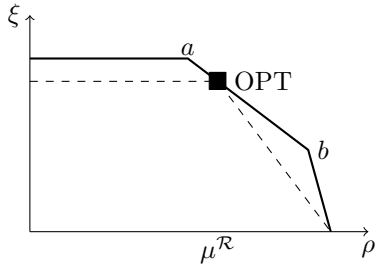
\begin{wrapfigure}{l}{0.45\textwidth}
		\centering
		\begin{tikzpicture}[xscale = .5, yscale=0.32]
		% Gitternetzlinien
		%\draw[help lines] (0,0) grid (5,4);
		% x-Achse
		\draw[->] (0,0) -- (9,0) node[below] {$\rv$}; 
		% y-Achse 
		\draw[->] (0,0) -- (0,9) node[left] {$\sv$};
		%
		%% FUNKTION   %%
		\draw[thick] (0,7.2) -- (4.2,7.2) -- (7.4,3.4) -- (8,0) ;
		\draw[dashed]  (0,6.25) -- (5,6.25) -- (8,0);
		\node[] at (5, -0.75)   (mur) {$\mu^{\receiver}$};
%		\node[] at (-0.5,5)   (mus) {$\mu^{\sender}$};
		\node[] at (4.2,7.6) (a) {$a$};
		\node[] at (7.8,3.4) (b) {$b$};
%		\node[] at (5,5) (avg) {$\mu$};
		\node[] at (6.2,6.25) (opt) {$\opt$};
		\node[] at (5,6.25) (markeropt) {$\blacksquare$};
		%\node[] at (-0.2,8) (y) {1};
		%\node[] at (8,-0.25) (x) {1};
		\end{tikzpicture}
		\caption{\label{fig:reducedFrontier} Solid: Pareto frontier in the adapted instance in the proof of Lemma~\ref{lem:lossPerRoundCard}. Dashed: Lower bound on the Pareto frontier when $a$ and $b$ remain in the candidate set}
	\end{wrapfigure}
    The proof of this theorem relies on some auxiliary observations. Let $L = \{i \in [n] \mid \rv_i \le \mu^{\receiver}\}$ and $H = \{i \in [n] \mid \rv_i > \mu^{\receiver} \} = [n] \setminus L$ be a partition of the candidates into the ones with low receiver utility and high receiver utility, respectively. Let $d=|L|$ denote the cardinality of $L$. For the sake of analysis, we assume that $\rv_{\max} = 1$ and $\sv_{\max} = 1$. 

    We first concentrate on the case with cardinal sender utility. Consider the Pareto mechanism in the benchmark scenario. If the sender-optimal candidate has receiver utility $\rv_{c_\sender} \ge \mu^{\receiver}$, then $a=b$. The mechanism will wait for this particular candidate and recommend $\HIRE$ deterministically. Otherwise, $a \neq b$, and $\opt$ is both expected utility for $\sender$ and success probability for hiring $c_\sender$.

	For any subset of candidates $M \subseteq [n]$, we use $\opt_{-M}$ to denote the expected utility for $\sender$ in the benchmark scenario with candidate set $[n]\setminus M$. 
	Similar to $\opt_{-\{i\}}$, we define $\mu_{-i}^{\receiver} = \frac{1}{n-1}\left(\sum_{j=1}^n \rv_j - \rv_i\right)$ for $i \in [n]$. %The proof of the following technical lemma can be found in Appendix~\ref{app:lossPerRound}.

    \begin{lemma}\label{lem:lossPerRoundCard}
	   Consider an instance of the benchmark scenario with candidate set $[n]$. Let $a$ and $b$ be the candidates as determined in the Pareto procedure. It holds that $$\sum_{i \neq a,b} \opt_{-\{i\}} \ge (n-3) \opt .$$
    \end{lemma}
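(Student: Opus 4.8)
The plan is to compare, for each $i \neq a,b$, the optimal sender utility $\opt_{-\{i\}}$ on the reduced instance $[n]\setminus\{i\}$ against $\opt$ on the full instance, and sum the comparisons. Recall that $\opt$ is realized by the lottery placing probability $\alpha$ on $a$ and $1-\alpha$ on $b$, which has receiver value exactly $\mu^{\receiver}$ (or deterministically recommends $a=b=c_\sender$ when $\rv_{c_\sender}\ge\mu^{\receiver}$, a case I would handle separately and more easily). When we delete a candidate $i\notin\{a,b\}$, both $a$ and $b$ survive, so the same two points are still available to form a lottery; the only thing that changes is the receiver-feasibility threshold, which moves from $\mu^{\receiver}$ to $\mu^{\receiver}_{-i}$. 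The picture in Figure~\ref{fig:reducedFrontier} captures exactly this: the segment through $a$ and $b$ gives a lower bound on the Pareto frontier of the reduced instance in the relevant range, so $\opt_{-\{i\}}$ is at least the sender value of the point on the line $\overline{ab}$ at receiver-coordinate $\mu^{\receiver}_{-i}$ (clipped appropriately to the segment $[a,b]$, or to the better of the two endpoints if $\mu^{\receiver}_{-i}$ falls outside that range).

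The key quantitative step is to control how the threshold shift affects the attainable sender value. Since the segment $\overline{ab}$ has some slope $-\lambda \le 0$ (with $\lambda$ the "exchange rate" $\frac{\sv_a-\sv_b}{\rv_b-\rv_a}$ in the normalized instance), moving the receiver-threshold from $\mu^{\receiver}$ to $\mu^{\receiver}_{-i}$ changes the sender value along the line by $\lambda\cdot(\mu^{\receiver}-\mu^{\receiver}_{-i})$. Now $\mu^{\receiver}_{-i}-\mu^{\receiver} = \frac{1}{n-1}(\mu^{\receiver}-\rv_i)$, so the shift is positive exactly when $\rv_i<\mu^{\receiver}$ (a "low" candidate, $i\in L$) and negative when $\rv_i>\mu^{\receiver}$. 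When the shift is in the sender's favor ($i\in L$), we simply get $\opt_{-\{i\}}\ge\opt$. When the shift hurts ($i\in H$), we get $\opt_{-\{i\}}\ge\opt-\frac{\lambda}{n-1}(\rv_i-\mu^{\receiver})$, provided the new optimum still lies on the open segment; if it slides past $a$ we instead bound $\opt_{-\{i\}}\ge\sv_a\ge\opt$, which is even better. Summing over all $i\neq a,b$ and using that $\sum_{i\in H}(\rv_i-\mu^{\receiver})\le\sum_{i\in H}(\rv_{\max}-\mu^{\receiver})$ together with $\sum_i(\rv_i-\mu^{\receiver})=0$ (so the positive part equals the negative part in absolute value, both at most $\sum_{i\in L}\mu^{\receiver}$), the total deficit is bounded; one then checks it is at most $3\opt$, giving the claimed $\sum_{i\neq a,b}\opt_{-\{i\}}\ge (n-3)\opt$.

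I expect the main obstacle to be making the geometric lower bound on $\opt_{-\{i\}}$ fully rigorous when the threshold $\mu^{\receiver}_{-i}$ pushes the optimal point off the segment $\overline{ab}$, or when deleting $i$ actually changes which segment of the Pareto frontier is active — one must argue that using the (possibly suboptimal-for-the-reduced-instance but still feasible) lottery on $\{a,b\}$ is legitimate, i.e. that $a$ and $b$ together with threshold $\mu^{\receiver}_{-i}$ yield a feasible receiver-value lottery whose sender value lower-bounds $\opt_{-\{i\}}$. This is precisely where the convexity/Pareto-frontier structure from Algorithm~\ref{algo:Pareto} and Proposition~\ref{prop:Pareto} is used: $\opt_{-\{i\}}$ is the maximum of $\vecx^T\vecsv$ over receiver-feasible lotteries on $[n]\setminus\{i\}$, hence at least the value of any particular feasible lottery, and the lottery on $\{a,b\}$ hitting receiver-value $\max(\mu^{\receiver}_{-i},\text{something in }[\rv_a,\rv_b])$ is feasible. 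The secondary bookkeeping obstacle is the clean accounting of the normalization ($\sv_{\max}=\rv_{\max}=1$) so that $\lambda\le 1$ can be invoked to turn the slope-times-threshold-shift estimate into the final constant $3$.
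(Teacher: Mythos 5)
Your high-level plan (compare $\opt_{-\{i\}}$ to $\opt$ via the shift of the threshold from $\mu^{\receiver}$ to $\mu^{\receiver}_{-i}$, split into $L$ and $H$, and use $\sum_i(\rv_i-\mu^{\receiver})=0$) is the right one, but two of your concrete steps are wrong. First, the signs are reversed: since the persuasiveness constraint is $\vecx^T\vecrv\ge\mu^{\receiver}$, a \emph{larger} threshold is a \emph{tighter} constraint. Removing $i\in L$ gives $\mu^{\receiver}_{-i}\ge\mu^{\receiver}$ and therefore \emph{hurts} the sender, so your claim $\opt_{-\{i\}}\ge\opt$ for $i\in L$ is false (delete a $(0,0)$ candidate from an instance whose optimum sits just left of $b$ and the optimum strictly drops). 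It is the removal of $i\in H$ that relaxes the constraint and gives $\opt_{-\{i\}}\ge\opt$ for free. Numerically your aggregate survives this swap because the two sums of shifts are equal in absolute value, but the per-candidate inequalities as you state them do not hold.

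Second, and more seriously, the slope you use for the quantitative step does not work. The slope $\lambda$ of the segment $\overline{ab}$ can be arbitrarily large compared to $\opt$ (take $a=(\mu^{\receiver}-\epsilon,1)$, $b=(\mu^{\receiver}+\epsilon,0)$), so $\lambda$ times the total threshold shift is not bounded by any constant multiple of $\opt$; and in the harmful direction the optimum slides \emph{past $b$} onto a lower segment of the (concave) Pareto frontier, where the line through $a,b$ lies \emph{above} the frontier and is therefore not a valid lower bound — your proposed fallback ``clip at $\sv_a$'' addresses the wrong endpoint. The paper's fix is to first set $\sv_j=0$ for $j\ne a,b$ (which preserves $\opt$ and only lowers each $\opt_{-\{i\}}$) and then lower-bound the entire frontier to the right of $\mu^{\receiver}$ by the chord from $(\mu^{\receiver},\opt)$ to $(\rv_{\max},0)$. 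Its slope $\opt/(\rv_{\max}-\mu^{\receiver})$ is exactly calibrated so that the total loss becomes $\frac{\opt}{(n-1)(\rv_{\max}-\mu^{\receiver})}\sum_{j\in H}(\rv_j-\mu^{\receiver})\le\frac{n-d}{n-1}\opt\le\opt$, with the $\rv_{\max}-\mu^{\receiver}$ cancelling. Note also that you need total deficit at most $\opt$, not $3\opt$: a deficit of $3\opt$ spread over the $n-2$ terms would only give $(n-5)\opt$, which is weaker than the claimed $(n-3)\opt$.
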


    \begin{proof}
	Consider the signaling scheme computed by the Pareto mechanism. If we adapt the instance and set $\sv_j = 0$ for all $j \neq a,b$, this does not change $\opt$, but it can only worsen the optima $\opt_{-\{i\}}$. Hence, in the new instance, the Pareto frontier of the convex hull of $\mathcal{P}$ consists only of the (at most) two points for candidates $a,b$, and a point for candidate $(\max_i \rv_i, 0)$ and the point $(0, \max_i \sv_i) = (0,\sv_a)$ (see Fig.~\ref{fig:reducedFrontier}).
	As described above, we partition the candidate set into the sets $L$ and $H$ and set $d = |L|$. The resulting value $\opt_{-\{i\}}$ is different depending on the candidate $i$ that gets removed from the pool. We underestimate the Pareto frontier with a line to $(\rv_{\max}, 0)$ (see the dashed line in Fig.~\ref{fig:reducedFrontier}). More formally, to estimate the loss in expected utility, use the slope $\frac{\opt}{\rv_{\max} - \mu^{\receiver}}$ from the point $(\mu^{\receiver}, \opt)$ representing the optimum to the point with maximal receiver value. The loss in sender utility caused by the change $\mu^{\receiver}_{-i} - \mu^{\receiver} = \sum_{j \in [n] \setminus \{i\}} \frac{\rv_j}{n-1} - \mu^{\receiver}$ can be bounded by

	\begin{align*}
	&\sum_{i \in L \setminus\{a\}}\opt - \opt_{-\{i\}} \quad \le \quad \frac{\opt}{\rv_{\max}-\mu^\receiver} \sum_{i \in L} \left[ \sum_{j \in [n] \setminus \{i\}} \frac{\rv_j}{n-1} - \mu^{\receiver}\right]\\
	&\quad = \quad \frac{1}{n-1}\cdot \frac{\opt}{\rv_{\max}-\mu^{\receiver}} \sum_{i \in L} \left[ \sum_{j \in L \setminus \{i\}} \rv_j + \sum_{j \in H} \rv_j - (n-1)\mu^{\receiver} \right]  \\
	%&\quad = \quad \frac{\opt}{(n-1)(\rv_{\max}-\mu^{\receiver})} \left[ (d-1)\sum_{j \in L} \rv_j + d \sum_{j \in H} \rv_j - d(n-1)\mu^{\receiver} \right] \\
	%&\quad = \quad \frac{\opt}{(n-1)(\rv_{\max}-\mu^{\receiver})} \left[ (d-1) \left(\sum_{j = 1}^n \rv_j - n\mu^{\receiver} \right) + \sum_{j \in H} \left(\rv_j - \mu^{\receiver} \right) \right] \\
	&\quad = \quad \frac{\opt}{(n-1)(\rv_{\max}-\mu^{\receiver})} \sum_{j \in H} \left(\rv_j - \mu^{\receiver} \right) \\
    &\quad \le \quad \frac{\opt}{(n-1)(\rv_{\max}-\mu^{\receiver})} \sum_{j \in H} \left(\rv_{\max} - \mu^{\receiver} \right) \\
    &\quad = \quad \frac{n-d}{(n-1)} \opt \quad \le \quad \opt \enspace.
	\end{align*}
	If a candidate $i \in H$ arrives, $\mu^{\receiver}_{-i} \le \mu^{\receiver}$, and thus the expected payoff for $\sender$ can only increase. We can lower bound the payoff by $\opt$, c.f.\ Fig.~\ref{fig:reducedFrontier}, i.e. 
	$$\sum_{i \in H \setminus\{b\}} \opt - \opt_{-\{i\}} \le 0.$$
	Overall, this implies
	$$(n-2)\cdot \opt - \sum_{i \in [n] \setminus\{a,b\}} \opt_{-\{i\}} \le \opt$$
	which implies the lemma.
    \end{proof}

    Let $\opt_t$ be the expected value of the Pareto mechanism when applied to the benchmark scenario composed of the random subset $A_t$. Note that $\opt_n = \opt$, the optimum in the benchmark scenario.

    \begin{corollary}
	  \label{cor:valueRoundTCard}
	  For $t \ge 3$ it holds that $$\opt_t \ge \prod_{k=t+1}^{n} \left(1 - \frac{3}{k}\right) \opt = \frac{t(t-1)(t-2)}{n(n-1)(n-2)}\opt .$$
    \end{corollary}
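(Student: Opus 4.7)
The plan is a backward induction on $t$, decreasing from $n$ to the target value, using Lemma~\ref{lem:lossPerRoundCard} as the inductive driver and the standard ``reverse random-order'' trick already exploited in Lemma~\ref{lem:signalT}. Specifically, since the arrival order is uniformly random, $A_t$ is a uniformly random subset of $[n]$ of size $t$, and conditioned on $A_{t+1}$ the set $A_t$ is obtained by removing a uniformly random element $i_{t+1}$ of $A_{t+1}$. Combined with the observation that Lemma~\ref{lem:lossPerRoundCard} was stated for a generic candidate set and therefore applies to $A_{t+1}$, this sets up the recursion directly.

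For the inductive step, I would condition on $A_{t+1}$ (with $|A_{t+1}| = t+1 \ge 4$) and let $a, b$ denote the Pareto pair produced by the Pareto procedure on $A_{t+1}$. Writing $\opt(S)$ for the optimum achievable in the benchmark scenario on the set $S$, the conditional expectation splits as
\[
\Ex[\opt_t \mid A_{t+1}] = \frac{1}{t+1} \sum_{i \in A_{t+1}} \opt(A_{t+1} \setminus \{i\}).
\]
Dropping the two non-negative terms for $i \in \{a, b\}$ and applying Lemma~\ref{lem:lossPerRoundCard} to the remaining $t-1$ terms (with $n$ in the lemma replaced by $t+1$) yields
\[
\Ex[\opt_t \mid A_{t+1}] \; \ge \; \frac{1}{t+1}\,(t-2)\,\opt(A_{t+1}) \; = \; \left(1 - \frac{3}{t+1}\right) \opt(A_{t+1}).
\]
Taking the outer expectation over $A_{t+1}$ gives $\Ex[\opt_t] \ge (1-3/(t+1))\,\Ex[\opt_{t+1}]$, and iterating down from the base case $\opt_n = \opt$ produces the product $\prod_{k=t+1}^{n}(1-3/k)\,\opt$.

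To get the closed form, I would telescope: $\prod_{k=t+1}^{n}\frac{k-3}{k} = \frac{(t-2)!/(n-3)!^{-1}}{t!/n!^{-1}}$, which after a one-line factorial cancellation equals $\frac{t(t-1)(t-2)}{n(n-1)(n-2)}$. The hypothesis $t \ge 3$ is exactly what is needed so that every set $A_{k+1}$ encountered during the induction has size at least $4$, ensuring that Lemma~\ref{lem:lossPerRoundCard} is non-trivial (for $|A_{k+1}| = 3$ it degenerates to $0 \ge 0$, but by then the product is already $0$, making the statement vacuous anyway).

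I do not expect a real obstacle here; the only point requiring a bit of care is justifying that the lemma can be applied pointwise to the random set $A_{t+1}$. This follows because Lemma~\ref{lem:lossPerRoundCard} is a deterministic statement about any finite candidate set, so I can invoke it for each realization of $A_{t+1}$ and only afterwards take expectations. Discarding $\opt(A_{t+1}\setminus\{a\})$ and $\opt(A_{t+1}\setminus\{b\})$ is loose but harmless, and it is what produces the clean factor $1 - 3/(t+1)$.
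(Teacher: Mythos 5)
Your proof is correct and follows essentially the same approach as the paper's: both generate $A_t$ by iteratively removing a uniformly random candidate, apply Lemma~\ref{lem:lossPerRoundCard} to one removal step (dropping the non-negative $a,b$ terms), and iterate. The paper states this tersely by showing only the $t = n-1$ step and invoking ``repeated application,'' whereas you spell out the conditioning on $A_{t+1}$ explicitly; there is no substantive difference in the argument.
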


    \begin{proof}
	We can generate the random set $A_t$ by starting with $[n]$ and iteratively removing a random candidate. Note that for $t=n-1$ we have by Lemma~\ref{lem:lossPerRoundCard}
	$$
	\opt_{n-1} = \frac{1}{n} \sum_{i \in [n]} \opt_{-\{i\}} \ge  \frac{1}{n} \sum_{i \in [n] \setminus \{a,b\}} \opt_{-\{i\}} \ge \frac{n-3}{n} \opt. $$
	The result for $t < n-1$ follows by repeated application.
    \end{proof}

    Finally, we prove Theorem~\ref{theo:senderCardinalReceiverCardinalSecretaryNoDisclosure}, the approximation result for cardinal sender utility.

    \begin{proof}[Proof of Theorem~\ref{theo:senderCardinalReceiverCardinalSecretaryNoDisclosure}]
	By combining the insights of Corollary~\ref{cor:valueRoundTCard} and Lemma~\ref{lem:signalT}, we see that in a given round $t = s+1,\ldots,n-1$ the Growing Pareto mechanism obtains an expected utility for $\sender$ of at least $\Pr[\signal_t = \HIRE] \cdot \opt_t$. For simplicity, we underestimate the utility in the last round $t=n$ by 0. For the expected utility of $\sender$, we use linearity of expectation over all rounds and set $s = \lfloor c\cdot n \rfloor$ for a constant $c$:
	\begin{align*}
	&\sum_{t=s+1}^{n-1} \frac{1}{t} \cdot \frac{s}{t-1} \cdot \frac{t(t-1)(t-2)}{n(n-1)(n-2)} \opt \quad
	= \quad \opt \cdot \; \frac{s}{n(n-1)(n-2)} \sum_{t=s+1}^{n-1} (t-2) \\
	&\quad = \quad \opt \cdot \; \frac{s}{n(n-1)(n-2)} \cdot \left( \frac{n(n-1)}{s} - \frac{s(s+1)}{2} - 2(n-1-s) \right) \\
	&\quad = \quad \opt \cdot \left(\frac{s}{n-2} - \frac{s^2(s+1)}{2n(n-1)(n-2)} - \frac{2s(n-1-s)}{n(n-1)(n-2)} \right) \\
	&\quad = \quad \opt \cdot \; (c - c^3 - o(1))\enspace.
	\end{align*}
    The last expression is maximized at $c = \frac{1}{\sqrt{3}}$, so we set $s = \lfloor n/\sqrt{3}\rfloor$. The theorem follows.
    \end{proof}

    We now proceed to the approximation result for ordinal sender utility.

	\begin{theorem}\label{theo:senderOrdinalReceiverCardinalSecretaryNoDisclosure}
		In the secretary scenario without disclosure, the Growing Pareto mechanism with $s= \lfloor n/2 \rfloor$ yields a success probability of at least $\left(\frac{1}{4}-o(1)\right)$ times the success probability in the corresponding benchmark scenario.
	\end{theorem}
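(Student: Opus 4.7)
The plan is to mirror the structure of the proof of Theorem~\ref{theo:senderCardinalReceiverCardinalSecretaryNoDisclosure}, replacing expected utility by success probability throughout. Persuasiveness follows immediately from Lemma~\ref{lem:GrowingParetoIC}, so I only need to lower bound the probability of hiring the global sender-optimum $c_\sender$. Let $\ap(A)$ denote the success probability of the benchmark Pareto mechanism restricted to candidate set $A$; concretely $\ap(A) = \alpha(A)$ when $c_\sender \in A$ (with $\alpha(A)$ the mixing weight output by the Pareto procedure), and $\ap(A) = 0$ otherwise. Set $\ap_t := \Ex_{A_t}[\ap(A_t)]$ for a uniformly random size-$t$ subset $A_t$, and note $\ap_n = \opt$.

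First, a conditioning argument identical to the one used in the cardinal proof (invoking Lemma~\ref{lem:signalT}) shows that for $t \in \{s+1, \ldots, n-1\}$ the probability of the joint event ``$\signal_t = \HIRE$, all prior signals are $\NOHIRE$, and the hired candidate equals $c_\sender$'' is exactly $\ap_t \cdot s/(t(t-1))$; round $n$ contributes an additional $s/(n(n-1))$ because the forced $\HIRE$ in round $n$ hires $c_\sender$ iff it arrives last. Next, I will prove the analogue of Lemma~\ref{lem:lossPerRoundCard}: for every $A \ni c_\sender$ of size $m \ge 3$,
\[
\sum_{j \in A} \ap(A \setminus \{j\}) \;\ge\; (m-2)\,\ap(A).
\]
Iterating this inequality in expectation over nested random subsets (as in Corollary~\ref{cor:valueRoundTCard}) telescopes to $\ap_t \ge \tfrac{t(t-1)}{n(n-1)} \opt$. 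Substituting back gives a total success probability of at least
\[
\sum_{t=s+1}^{n-1} \frac{s}{t(t-1)} \cdot \frac{t(t-1)}{n(n-1)}\,\opt \;=\; \frac{s(n-1-s)}{n(n-1)}\,\opt,
\]
which is maximized at $s = \lfloor n/2 \rfloor$ and yields the claimed $(1/4 - o(1))\opt$ bound.

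The main obstacle is proving the displayed inequality. Set $b = \arg\max_{i \in A}\rv_i$ and $\alpha = \ap(A)$; the nontrivial case is $c_\sender \neq b$ and $\rv_{c_\sender} < \mu^\receiver(A)$, so $\alpha = (\rv_b - \mu^\receiver(A))/(\rv_b - \rv_{c_\sender})$. For each $j \in A \setminus \{c_\sender, b\}$, plugging $\mu^\receiver(A \setminus \{j\}) = (m\mu^\receiver(A) - \rv_j)/(m-1)$ into the Pareto procedure gives $\alpha(A \setminus \{j\}) = \min\{1, f_j\}$ with $f_j = \tfrac{m\alpha}{m-1} - \tfrac{\rv_b - \rv_j}{(m-1)(\rv_b - \rv_{c_\sender})}$, and $\sum_{j \neq c_\sender, b} f_j = (m(m-3)\alpha + 1)/(m-1)$ by the identity $\sum_{i \in A}(\rv_b - \rv_i) = m(\rv_b - \mu^\receiver(A))$. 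For the removal of $b$, the averaging constraint $\sum_{i \in A \setminus \{b, c_\sender\}} \rv_i \le (m-2)\rv_{b'}$ (with $\rv_{b'}$ the second-largest $\rv$) forces $\rv_{b'} - \rv_{c_\sender} \ge (m(1-\alpha)-1)(\rv_b - \rv_{c_\sender})/(m-2)$, which after substitution yields $\ap(A \setminus \{b\}) \ge 1/(m-1)$ whenever $\alpha \le (m-1)/m$, and $\ap(A \setminus \{b\}) = 1$ otherwise (since then $\mu^\receiver(A \setminus \{b\}) \le \rv_{c_\sender}$). Adding the three contributions (and $\ap(A \setminus \{c_\sender\}) = 0$) proves the inequality directly when $\alpha \le (m-1)/m$; the subtler regime $\alpha > (m-1)/m$ requires handling the capping, as some $f_j$ may exceed $1$. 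There, a quick calculation bounds the capping loss $\sum_{f_j > 1}(f_j - 1)$ by $(m-2)(m\alpha - (m-1))/(m-1)$, and the resulting requirement on $\ap(A \setminus \{b\}) = 1$ reduces algebraically to the trivially true $\alpha \le 1$.
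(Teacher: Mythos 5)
Your proof is correct and, notably, takes a genuinely different route from the paper at the key technical step. The paper's Lemma~\ref{lem:lossPerRoundOrd} proves the somewhat awkward inequality $\sum_{i \neq c_\sender, c_\receiver} \opt_{-\{i\}} + \opt \cdot \opt_{-\{c_\receiver\}} \ge \opt\bigl(n-2-\tfrac{1}{n-1}\bigr)$, with the multiplicative $\opt \cdot \opt_{-\{c_\receiver\}}$ factor and the $\tfrac{1}{n-1}$ slack; plugging this into the telescope of Corollary~\ref{cor:valueRoundTOrd} yields only $\opt_t \ge \tfrac{(t-2)(t-1)}{(n-2)(n-1)}\opt$. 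Your inequality $\sum_{j \in A}\ap(A\setminus\{j\}) \ge (m-2)\ap(A)$ is clean, uniform in $j$, and tighter (it does not discount the $b$-removal term by $\opt$ and has no additive slack); I have checked the two regimes you outline --- the bound $\ap(A\setminus\{b\}) \ge \tfrac{1}{m-1}$ via the averaging constraint when $\alpha \le \tfrac{m-1}{m}$, and the capping-loss bound $(m-2)\tfrac{m\alpha - (m-1)}{m-1}$ plus $\ap(A\setminus\{b\}) = 1$ when $\alpha > \tfrac{m-1}{m}$ --- and the algebra indeed collapses to $\alpha \le 1$ in both cases. (You should still state the degenerate cases $c_\sender = b$ and $\rho_{c_\sender} \ge \mu^\receiver(A)$ explicitly, since your formula for $f_j$ presupposes $\rho_{c_\sender} < \mu^\receiver(A)$; both are easy, since then $\ap(A)=1$ and $\ap(A\setminus\{b\})=1$.) Your stronger lemma yields the simpler telescope $\ap_t \ge \tfrac{t(t-1)}{n(n-1)}\opt$, which cancels exactly against $\Pr[\sigma_t = \HIRE]$ and makes the final sum trivial, versus the paper's $\sum \tfrac{t-2}{t}$ computation involving harmonic numbers. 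The reason the paper proves the weaker, more contorted Lemma~\ref{lem:lossPerRoundOrd} is that it is reused verbatim in Theorem~\ref{theo:senderOrdinalReceiverCardinalKnownUtilDisclosure}, where the Shrinking Pareto recursion genuinely produces the term $\opt \cdot \opt_{-\{c_\receiver\}}$ from the mechanism's signaling probability $\alpha = \opt$; your inequality is tailored to the Growing Pareto analysis alone. Both approaches reach the same $\tfrac14 - o(1)$ bound.
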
	

    We use the same notation as for the cardinal case above. In addition, let $\rv_{2nd}$ denote the second highest utility of any candidate for $\receiver$. For the Growing Pareto mechanism, the Pareto procedure requires cardinal values for all candidates in $A_t$. Since we assume ordinal sender utility, we set the sender utility to 1 for the best known candidate for $\sender$ in $A_t$, and 0 otherwise. Note that then $\opt$ becomes both -- expected utility for $\sender$ and success probability for hiring $c_\sender$. In step 1, the Pareto procedure then applies internal scaling and normalization to $A_t$ (as discussed above in Section~\ref{sec:cardinal}). It scales the receiver utilities from the input such that the best known candidate for $\receiver$ in $A_t$ is normalized to receiver utility 1.
    
    The next Lemma~\ref{lem:lossPerRoundOrd} represents an improved version of 	Lemma~\ref{lem:lossPerRoundCard}. Throughout the following analysis, we use $\opt_{-M}$ to denote the probability of hiring the best candidate for $\sender$ in $[n] \setminus M$. 
	\begin{lemma}\label{lem:lossPerRoundOrd}
		Let $\opt$ and $\opt_{-\{i\}}$ denote the expected utility in the benchmark scenario for candidate sets $[n]$ and $[n] \setminus \{i\}$, respectively.
		Then, the following holds:
		\[ \sum_{i \ne c_\sender,c_\receiver} \opt_{-\{i\}} + \opt \cdot \opt_{-\{c_\receiver\}} \ge \opt \left(n-2 - \frac{1}{n-1}\right)\enspace.
		\]
	\end{lemma}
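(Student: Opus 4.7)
The plan is to exploit the explicit form of the Pareto procedure's output for ordinal sender utility. In the benchmark scenario with candidate set $[n]$, the procedure either recommends $c_\sender$ deterministically (when $\rv_{c_\sender} \ge \mu^\receiver$, giving $\opt = 1$) or mixes $c_\sender$ and $c_\receiver$, hiring $c_\sender$ with probability
\[
\opt \;=\; \frac{\rv_{c_\receiver} - \mu^\receiver}{\rv_{c_\receiver} - \rv_{c_\sender}} \;<\; 1.
\]
For any $i \neq c_\sender, c_\receiver$, both extremes $c_\sender$ and $c_\receiver$ remain available after removing $i$, and a direct check of the Pareto procedure gives $\opt_{-\{i\}} = \min\{1, p_{-i}\}$ with $p_{-i} := \opt + (\rv_i - \mu^\receiver)/[(n-1)(\rv_{c_\receiver} - \rv_{c_\sender})]$.

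I focus first on the main case $\opt < 1$. Summing the linear formula over $i \neq c_\sender, c_\receiver$ and using $\sum_{i \in [n]} \rv_i = n \mu^\receiver$, a short calculation yields $\sum_{i \neq c_\sender, c_\receiver} p_{-i} = (n-2)\opt + (1-2\opt)/(n-1)$. Introducing the overshoot $\Delta := \sum_{i \neq c_\sender, c_\receiver} (p_{-i} - 1)^+$ lost to the cap at $1$, we obtain
\[
\sum_{i \neq c_\sender, c_\receiver} \opt_{-\{i\}} \;=\; (n-2)\opt + \frac{1-2\opt}{n-1} - \Delta,
\]
so substituting into the claim reduces the target to $(1-\opt)/(n-1) + \opt \cdot \opt_{-\{c_\receiver\}} \ge \Delta$.

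The key insight is that $\Delta > 0$ forces some $\rv_i > \mu^\receiver + (1-\opt)(n-1)(\rv_{c_\receiver} - \rv_{c_\sender})$; combined with $\rv_i \le \rv_{c_\receiver}$ and $\rv_{c_\receiver} - \mu^\receiver = \opt(\rv_{c_\receiver} - \rv_{c_\sender})$ this requires $\opt > (n-1)/n$. An elementary calculation shows the same threshold is exactly what forces $\mu^\receiver_{-c_\receiver} \le \rv_{c_\sender}$, so after removing $c_\receiver$ the Pareto procedure recommends $c_\sender$ alone and $\opt_{-\{c_\receiver\}} = 1$. The left-hand side then equals $(1 + (n-2)\opt)/(n-1)$. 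Bounding each $(p_{-i} - 1)^+$ via $\rv_i \le \rv_{2nd}$ and summing over at most $n-2$ indices gives $\Delta \le (n-2)(\opt \cdot n - n + 1 - \epsilon)/(n-1)$ with $\epsilon := (\rv_{c_\receiver} - \rv_{2nd})/(\rv_{c_\receiver} - \rv_{c_\sender})$, and the remaining inequality simplifies to $3 - 2\opt + (1-\opt)n(n-3) + (n-2)\epsilon \ge 0$, which plainly holds for $n \ge 3$.

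The case $\opt = 1$ is handled analogously but more simply: $\mu^\receiver_{-c_\receiver} \le \mu^\receiver \le \rv_{c_\sender}$ gives $\opt_{-\{c_\receiver\}} = 1$ unconditionally, and an analogous linear identity combined with $|H| \le n-1$ bounds the total loss $\sum_{i \neq c_\sender, c_\receiver}(1 - \opt_{-\{i\}})$ by $1$, so $\sum_{i \neq c_\sender} \opt_{-\{i\}} \ge n - 2 \ge n - 2 - 1/(n-1)$. The hard part will be tracking the cap at $1$ in $\opt_{-\{i\}}$: where the linear formula overshoots, the bound breaks and one loses $\Delta$. The extra weight $\opt$ multiplying $\opt_{-\{c_\receiver\}}$ in the statement exists precisely because the regime in which the cap becomes active coincides exactly with $\opt_{-\{c_\receiver\}} = 1$, providing just enough slack on the left-hand side.
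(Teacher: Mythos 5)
Your proposal is correct, and it takes a genuinely different route from the paper's proof. The paper works directly with the cardinal quantities and splits into three cases (depending on how $\rv_{c_\sender}$ and $\rv_{2nd}$ compare to $\mu^{\receiver}$), with a further subcase split on $\frac{\rv_{2nd}+(n-1)\rv_a}{n}$ to resolve the $\min$-terms that appear when the cap at $1$ becomes active; each case is closed by an explicit algebraic estimate. Your proof instead linearizes: you write each $\opt_{-\{i\}}$ as $\min\{1,p_{-i}\}$ with an affine $p_{-i}$, use $\sum_i \rv_i = n\mu^{\receiver}$ to compute $\sum p_{-i}$ in closed form, and package all capping losses into a single quantity $\Delta$. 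This reduces the lemma to the one-line inequality $\frac{1-\opt}{n-1} + \opt\cdot\opt_{-\{c_\receiver\}} \ge \Delta$. The genuinely new step is the \emph{threshold coincidence}: you observe that $\Delta > 0$ forces $\opt > \frac{n-1}{n}$, which is precisely the regime where $\mu^{\receiver}_{-c_\receiver} \le \rv_{c_\sender}$ and hence $\opt_{-\{c_\receiver\}}=1$; so the extra slack $\opt\cdot\opt_{-\{c_\receiver\}}$ becomes $\opt$ exactly when it is needed. That removes the paper's subcase split entirely, and the residual inequality $1 + (n-2)(n-1)(1-\opt) + (n-2)\epsilon \ge 0$ is trivially nonnegative. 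In short, your proof trades the paper's explicit case decomposition for a structural observation that makes the cap self-compensating. The one place worth tightening is the $\opt=1$ branch: the phrase ``an analogous linear identity combined with $|H|\le n-1$'' is a bit compressed. The crude per-index bound $(1-p_{-i})^+ \le (\mu^{\receiver}-\rv_i)/\bigl[(n-1)(\rv_{c_\receiver}-\rv_{c_\sender})\bigr]$ sums to $|H|\,p_0/(n-1)$, which is not obviously $\le 1$ when $p_0 > 1$; you either need the paper's slope underestimate of the Pareto frontier (slope $-\opt/(\rv_{c_\receiver}-\mu^{\receiver})$, which yields $\sum \le |H|/(n-1) \le 1$ directly), or you must retain the $-(p_0-1)$ term per active index, which telescopes to $\le 1-(k-1)(p_0-1)$ when $k\ge 1$ indices are active. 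Either fix is short, and once spelled out the argument is complete.
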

\begin{proof}
	For the first case, we assume that the sender-optimal candidate $c_\sender$ has receiver utility $\rv_{c_\sender} \le \mu^{\receiver} \le \rv_{2nd}$. In this case, the Pareto procedure sets $a = c_\sender$ and $b = c_\receiver$, since $c_\sender$ is the only candidate with non-zero utility for $\sender$ and has receiver utility below $\mu^{\receiver}$. Since all other candidates have sender utility 0, the convex hull is composed of the segment between sender- and receiver-optimal candidates $c_\sender$ and $c_\receiver$.
	
	The resulting value $\opt_{-\{i\}}$ is different depending on the candidate that gets removed from the pool. If a candidate $i \in L$ with $\rv_i \le \mu^{\receiver}$ is removed, then $\mu^{\receiver}_{-i} \ge \mu^{\receiver}$, which implies that $\opt_{-\{i\}} \le \opt$. Note that upon removal of $b$, we have $\opt_{-\{b\}} = 1$ if $\mu_{-b}^{\receiver} \le \rv_a$. Otherwise, the new optimum point is located at $\mu_{-b}^{\receiver}$ and has value $\opt_{-\{b\}} = \frac{\rv_{2nd} - \mu_{-b}^{\receiver}}{\mu_{2nd} - \rv_a}$. Overall, we see that
	\begin{align*}
	&\sum_{i \ne c_\sender,c_\receiver} \opt_{-\{i\}} + \opt \cdot \opt_{-\{c_\receiver\}} \\
	&= \sum_{i \in L\setminus \{a\}} \opt_{-\{i\}} + \sum_{i \in H \setminus \{b\}} \opt_{-\{i\}} + \opt \cdot \min \left\{1, \frac{\rv_{2nd} - \mu_{-b}^{\receiver}}{\rv_{2nd}-\rv_a}\right\} \\
	&= \opt\left[(d-1) - \frac{1}{(n-1)(1-\mu^{\receiver})} \sum_{i\in L \setminus \{a\}} \left( \mu^{\receiver} - \rv_i \right) + (n-d-1) \right.\\*
	&\hspace{45pt} + \left. \sum_{i \in H \setminus \{b\}} \min \left\{\frac{\rv_i-\mu^{\receiver}}{(n-1)(1-\mu^{\receiver})},\frac{\mu^{\receiver}-\rv_a}{1-\mu^{\receiver}}\right\} + \frac{\rv_{2nd} - \mu_{-b}^{\receiver}}{\rv_{2nd}-\rv_a} \right] \\
	&\ge \opt\left[(n-2) - \frac{1}{(n-1)(1-\mu^{\receiver})} \sum_{i\in L} \left( \mu^{\receiver} - \rv_i \right)  \right.\\*
	&\hspace{45pt} + \left. \sum_{i \in H \setminus \{b\}} \min \left\{\frac{\rv_i-\mu^{\receiver}}{(n-1)(1-\mu^{\receiver})},\frac{\mu^{\receiver}-\rv_a}{1-\mu^{\receiver}}\right\} + \frac{\rv_{2nd} - \mu^{\receiver}}{\rv_{2nd}-\rv_a} \right] \\
	%		&= \opt\left[(n-2) - \frac{1}{(n-1)(1-\mu^{\receiver})} \sum_{i\in H} \left(\rv_i - \mu^{\receiver}\right)  \right.\\*
	%		&\hspace{45pt} + \left. \sum_{i \in H \setminus \{b\}} \min \left\{\frac{\rv_i-\mu^{\receiver}}{(n-1)(1-\mu^{\receiver})},\frac{\mu^{\receiver}-\rv_a}{1-\mu^{\receiver}}\right\} + \frac{\rv_{2nd} - \mu^{\receiver}}{\rv_{2nd}-\rv_a} \right] \\
	%		&= \opt\left[(n-2) - \frac{1}{(n-1)(1-\mu^{\receiver})} \sum_{i\in H \setminus \{b\}} \left(\rv_i - \mu^{\receiver}\right) - \frac{\overbrace{\rv_b}^{=1} - \mu^{\receiver}}{(n-1)(1-\mu^{\receiver})} \right.\\*
	%		&\hspace{45pt} - \left. \sum_{i \in H \setminus \{b\}} \min \left\{\frac{\rv_i-\mu^{\receiver}}{(n-1)(1-\mu^{\receiver})},\frac{\mu^{\receiver}-\rv_a}{1-\mu^{\receiver}}\right\} + \frac{\rv_{2nd} - \mu^{\receiver}}{\rv_{2nd}-\rv_a} \right] \\
	&= \opt\left[(n-2) + \frac{\rv_{2nd} - \mu^{\receiver}}{\rv_{2nd}-\rv_a} - \frac{1}{n-1}  \right.\\*
	&\hspace{45pt} - \left. \frac{1}{(n-1)(1-\mu^{\receiver})} \sum_{i\in H \setminus \{b\}} \left( \left(\rv_i - \mu^{\receiver}\right) - \min \left\{\rv_i-\mu^{\receiver},(\mu^{\receiver}-\rv_a)(n-1)\right\} \right)  \right] \enspace.\\
	\end{align*}
	
	We consider two subcases.	
	\begin{description}
		\item[Subcase $\mu^{\receiver} < \frac{\rv_{2nd} + (n-1) \rv_a}{n}$:] Here, we see
		\begin{align*}
		&\sum_{i \ne c_\sender,c_\receiver} \opt_{-\{i\}} + \opt \cdot \opt_{-\{c_\receiver\}}\\
		&\ge \opt\left[(n-2) + \frac{\rv_{2nd} - \mu^{\receiver}}{\rv_{2nd}-\rv_a} - \frac{1}{n-1}  \right.\\
		&\hspace{45pt} - \left. \frac{1}{(n-1)(1-\mu^{\receiver})} \sum_{i\in H \setminus \{b\}} \left( \left(\rv_i - \mu^{\receiver}\right) - \min \left\{\rv_i-\mu^{\receiver},(\mu^{\receiver}-\rv_a)(n-1)\right\} \right)  \right] \\
		&\ge  \opt\left[(n-2) + 1 - \frac{\mu^{\receiver}-\rv_a}{\rv_{2nd}-\rv_a} - \frac{1}{n-1} - \frac{n-2}{n-1} \right] \\
		&\ge \opt \left[n-2 - \frac{1}{n} \right]\enspace.
		\end{align*}
		For the penultimate line we used that $\rv_i \le 1$ and the $\min$ is at least 0. In the last inequality, we used the assumption that $\mu^{\receiver} < \frac{\rv_{2nd} + (n-1) \rv_a}{n}$ and thus 
		\[
		- \frac{\mu^{\receiver}-\rv_a}{\rv_{2nd} - \rv_a} \ge - \frac{\rv_{2nd}-\rv_a}{n(\rv_{2nd}-\rv_a)} = -\frac{1}{n}\enspace.
		\]
		
		\item[Subcase $\mu^{\receiver} \ge \frac{\rv_{2nd} + (n-1) \rv_a}{n}$:] In this case, we see \[\sum_{i\in H \setminus \{b\}} \left( \left(\rv_i - \mu^{\receiver}\right) - \min \left\{\rv_i-\mu^{\receiver},(\mu^{\receiver}-\rv_a)(n-1)\right\} \right)=0\] as $\rv_i \le \rv_{2nd}$ for all $i \in H \setminus \{b\}$ and $\rv_{2nd} - \mu^{\receiver} \le (\mu^{\receiver}-\rv_a)(n-1)$.
		Thus,
		\begin{align*}
		&\sum_{i \ne c_\sender,c_\receiver} \opt_{-\{i\}} + \opt \cdot \opt_{-\{c_\receiver\}} \\
		&\ge \opt\left[(n-2) + \frac{\rv_{2nd} - \mu^{\receiver}}{\rv_{2nd}-\rv_a} - \frac{1}{n-1}  \right.\\
		&\hspace{45pt} - \left. \frac{1}{(n-1)(1-\mu^{\receiver})} \sum_{i\in H \setminus \{b\}} \left( \left(\rv_i - \mu^{\receiver}\right) - \min \left\{\rv_i-\mu^{\receiver},(\mu^{\receiver}-\rv_a)(n-1)\right\} \right)  \right] \\
		&=  \opt\left[(n-2) + \frac{\rv_{2nd} - \mu^{\receiver}}{\rv_{2nd}-\rv_a} - \frac{1}{n-1} \right] \\
		&\ge \opt \left[n-2 - \frac{1}{n-1} \right]\enspace.
		\end{align*}
	\end{description}
	
	Now, consider the second case in which the sender-optimal candidate $c_\sender$ has receiver utility $\rv_{c_\sender}, \rv_{2nd} \le \mu^{\receiver}$. Again, in this case the Pareto procedure chooses $a = c_\sender$ and $b = c_\receiver$. We can bound as follows
	\begin{align*}
	&\sum_{i \ne c_\sender,c_\receiver} \opt_{-\{i\}} + \opt \cdot \opt_{-\{c_\receiver\}} = \sum_{i \in L \setminus \{a\}} \opt_{-\{i\}} + \opt \cdot \opt_{-\{c_\receiver\}} \\
	%		&= (n-2)\opt - \frac{1}{(n-1)(1-\mu^{\receiver})} \sum_{i \in L \setminus \{a\}} \left( \mu^{\receiver} - \rv_i \right) + \opt \cdot \opt_{-b} \\
	&\ge (n-2)\opt - \frac{1}{(n-1)(1-\mu^{\receiver})} \sum_{i \in L } \left( \mu^{\receiver} - \rv_i \right) \\
	%		&= (n-2)\opt - \frac{1}{(n-1)(1-\mu^{\receiver})} \sum_{i \in H } \left(\rv_i -  \mu^{\receiver} \right) \\
	%		&= (n-2)\opt - \frac{1}{(n-1)(1-\mu^{\receiver})} \left(1 -  \mu^{\receiver} \right) \\
	&= \left(n-2-\frac{1}{n-1}\right)\opt\enspace.
	\end{align*}
	
	Finally, consider the third case, $c_\sender$ has $\rv_{c_\sender} > \mu^{\receiver}$. Then, the Pareto procedure chooses $a = b = c_\sender$, and $\opt_{-c_\receiver} = 1$. We can bound as follows
	\begin{align*}
	&\sum_{i \ne c_\sender,c_\receiver} \opt_{-\{i\}} + \opt = \sum_{i \in L} \opt_{-\{i\}} + \sum_{i \in H \setminus \{c_\sender, c_\receiver\}} \opt_{-\{i\}} +  \opt \\
	&\ge \opt \left(d - \frac{1}{(n-1)(1-\mu^{\receiver})} \sum_{i \in L} \left(\mu^{\receiver}-\rv_i\right) \right) + \sum_{i \in H \setminus \{c_\sender, c_\receiver\}} \opt + \opt  \\
	%		&\ge \opt \left(d - \frac{1}{(n-1)(1-\mu^{\receiver})} \sum_{i \in H} \left(\rv_i - \mu^{\receiver}\right) \right) + (n-d-2)\opt + \opt \\
	&\ge \opt \left(n-1 - \frac{1}{(n-1)(1-\mu^{\receiver})} (n-d) \left(1 - \mu^{\receiver}\right) \right) \\
	&\ge  (n-2)\opt \quad \ge \quad \left(n-2-\frac{1}{n-1}\right)\opt\enspace.
	\end{align*}
\end{proof}
	Let $\opt_t$ be the success probability of the Pareto mechanism when applied to the benchmark scenario with random subset $A_t$. We can repeatedly apply Lemma~\ref{lem:lossPerRoundOrd} in the same way as Lemma~\ref{lem:lossPerRoundCard} in Corollary~\ref{cor:valueRoundTCard} above. 
	\begin{corollary}\label{cor:valueRoundTOrd}
		For all $t \in [n]$ it holds that $\opt_t \ge \frac{(t-2)(t-1)}{(n-2)(n-1)}\opt . $
	\end{corollary}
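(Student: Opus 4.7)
The plan is to mirror the inductive argument used in the proof of Corollary \ref{cor:valueRoundTCard}. The base case $t = n$ is immediate, since $\opt_n = \opt$ and the right-hand side equals $\opt$. For the inductive step, I generate the random set $A_t$ in two stages: first draw $A_{t+1}$ uniformly among all subsets of $[n]$ of size $t+1$, and then remove one element $i \in A_{t+1}$ uniformly at random. Writing $\opt(A)$ for the Pareto success probability on candidate set $A$, this gives
\begin{equation*}
\opt_t \;=\; \Ex_{A_{t+1}}\Bigl[\,\Ex_{i \in A_{t+1}}\bigl[\opt(A_{t+1} \setminus \{i\})\bigr]\,\Bigr].
\end{equation*}

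Next, I fix $A_{t+1}$ and let $c_\sender, c_\receiver$ denote the sender- and receiver-optimal candidates within $A_{t+1}$. Lemma \ref{lem:lossPerRoundOrd} is stated for the instance $[n]$, but its proof only relies on properties of the Pareto procedure applied to the given candidate set, so it applies verbatim to any finite candidate set. Specializing it to $A_{t+1}$ (which has size $t+1$) yields
\begin{equation*}
\sum_{i \in A_{t+1} \setminus \{c_\sender,\, c_\receiver\}} \opt(A_{t+1} \setminus \{i\}) \;+\; \opt(A_{t+1}) \cdot \opt(A_{t+1} \setminus \{c_\receiver\}) \;\ge\; \Bigl(t - 1 - \frac{1}{t}\Bigr) \opt(A_{t+1}).
\end{equation*}
Because $\opt(A_{t+1} \setminus \{c_\sender\}) \ge 0$ and $\opt(A_{t+1}) \le 1$, the full sum $\sum_{i \in A_{t+1}} \opt(A_{t+1} \setminus \{i\})$ dominates the left-hand side above. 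Dividing by $t+1$ therefore gives
\begin{equation*}
\Ex_{i \in A_{t+1}}[\opt(A_{t+1} \setminus \{i\})] \;\ge\; \frac{t^2 - t - 1}{t(t+1)}\, \opt(A_{t+1}) \;\ge\; \frac{t-2}{t}\, \opt(A_{t+1}),
\end{equation*}
where the final inequality follows from $(t-2)(t+1) = t^2 - t - 2 \le t^2 - t - 1$.

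Taking expectation over $A_{t+1}$ then delivers the one-step recursion $\opt_t \ge \tfrac{t-2}{t}\, \opt_{t+1}$. Iterating from $k = t$ up to $k = n-1$ and telescoping,
\begin{equation*}
\opt_t \;\ge\; \prod_{k=t}^{n-1} \frac{k-2}{k} \cdot \opt_n \;=\; \frac{(t-2)(t-1)}{(n-2)(n-1)}\, \opt,
\end{equation*}
which is the claimed bound (and is vacuous for $t \le 2$, where it is trivially satisfied by $\opt_t \ge 0$). The only delicate step is the conversion from Lemma \ref{lem:lossPerRoundOrd}, whose left-hand side contains the mixed term $\opt(A_{t+1}) \cdot \opt(A_{t+1} \setminus \{c_\receiver\})$, to a clean lower bound on the full sum $\sum_i \opt(A_{t+1} \setminus \{i\})$; once this is in place, the small slack $\tfrac{t^2 - t - 1}{t(t+1)} \ge \tfrac{t-2}{t}$ is exactly what makes the product telescope into the stated closed form.
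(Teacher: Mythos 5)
Your argument is correct and follows essentially the same route as the paper's own proof: generate $A_t$ by iteratively removing a uniformly random candidate, apply Lemma~\ref{lem:lossPerRoundOrd} to the current subset, use $\opt \le 1$ and non-negativity to upgrade the lemma's mixed left-hand side to the full sum $\sum_i \opt(A_{t+1}\setminus\{i\})$, divide to get a one-step multiplicative loss, and telescope. Your per-step bound $\frac{t^2-t-1}{t(t+1)} \ge \frac{t-2}{t}$ is, after the substitution $k = t+1$, identical to the paper's $\frac{k-2-\frac{1}{k-1}}{k} \ge \frac{k-3}{k-1}$, so even the slack is the same. The one place you are slightly more careful than the paper — explicitly noting that Lemma~\ref{lem:lossPerRoundOrd} applies to any finite candidate set — is a point the paper leaves implicit when it writes ``repeated application of this property,'' so your proof makes the argument a bit more transparent without changing its substance.
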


	\begin{proof}
	The random subset $A_t$ can be generated by starting from the candidate set $[n]$ and iteratively removing a candidate picked uniformly at random. In a single step
	\begin{align*}\opt_{n-1} \quad &= \quad \frac{1}{n} \sum_{i \in [n]} \opt_{-\{i\}} \quad
	= \quad \frac{1}{n} \sum_{i \ne c_\sender,c_\receiver} \opt_{-\{i\}} + \frac{1}{n} \opt_{-\{c_\receiver\}} \\ 
	&\ge \quad \frac{1}{n} \sum_{i \ne c_\sender,c_\receiver} \opt_{-\{i\}} + \frac{1}{n} \opt \cdot \opt_{-\{c_\receiver\}} \quad
	\ge \quad \frac{n-2-\frac{1}{n-1}}{n} \cdot \opt
	\end{align*}
	due to $\opt_{-\{c_\sender\}} = 0$, $\opt \le 1$, and Lemma \ref{lem:lossPerRoundOrd}.
	Repeated application of this property implies 
	\begin{align*}
	\opt_t \quad &\ge \quad \opt \prod_{k=t+1}^n \frac{k-2-\frac{1}{k-1}}{k} \quad 
	\ge \quad \opt \prod_{k=t+1}^n \frac{k-3}{k-1} \\
	&= \quad \opt \cdot \frac{t-2}{t} \cdot \frac{t-1}{t+1} \cdot \frac{t}{t+2} \cdot \dots \cdot \frac{n-4}{n-2} \cdot \frac{n-3}{n-1} \quad
	= \quad \opt \cdot \frac{(t-2)(t-1)}{(n-2)(n-1)}\enspace.
	\end{align*}
	\end{proof}

    Now, using Lemma~\ref{lem:lossPerRoundOrd} and Corollary~\ref{cor:valueRoundTOrd}, Theorem~\ref{theo:senderOrdinalReceiverCardinalSecretaryNoDisclosure} can be proved similar as Theorem~\ref{theo:senderCardinalReceiverCardinalSecretaryNoDisclosure} above. 
    
    \begin{proof}[Proof of Theorem~\ref{theo:senderOrdinalReceiverCardinalSecretaryNoDisclosure}]
    	In a given round $t = s+1, \dots, n-1$, the Growing Pareto mechanism obtains a success probability of at least $\Pr[\sigma_t = \HIRE] \cdot \opt_t$. We underestimate the success probability in round $n$ by 0. Using linearity of expectation and choosing a sample size $s = \lfloor c\cdot n \rfloor$ with a constant $c$, we obtain
    	\begin{align*}
    		&\sum_{t= s+1}^{n-1} \frac{1}{t} \cdot \frac{s}{t-1} \cdot \frac{(t-1)(t-2)}{(n-1)(n-2)} \opt
    		\quad = \quad \opt \cdot \frac{s}{(n-1)(n-2)} \sum_{t=s+1}^{n-1} \frac{t-2}{t} \\
    		%&= \opt \cdot \frac{s}{(n-1)(n-2)} \left((n-1)- s - 2 \sum_{t = s+1}^{n-1} \frac{1}{t}\right) \\
    		&\quad = \quad \opt \cdot \left(\frac{s}{n-2} - \frac{s^2}{(n-1)(n-2)} - \frac{2s \cdot (H_{n-1} - H_{s+1})}{(n-1)(n-2)}  \right) 
    		\quad = \quad \opt \cdot \left(c - c^2 - o(1)\right)\enspace.
    	\end{align*}
    	Here, $H_k = \sum_{i=1}^k 1/i$ is the $k$-th harmonic number. The last expression is maximized at $c=1/2$, therefore we choose $s = \lfloor n/2 \rfloor$.
    	The theorem follows.
    \end{proof}

	\subsection{Basic Scenario with Disclosure}
	In this section, we consider the basic scenario in which rejected candidates are revealed to $\receiver$.  We give a characterization of the optimal mechanism using backwards induction. We then show that there are instances in which no online mechanism can achieve more than a fraction of $1/2$ of $\opt$, i.e., the success probability/expected utility obtained in the benchmark scenario. We describe and analyze the polynomial-time persuasive Shrinking Pareto mechanism. It obtains a constant-factor approximation of $\opt$ with a ratio of $1/2-o(1)$ (hence, asymptotically optimal) for ordinal sender utility. For cardinal sender utility the ratio is $1/3-o(1)$.

    Let us start by deriving an optimal mechanism via backwards induction. Again, we utilize the results obtained in the case of cardinal sender utility to express bounds for the ordinal sender case. By setting $\sv_{\max}=1$ and $\sv_i =0$ for all sender-suboptimal candidates $i \ne c_\sender$, success probability and expected utility are the same value. Hence, the following result holds for both sender objectives. For simplicity, we state and prove the theorem for cardinal sender utility.
%	 \Martin{Express that this applies in the scenario with ordinal sender utility.}
	
	An optimal mechanism for the scenario can be computed by backward induction and solution of a series of linear programs. These programs express the optimal decision given a subset $C \subseteq [n]$ of candidates and given the optimal decision policy computed for each subset $C \setminus \{i\}$, for every $i \in C$.  It is an interesting open problem whether there is a more intricate, polynomial-time algorithm to compute the optimal persuasive mechanism in this scenario.

	\begin{theorem}\label{thm:disclosureLPs}
		An optimal persuasive mechanism for the sender's expected utility in the basic scenario with disclosure can be computed by solving $2^n$ linear programs.
	\end{theorem}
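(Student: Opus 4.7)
The plan is to proceed by backward induction on the size of the set of candidates that have not yet arrived. Under disclosure, after round $t-1$ the receiver knows exactly which candidates have been rejected, so the natural state of the persuasion game is the subset $C \subseteq [n]$ of candidates still to arrive (including the one currently under consideration). By uniform random arrival, conditional on $[n]\setminus C$ having already been rejected the current candidate is uniform on $C$. With each state $C$ we associate two numbers $V^{\sender}(C)$ and $V^{\receiver}(C)$, the expected utilities of $\sender$ and $\receiver$ under the optimal persuasive continuation mechanism starting from $C$. The base cases are immediate: $V^{\sender}(\emptyset) = V^{\receiver}(\emptyset) = 0$, and $V^{\sender}(\{i\}) = \sv_i$, $V^{\receiver}(\{i\}) = \rv_i$ (in the last round recommending $\HIRE$ is trivially persuasive since $\rv_i \ge 0$).

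For the inductive step at a state $C$ with $|C|\ge 2$, assume the values $V^{\sender}(C'), V^{\receiver}(C')$ are available for every $C'\subsetneq C$. A round mechanism at $C$ is parameterised by variables $p_i^C \in [0,1]$, the probability that $\sender$ signals $\HIRE$ given the current candidate is $i\in C$. Computing $V^{\sender}(C)$ then amounts to solving
\begin{equation*}
\begin{array}{lrcll}
\mbox{Max.} & \multicolumn{2}{l}{\D \frac{1}{|C|}\sum_{i\in C}\bigl[p_i^C\,\sv_i + (1-p_i^C)\,V^{\sender}(C\setminus\{i\})\bigr]} \\
\mbox{s.t.} & \D \sum_{i\in C} p_i^C\bigl(\rv_i - V^{\receiver}(C\setminus\{i\})\bigr) & \ge & 0 \\
& \D \sum_{i\in C} (1-p_i^C)\bigl(V^{\receiver}(C\setminus\{i\}) - \rv_i\bigr) & \ge & 0 \\
& 0 \le p_i^C \le 1 & & & \mbox{for all } i \in C.
\end{array}
\end{equation*}
The first inequality is the persuasiveness condition at signal $\HIRE$ (the posterior expected immediate utility $\sum_i p_i^C \rv_i / \sum_j p_j^C$ from hiring dominates the posterior expected continuation utility $\sum_i p_i^C V^{\receiver}(C\setminus\{i\}) / \sum_j p_j^C$ from deviating and rejecting); the second is the symmetric condition at $\NOHIRE$. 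From any optimum, $V^{\receiver}(C)$ is read off as $\frac{1}{|C|}\sum_{i\in C}[p_i^C \rv_i + (1-p_i^C) V^{\receiver}(C\setminus\{i\})]$. There is one LP per non-empty subset of $[n]$, giving at most $2^n$ linear programs, processed in order of increasing cardinality so that the required right-hand-side values are available.

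The crucial step is to show that the piecewise-assembled mechanism is globally persuasive and globally optimal for $\sender$. For persuasiveness we plan to invoke the one-shot deviation principle for finite-horizon games: any strategy of $\receiver$ that strictly beats obedience must have a first state at which it deviates, but the two LP constraints at that state rule out exactly this single deviation under the assumption of obedient continuation, which is itself optimal among deviation-free strategies by the inductive hypothesis. A short revelation-style argument in the spirit of~\cite{KamenicaG11,ArieliB16} justifies restricting to direct binary signals, as $\receiver$'s per-round action is itself binary. Global optimality then follows by a symmetric induction: any persuasive mechanism induces a per-state vector $(p_i^C)_{i \in C}$ that is feasible for the LP at $C$, and hence its sender value at $C$ is upper bounded by $V^{\sender}(C)$.

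The main obstacle will be executing the one-shot deviation argument carefully in the presence of disclosure, where $\receiver$ obtains additional information from the revealed identities of rejected candidates even after she deviates. One has to verify that the inductive value $V^{\receiver}(C\setminus\{i\})$ correctly captures her best continuation utility on such off-equilibrium histories, so that myopic persuasiveness at each state genuinely suffices; given that $\receiver$ knows $C$ at every point and her only per-round choice is binary, this reduction should go through, but it requires spelling out the full strategy space before the LP-based argument can be declared complete.
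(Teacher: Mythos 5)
Your overall blueprint matches the paper's: backward induction on the subset $C$ of remaining candidates, one LP per non-empty subset, with the previous-round values feeding into the objective. However, your $\HIRE$-persuasiveness constraint is wrong in a way that breaks the optimality claim. You take the receiver's deviation payoff after rejecting a $\HIRE$ at state $(C,i)$ to be $V^{\receiver}(C\setminus\{i\})$, her value under the \emph{optimal persuasive continuation}. But the sender has full commitment power and can (and, for optimality, should) commit to never send $\HIRE$ again once a $\HIRE$ signal has been disobeyed. Under this punishment the receiver gets no further information; her best continuation is an uninformed stopping rule, and because the running average of remaining $\rv$-values is a martingale under uniform-random removal, every such rule yields exactly $\frac{1}{|C|-1}\sum_{j\in C\setminus\{i\}}\rv_j$. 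That is the deviation payoff the paper plugs in. Since $V^{\receiver}(C\setminus\{i\}) \ge \frac{1}{|C|-1}\sum_{j\in C\setminus\{i\}}\rv_j$ (a persuasive mechanism always gives the receiver at least the blind-hire value), your constraint is strictly more restrictive whenever that inequality is strict, so your LP can return a value below the true optimum.

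This error also kills your upper-bound (converse) step. You assert that \emph{any} persuasive mechanism induces an LP-feasible vector at each state, but a persuasive mechanism that employs the punishment described above need only satisfy the constraint with the random-hire deviation payoff, not yours. Hence some persuasive mechanisms — including the genuinely optimal one — may be infeasible for your LP, and the claimed upper bound $V^{\sender}(C)$ does not dominate them. The fix is to replace $V^{\receiver}(C\setminus\{i\})$ in the $\HIRE$ constraint by $\frac{1}{|C|-1}\sum_{j\in C\setminus\{i\}}\rv_j$. Once you do, you will also find (as the paper's Claim~\ref{claim:redundant} shows) that your second ($\NOHIRE$) constraint is implied by the first and can be dropped; keeping it is harmless, but it is not an independent requirement.
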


\begin{proof}
	Suppose we reach the beginning of round $t$ with a set $C$ of $n-t+1$ remaining candidates. For every candidate $i \in C$, suppose the sender sees candidate $i$ in round $t$ and has computed the signaling policy for an optimal persuasive mechanism for rounds $t+1,\ldots,n$ and subset $C \setminus \{i\}$. 
	
	Clearly, in round $t = n$ the sender should set $\sigma_n = \HIRE$ with probability 1. This is the only persuasive signaling scheme. Now suppose we are in round $t < n$. We denote by 
	$$x_i^C = \Pr[\sigma_t = \HIRE \mid i\in C \text{ arrives in round } t].$$
	In case $i$ is not hired although $\sender$ signaled \HIRE, $\sender$ will never signal \HIRE\ again. Let $u^{\sender}_{C \setminus \{i\}}$ be the expected utility of $\sender$ from the optimal persuasive mechanism applied from round $t+1$ onwards with candidate set $C \setminus \{i\}$. Similarly, we define $u^{\receiver}_{C \setminus \{i\}}$ for $\receiver$. Assuming the signaling scheme $\phi$ determined by $x$ is persuasive, then the expected utility obtained by $\sender$ 
	\[ \frac{1}{|C|} \sum_{i \in C} (x_i^{C} \sv_i + (1-x_i^{C})\cdot u^{\sender}_{C \setminus  \{i\}}) \]
	is the objective function of a linear program. In this LP we have the obvious constraints $x_i^C \in [0,1]$, and two constraints on $x$ that ensure persuasiveness. Suppose $\receiver$ gets signal \HIRE. This happens with total probability $\Pr[\sigma_t = \HIRE \mid C] = \sum_{i \in C} x_i^C/|C|$. We assume w.l.o.g.\ that this probability is positive, otherwise there is nothing to prove. The probability that the recommended candidate is $i$ is $p^h_i = \frac{x_i^C}{\sum_{j \in C} x_j^C}$. Upon compliance with the \HIRE\ signal, $\receiver$ gets a utility of $\sum_{i \in C} p^h_i\rv_i$. Upon deviation, the candidate is rejected and $\sender$ stops to signal \HIRE. Thus, the expected utility of $\receiver$ becomes $\sum_{i \in C} p_i^h \cdot \frac{1}{|C|-1}\sum_{j \in C \setminus \{i\}} \rv_j$. Hence, persuasiveness requires
	\begin{equation}
	\label{eq:optRev}
	\begin{array}{rrcl}
	&  \D\sum_{i \in C} p^h_i \rv_i &\ge& \D \sum_{i \in C}p^h_i \cdot \frac{1}{|C|-1} \sum_{j \in C \setminus \{i\}} \rv_j\\
	\Leftrightarrow \; &  \D\sum_{i \in C} x_i^C \rv_i &\ge& \D \sum_{i \in C}x_i^C \cdot \frac{1}{|C|-1} \sum_{j \in C \setminus \{i\}} \rv_j\\
	\Leftrightarrow \; & \D \sum_{i \in C} x_i^C \left(\rv_i - \frac{1}{|C| - 1} \sum_{j \in C \setminus\{i\}} \rv_j\right) &\ge& 0 \\
	\end{array} 
	\end{equation}
	
	Similarly, $\receiver$ should not have an incentive to accept a candidate upon the event of $\sigma_t = \NOHIRE$. We again assume that the probability of this event is positive. The probability that upon a \NOHIRE\ signal the sender sees candidate $i$ is $p^{nh}_i = \frac{1-x_i^C}{\sum_{j \in C} (1-x_j^C)}$. Upon deviation and hiring the candidate, $\receiver$ gets utility $\sum_{i \in C} p^{nh}_i \rv_i$. Upon following the signal, $\receiver$ gets a utility of $\sum_{i \in C} p^{nh}_i u^{\receiver}_{C \setminus \{i\}}$. Hence, persuasiveness requires
	
	\begin{equation}
	\label{eq:optRev2}
	\begin{array}{rrcl}
	&  \D\sum_{i \in C} p^{nh}_i u^{\receiver}_{C \setminus\{i\}} &\ge& \D \sum_{i \in C}p^{nh}_i \cdot \rv_i\\
	\Leftrightarrow \; &  \D\sum_{i \in C} (1-x_i^C) u^{\receiver}_{C \setminus\{i\}} &\ge& \D \sum_{i \in C} (1-x_i^C) \cdot \rv_i
	\end{array} 
	\end{equation}
	
	\begin{claim}
		\label{claim:redundant}
		Constraint \eqref{eq:optRev2} is redundant and implied by~\eqref{eq:optRev}.
	\end{claim}
	
	We show the claim below. Thus, we get the following family of linear programs to determine $u^{\sender}_C$:
	\begin{maxi!}{x}{\frac{1}{|C|} \sum_{i \in C} (x_i^{C} \sv_i + (1-x_i^{C})\cdot u^{\sender}_{C \setminus \{i\}})\label{LP:utilitynested}}{\label{LP:templatenested}}{}
		\addConstraint{\sum_{i \in C} x_i^{C} \left[\rv_i - \frac{1}{|C|-1} \sum_{j \in C\setminus \{i\}} \rv_j \right]}{\ge 0 \label{LP:constraintnested}}
		\addConstraint{x_i^C}{\in [0,1] \text{ for all } i \in C}
	\end{maxi!}
\end{proof}

\begin{proof}[Proof of Claim~\ref{claim:redundant}]
	Every solution $\bm{x}$ that satisfies \eqref{eq:optRev2} for some values for $u^{\receiver}_{C \setminus \{i\}}$ also fulfills it for pointwise larger values. This is clear since with expected utility in subsequent rounds becoming larger, $\receiver$ does not get incentive to deviate from a \NOHIRE\ signal and hire in round $t$. 
	
	Recall that $\receiver$ can always ignore all signals and hire a random candidate. Hence, clearly, $u^{\receiver}_{C \setminus \{i\}} \ge \frac{1}{|C|-1}\sum_{j\in C\setminus\{i\}} \rv_j$. The pointwise smallest values $u^{\receiver}_{C \setminus \{i\}} = \frac{1}{|C|-1}\sum_{j\in C\setminus\{i\}} \rv_j$ imply for the constraint
	
	\begin{equation}
	\label{eq:optRev3}
	\begin{array}{rrcl}
	&  \D\sum_{i \in C} (1-x_i^C) \cdot \frac{1}{|C|-1}\sum_{j\in C\setminus\{i\}} \rv_j &\ge& \D \sum_{i \in C} (1-x_i^C) \cdot \rv_i \\
	%    \Leftrightarrow \; &  \D\sum_{i \in C} x_i^C \left(\rv_i - \frac{1}{|C|-1} \sum_{j \in C\setminus\{i\}} \rv_j \right) &\ge& \D \sum_{i \in C} \rv_i - \frac{1}{|C|-1} \sum_{j \in C\setminus\{i\}} \rv_j \\
	\Leftrightarrow \; &  \D\sum_{i \in C} x_i^C \left(\rv_i - \frac{1}{|C|-1} \sum_{j \in C\setminus\{i\}} \rv_j \right) &\ge& 0
	\end{array} 
	\end{equation}
\end{proof}

	What reduction of utility does $\sender$ suffer when rejected candidates are announced to $\receiver$? The following theorem shows an upper bound of $\frac 12$. 
%	\Martin{Relation to ordinal utility?}
	
	\begin{theorem}
        \label{theo:senderOrdinalReceiverCardinalKnownUtilDisclosureUB}
		For every $\varepsilon > 0$, there is an instance such that every persuasive mechanism in the basic scenario with disclosure guarantees at most a fraction of $\left(\frac{1}{2} + \varepsilon\right)$ of the optimum in the benchmark scenario. This holds for both cardinal as well as ordinal sender utility.
	\end{theorem}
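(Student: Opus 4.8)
The plan is to construct one family of hard instances, indexed by $n$, on which the optimal persuasive mechanism in the basic scenario with disclosure achieves at most $\bigl(\tfrac12+o(1)\bigr)\opt$; since these instances have $\sv_i=0$ for every candidate except $c_\sender$, success probability and expected utility coincide, so the bound applies to ordinal and cardinal sender utility simultaneously.

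\emph{The instance.} Let $c_\sender$ have valuation pair $(\rv,\sv)=(0,1)$, let one ``blocker'' candidate $B$ have $(\rv,\sv)=(1,0)$, and let the remaining $n-2$ ``filler'' candidates have $(\rv,\sv)=(\eta,0)$ for a tiny $\eta=\eta(n)>0$ (values perturbed infinitesimally to keep them distinct if one insists on the disjointness convention); it is crucial that $\eta>0$, so that the fillers sit strictly above $c_\sender$ in receiver value. Then $\mu^\receiver=\tfrac1n\bigl(1+(n-2)\eta\bigr)$, and Proposition~\ref{prop:Pareto} (equivalently, LP~\eqref{eq:basicLP}) gives the benchmark value $\opt=1-\mu^\receiver\ge 1-\tfrac2n$ for $\eta$ small: in the benchmark, $\sender$ recommends $c_\sender$ with probability $1-\mu^\receiver$ and $B$ otherwise.

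\emph{The disclosure bound.} I would invoke Theorem~\ref{thm:disclosureLPs}: the optimal sender value under disclosure equals $u^\sender_{[n]}$, computed by the nested LPs over subsets $C$. Because $B$ and the fillers yield $0$ to $\sender$, recommending any of them is worthwhile only as the ``compensation'' that makes a recommendation of $c_\sender$ persuasive via constraint~\eqref{eq:optRev}, and I would use this to solve the recursion in closed form. Write $\bar\rv_C$ for the average receiver value over $C$. Let $w_m:=u^\sender_C$ when $C=\{c_\sender\}\cup(m-1\text{ fillers})$ (i.e.\ $B$ already revealed): the only candidates above $\bar\rv_C$ are the fillers, each with receiver-surplus $\Theta(1/m)$, so recommending $c_\sender$ with probability $\beta$ forces recommending \emph{all} fillers at probability $\beta$, giving $w_m=\max_{\beta\in[0,1]}\bigl[\tfrac{\beta}{m}+\tfrac{m-1}{m}(1-\beta)w_{m-1}\bigr]$ with $w_1=1$, which solves to $w_m=1/m$. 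Let $v_m:=u^\sender_C$ when $C=\{c_\sender,B\}\cup(m-2\text{ fillers})$: now only $B$ lies above $\bar\rv_C$, so recommending $c_\sender$ with probability $1$ costs only the probability $q_m:=\bar\rv_C/(1-\bar\rv_C)=\tfrac1{m-1}+O(\eta)$ of recommending $B$; the resulting one-parameter LP yields $v_m=\tfrac1m\bigl[1+(1-q_m)w_{m-1}+(m-2)v_{m-1}\bigr]$ with $v_2=\tfrac12$. Substituting $w_{m-1}=1/(m-1)$ and letting $\eta\to0$, a short induction gives $v_m\le\tfrac12+\tfrac1m$; hence, by continuity in $\eta$, for $\eta$ small enough the disclosure optimum is $u^\sender_{[n]}=v_n\le\tfrac12+\tfrac2n$. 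Dividing by $\opt\ge1-\tfrac2n$ yields a ratio below $\tfrac12+\varepsilon$ whenever $n$ is large enough; choosing this $n$ (and the matching $\eta$) proves the theorem.

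The intuition: in the benchmark, $c_\sender$ — which has the \emph{minimum} receiver value — can still be recommended with probability near $1$, by pooling it with $B$. Under disclosure, as soon as $B$ is rejected it becomes public, and no remaining candidate carries enough receiver-surplus to pool $c_\sender$ cheaply (the fillers are only barely above $c_\sender$), so the continuation value collapses to $\Theta(1/|C|)$; effectively $\sender$ profits only when $c_\sender$ arrives before $B$, which happens with probability $\approx\tfrac12$. The main obstacle is the accounting: checking that the reduction of each nested LP to a single active variable is optimal (the signs of the objective coefficients against the one persuasiveness constraint), and controlling the recursion for $v_m$ — which converges to $\tfrac12$ only slowly — together with the $\eta$-dependent corrections to $\bar\rv_C$, so that the final constant is exactly $\tfrac12$ and not something weaker.
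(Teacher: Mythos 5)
Your proposal is correct, but it reaches the $\tfrac12+\varepsilon$ bound by a genuinely different route than the paper. The paper's instance puts $c_\sender$ at receiver value $\tfrac{n-2}{n-1}=\mu^{\receiver}$ (so the benchmark recommends it deterministically and $\opt=1$) and plants a single \emph{low}-value blocker at $(0,0)$: once that blocker is disclosed, the remaining average jumps above $\rv_{c_\sender}$ and constraint~\eqref{LP:constraintnested} forces $x_i^C = 1$ for every remaining candidate, so the mechanism must fire in the very next round and catches $c_\sender$ only with probability $1/|C|$. The bound then follows from a short probabilistic event argument (candidate $2$ precedes candidate $1$, and not both land in the last $\sqrt n$ rounds), with no need to solve the dynamic program. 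Your instance is the mirror image -- $c_\sender$ at the \emph{bottom} of the receiver's ranking, pooled against a single top candidate $B$ whose disclosure removes the only ``currency'' for pooling -- and you instead solve the nested LPs of Theorem~\ref{thm:disclosureLPs} in closed form via the two interleaved recursions $w_m$ and $v_m$. I checked the pieces you flag as the main obstacle: the sign analysis does reduce each LP to the claimed one-parameter form ($x_{c_\sender}=1$, fillers at $0$, $x_B=q_m$ for $v_m$; $\sum x_i=(m-1)\beta$ with the objective independent of $\beta$ for $w_m$), your $q_m=\bar\rv_C/(1-\bar\rv_C)$ is exactly right, and the induction $v_m\le\tfrac12+\tfrac1m$ closes because it reduces to $\tfrac{m(m-2)}{(m-1)^2}\le 1$; the $\eta$-perturbation is harmless for fixed $n$ by continuity of the LP values. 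What each approach buys: the paper's argument is shorter and needs only the single structural fact that disclosure of the blocker forces an immediate $\HIRE$; yours is heavier on bookkeeping but exercises the DP characterization directly and pins down the exact asymptotics $\tfrac12+\Theta(1/n)$ of the disclosure optimum on the instance. Both isolate the same phenomenon -- the sender effectively wins only when $c_\sender$ arrives before the critical candidate, which happens with probability $\approx\tfrac12$.
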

	\begin{proof}
		Consider the following class of instances. Candidate $1$ has value $(\rv_1,\sv_1) = (\frac{n-2}{n-1},1)$, candidate $2$ has value $(\rv_2,\sv_2) = (0,0)$, and candidates $i = 3,\ldots,n$ have values $(\rv_i,\sv_i) = (1,0)$. Hence, this can be seen as an instance for ordinal as well as cardinal sender utility. In the remainder of the proof, we will only use ``expected utility''.
		
		In the benchmark case, the Pareto mechanism would pick $a = b = 1$ and signal \HIRE\ if and only if candidate 1 arrives. This would yield an expected utility of 1 for $\sender$ and $\frac{n-2}{n-1}$ for $\receiver$.
		
		For the announcement case, constraint \eqref{LP:constraintnested} implies that $x_2^C = 0$ is optimal for all $C \ni 2$, i.e., an optimal mechanism never recommends to hire 2. $\sender$ has no value for candidates other than 1, so w.l.o.g.\  the optimal mechanism will always try to recommend 1 with $x_1^C = 1$ for every $C \ni 1$. For every subset $C \not\ni 1$, the utility is 0, and we can assume that $x_i^C = 1$ for all $i \in C, i \ne 2$. 
		
		Let $k = \sqrt{n}$, and consider the event $E$ that the following two conditions are satisfied: (1) at least one of candidates $\{1,2\}$ arrives within the first $n-k$ rounds, and (2) candidate 2 arrives before candidate 1. Suppose we draw two arrival rounds for candidates 1 and 2 uniformly at random. The probability that we draw both rounds from the last $k$ rounds is $p = \frac{k}{n}\cdot\frac{k-1}{n-1}$. For any set of two rounds drawn uniformly at random, the probability that 2 arrives in the earlier one is 1/2. Thus,
		$$\Pr[E] = \frac{1}{2} - \frac{k(k-1)}{2n(n-1)}\enspace.$$ 
		Suppose candidate 2 arrives in round $t$. Since 2 is always rejected, consider any subsequent round $t+1$. The set $C$ of remaining candidates consists only of candidate 1 and a subset of candidates $3,\ldots,n$. Since the latter are symmetric, the optimal mechanism uses the same value $x_i^C = x_j^C = x_{-1}^C$ for all candidates $i,j \in C \setminus \{1\}$. The optimal mechanism needs to satisfy \eqref{LP:constraintnested}
		$$ (|C|-1) \cdot x_{-1}^C \cdot \left(1 - \frac{1}{|C|-1}  \cdot \left(|C|-2 + \frac{n-2}{n-1}\right)\right) + x_1^C \cdot \left(\frac{n-2}{n-1} - \frac{1}{|C|-1} \cdot (|C|-1)\right) \ge 0\enspace.$$
		Using the assumption that $x_1^C = 1$ and solving for $x_{-1}^C$ implies
		$$ x_{-1}^C \ge 1\enspace.$$
		Hence, after candidate 2 is rejected in round $t$, there is an optimal mechanism that signals \HIRE\ with probability 1 in round $t+1$. The probability that candidate 1 is hired in this round is only $1/|C| \le 1/{(k-1)}$. Note that this upper bound holds even conditioned on event $E$, since the round in which 1 arrives remains uniformly distributed among the remaining ones. Thus, the overall expected utility of the optimal mechanism in the disclosure case is upper bounded by
		$$ (1-\Pr[E])\cdot 1 + \Pr[E] \cdot \frac{1}{k-1} \le \frac{1}{2} + \frac{k(k-1)}{2n(n-1)} + \frac{1}{2(k-1)} - \frac{k}{2n(n-1)} = \frac{1}{2} + o(1)$$
		since $k = \sqrt{n}$.
	\end{proof}
	
	We now turn to our polynomial-time approximation mechanism for this scenario. It is another variant of the Pareto mechanism, termed Shrinking Pareto mechanism (see Algorithm~\ref{algo:ShrinkingPareto}). The Shrinking Pareto mechanism applies the Pareto procedure adaptively to the set of remaining candidates $R_t$. The mechanism signals $\HIRE$ in round $t$ if and only if the candidate chosen by the Pareto procedure on $R_t$ is the one arriving in round $t$.
	\begin{algorithm}[t]
		\caption{\label{algo:ShrinkingPareto} Shrinking Pareto Mechanism}
		\DontPrintSemicolon
		\KwIn{Pairs of candidate values $(\rv_i, \sv_i)_{i \in [n]}$ }
		$R_1 \leftarrow [n]$ \;
		\For{$t=1$ to $n$}{$c_t \leftarrow $ candidate chosen by Pareto procedure on the set $R_t$ \\
			\lIf{$c_t = \stateON_t$}{Signal $\HIRE$ and end mechanism}
			\lElse{$R_{t+1} \leftarrow R_t \setminus \{\theta_t\}$ and signal $\NOHIRE$}}
	\end{algorithm}
	
    We establish three properties of the Shrinking Pareto mechanism. We show that the mechanism is persuasive, it yields a $(1/3-o(1))$-approximation for cardinal sender utility, and a $(1/2 - o(1))$-approximation for ordinal sender utility. 

	\begin{lemma}\label{lem:shrinkingParetoIC}
		The Shrinking Pareto mechanism is persuasive in the basic scenario with disclosure.
	\end{lemma}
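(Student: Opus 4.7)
The plan is to verify, by backward induction on the number of remaining candidates, that obedience is a best response at every history. Fix any round $t$ and remaining set $R_t$ (which the receiver knows by disclosure), and write $M := \mu^{\receiver}(R_t)$. Since arrivals are uniform and the sender has used no information about $\stateON_t$ before round $t$, $\stateON_t$ is uniform on $R_t$ from the receiver's viewpoint. The Pareto procedure on $R_t$ determines a lottery over $(a,b)$ with $H := \alpha\rv_a + (1-\alpha)\rv_b \ge M$. A short direct calculation gives $\Pr[\HIRE \mid R_t] = 1/|R_t|$ and $\Ex[\rv_{\stateON_t}\mid \HIRE, R_t] = H$, and the total-expectation identity $H/|R_t| + ((|R_t|-1)/|R_t|)\,N = M$ yields $N := \Ex[\rv_{\stateON_t}\mid \NOHIRE, R_t] \le M \le H$. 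These three quantities drive everything that follows.

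For the $\NOHIRE$ signal I compare the payoff of deviating (hire now, value $N$) with the payoff of continuing obediently. The inductive hypothesis is that on every strictly smaller remaining set $R'$, obedient play yields expected receiver utility at least $\mu^{\receiver}(R')$. Unrolling one round of obedient play and invoking the hypothesis, together with the identity $\Ex[\mu^{\receiver}(R_{t+1})\mid \NOHIRE, R_t] = (|R_t|M - N)/(|R_t|-1)$, shows that the continuation value after $\NOHIRE$ is at least $(|R_t|M - N)/(|R_t|-1) \ge N$ (the final inequality collapsing to $M \ge N$), so not hiring weakly dominates hiring. The same algebra gives $V_t(R_t) \ge M + (H-N)/|R_t| \ge M$, closing the induction; the base case $|R_t|=1$ is immediate since the unique remaining candidate is recommended deterministically.

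For the $\HIRE$ signal, obedience yields $H$. Because the mechanism ends upon $\HIRE$ and is committed to recommending nothing further after a refusal, any receiver deviation reduces to choosing a stopping time $T \ge t+1$ at which to hire, using only the identities of candidates revealed as rejected. From round $t+1$ onwards each $\stateON_{t'}$ is uniform on $R_{t'}$ from the receiver's viewpoint, so $\{\mu^{\receiver}(R_{t'})\}_{t' \ge t+1}$ is a martingale, and optional stopping bounds the value of any deviation by $\Ex[\mu^{\receiver}(R_{t+1})\mid \HIRE\text{ at }t] = (|R_t|M - H)/(|R_t|-1) \le M \le H$. Not hiring at all yields $0 \le H$, so $\HIRE$ is also obeyed.

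The main conceptual obstacle is tracking the receiver's post-deviation information correctly: after a refused $\HIRE$, the mechanism's commitment to silence is precisely what turns $\mu^{\receiver}(R_{t'})$ into a martingale and lets a single optional-stopping argument kill every deviation uniformly; on the $\NOHIRE$ side, the induction only closes thanks to the tight total-expectation identity linking $H$ and $N$, which reduces per-round persuasiveness to the single guarantee $H \ge M$ supplied by the Pareto procedure.
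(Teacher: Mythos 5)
Your proof is correct and arrives at the same essential fact the paper uses — that the Pareto lottery on $R_t$ guarantees $\Ex[\rv_{\theta_t}\mid \HIRE,R_t]\ge \mu^{\receiver}(R_t)$ — but you get there by a self-contained backward induction instead of citing the LP characterization. The paper's proof simply verifies that the Shrinking Pareto signaling probabilities satisfy constraint~\eqref{LP:constraintnested} of the nested linear programs from Theorem~\ref{thm:disclosureLPs} (a computation that amounts to $H\ge M$ in your notation) and then invokes that theorem, together with Claim~\ref{claim:redundant}, to conclude persuasiveness on both the $\HIRE$ and $\NOHIRE$ sides. You instead rebuild the validity of that characterization from scratch: you maintain the inductive invariant $V_t(R_t)\ge\mu^{\receiver}(R_t)$, use the total-expectation identity to handle the $\NOHIRE$ side, and make explicit the martingale/optional-stopping argument that in the paper is left implicit (the paper computes the deviation value after a refused $\HIRE$ as though the receiver hires exactly one round later, without saying why delaying further cannot help; your martingale step is precisely the justification). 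The trade-off is that your route is more transparent and doesn't lean on Theorem~\ref{thm:disclosureLPs}, at the cost of duplicating work that theorem already packages.
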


\begin{proof}
	Recall from our analysis above that the probabilities for a signal $\sigma_t = \HIRE$ resulting from the Pareto procedure are chosen as a linear combination, which yields at least an expected utility of $\mu^{\receiver}$ to $\receiver$. Formally, consider the decision of $\receiver$ in round $t$ conditioned on a subset $R_t$ of remaining candidates. Suppose the mechanism sends \HIRE. Then, the expected utility for $\receiver$ is
	$$ \Ex[ \rv(\stateON_t) \mid \sigma_t = \HIRE \wedge R_t ] \ge \frac{1}{|R_t|} \sum_{i \in R_t} \rv_i\enspace.$$
	We use $x_i^{R_t} = \Pr[\sigma_t = \HIRE \mid R_t \wedge (\stateON_t = i \in R_t)]$, i.e., the probability that the set $R_t$ remains to arrive in rounds $t,\ldots,n$ and $i \in R_t$ arrives in round $t$. For the Pareto procedure we have $\sum_{i \in R_t} x_i^{R_t} = 1$, thus
	$$ \frac{1}{|R_t|} \sum_{i \in R_t} \rv_i = \frac{1}{|R_t|} \sum_{i \in R_t} x_i^{R_t} \sum_{j \in R_t} \rv_j = \frac{1}{|R_t|} \sum_{i \in R_t} x_i^{R_t} \sum_{j \in R_t \setminus\{i\}} \rv_j + \frac{1}{|R_t|}\sum_{i \in R_t} x_i^{R_t} \rv_i\enspace.$$
	Therefore
	$$ \sum_{i \in R_t} x_i^{R_t} \rv_i \ge \frac{1}{|R_t|-1} \sum_{i \in R_t} x_i^{R_t} \sum_{j \in R_t \setminus \{i\}} \rv_j\enspace.$$
	The signaling probabilities $x_i^{R_t}$ of the Shrinking Pareto mechanism represent a feasible solution for every linear program~\eqref{LP:templatenested}. Hence, the mechanism is persuasive.
\end{proof}

    The following theorem is our main result for cardinal sender utility. The Shrinking Pareto mechanism guarantees a constant-factor approximation of the optimal utility achieved in the benchmark scenario. It is an interesting open question to prove a tight bound in this scenario.

    \begin{theorem}\label{theo:senderCardinalReceiverCardinalKnownUtilDisclosure}
	   In the basic scenario with disclosure, the Shrinking Pareto mechanism scenario obtains a $\left(\frac{1}{3} - o(1)\right)$-approximation of the optimum in the corresponding benchmark scenario.
    \end{theorem}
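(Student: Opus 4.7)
The plan is to analyze, for each round $t$ and each possible remaining set $R_t$, the quantity $V_t(R_t)$ defined as the expected sender utility of the Shrinking Pareto mechanism starting at round $t$, conditional on the remaining candidate set being $R_t$ and no HIRE having been signaled before. The main work is to prove by backward induction on $|R_t|$ that $V_t(R_t) \ge \tfrac{1}{3}\,\opt(R_t)$, where $\opt(R_t)$ denotes the optimal sender utility of the Pareto procedure restricted to the subset $R_t$. Applying this at the initial state $V_1([n])$ then yields the claimed $(1/3 - o(1))$-approximation.

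First I would set up the one-step recurrence. Given $R_t$, the Pareto procedure computes a lottery supported on at most two candidates $\{a,b\}\subseteq R_t$ with probabilities $p_a^{R_t}+p_b^{R_t}=1$, achieving expected value $\opt(R_t)$ for $\sender$. Because $\theta_t$ is uniform on $R_t$ independent of the internal randomness of the procedure, with probability $1/|R_t|$ the lottery's pick matches the arriving candidate, giving expected sender value $\opt(R_t)$. Otherwise (in particular, whenever $\theta_t=i\in R_t$ and the lottery did not select $i$, which happens with joint probability $(1-p_i^{R_t})/|R_t|$), the remaining set becomes $R_t\setminus\{i\}$ and the mechanism proceeds. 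This yields
\[
V_t(R_t) \;=\; \frac{\opt(R_t)}{|R_t|} \;+\; \frac{1}{|R_t|}\sum_{i \in R_t}(1-p_i^{R_t})\,V_{t+1}(R_t\setminus\{i\}).
\]
Here a key point to verify is that conditioning on a non-HIRE in round $t$ and on the identity of $\theta_t$ does not disturb the uniform-at-random arrival of the remaining candidates in $R_{t+1}$, so the recursion is well-defined.

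Next I would carry out the induction. The base case $|R_t|\le 2$ is handled by direct calculation (the mechanism can simply be analyzed explicitly). For $|R_t|=k\ge 3$, using the inductive hypothesis $V_{t+1}(S)\ge \tfrac{1}{3}\opt(S)$ and dropping the (nonnegative) contributions from $i\in\{a,b\}$, it remains to bound
\[
\sum_{i \in R_t}(1-p_i^{R_t})\opt(R_t\setminus\{i\}) \;\ge\; \sum_{i\in R_t\setminus\{a,b\}}\opt(R_t\setminus\{i\}) \;\ge\; (k-3)\,\opt(R_t),
\]
where the last inequality is precisely Lemma~\ref{lem:lossPerRoundCard} applied to $R_t$. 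Plugging this into the recurrence gives
\[
V_t(R_t)\;\ge\;\frac{\opt(R_t)}{k}+\frac{1}{k}\cdot\frac{1}{3}\cdot(k-3)\,\opt(R_t)\;=\;\frac{1}{3}\,\opt(R_t),
\]
so the induction closes. Specializing to $t=1$ and $R_1=[n]$ gives $V_1([n])\ge \tfrac{1}{3}\opt$, and the $o(1)$ term absorbs the boundary rounds where $|R_t|\le 2$.

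I expect the main obstacle to be the justification of the recurrence itself: one has to argue carefully that conditioning on the event ``no HIRE in round $t$ and $R_{t+1}=R_t\setminus\{i\}$'' preserves the property that future arrivals in $R_{t+1}$ are uniformly ordered, so that $V_{t+1}(R_{t+1})$ as defined is the correct continuation value. Once this decoupling of the Pareto lottery from the arrival order is established -- using the independence noted in the analogous argument of Lemma~\ref{lem:signalT} -- the remainder is the clean algebraic cancellation above, driven by Lemma~\ref{lem:lossPerRoundCard}.
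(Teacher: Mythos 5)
Your proposal is correct and follows the paper's approach in essence: the same one‑step recursion (decompose on the arriving candidate, use that for $i\notin\{a,b\}$ the HIRE probability is zero, and drop the nonnegative contributions from $a,b$), and the same key ingredient, Lemma~\ref{lem:lossPerRoundCard}, which provides the bound $\sum_{i\neq a,b}\opt_{-\{i\}}\ge(k-3)\opt$. The paper unrolls the recursion explicitly down to the last rounds and then sums $\sum_{i=1}^{n-3}(n-i)(n-i-1)/(n(n-1)(n-2))=\tfrac13-O(1/n^3)$, whereas you close the same recursion by backward induction on $|R_t|$, yielding the identity $\tfrac{1}{k}+\tfrac{k-3}{3k}=\tfrac13$ exactly. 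Your induction is a touch cleaner and in fact delivers $V_1([n])\ge\tfrac13\opt$ with no lower‑order loss (your final remark about the $o(1)$ term absorbing boundary rounds is unnecessary, since the base cases $|R_t|\le 2$ already satisfy $V=\opt(R_t)\ge\tfrac13\opt(R_t)$). The justification of the recurrence that you flag as the main obstacle is indeed the point to check, and the argument you sketch is right: the Pareto procedure's internal randomization and the arrival identity $\theta_t$ are independent of the arrival order within $R_t\setminus\{\theta_t\}$, so conditioning on a NOHIRE in round $t$ leaves the remaining arrivals uniformly ordered; this is the same independence used in the analysis of the Growing Pareto mechanism.
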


    \begin{proof}
	Consider an instance of the benchmark scenario with $n$ candidates. We use $\opt$ to denote the optimal expected utility of $\sender$ achieved by the Pareto mechanism. Consider the same instance in the public announcement scenario, and let $\ap$ be the expected utility of $\sender$ achieved by the Shrinking Pareto mechanism.
	
	We denote by $\stateON_t$ the candidate arriving in round $t$. Consider the first round $t=1$ and the application of the Pareto procedure. Obviously, $\sigma_1 = \HIRE$ with probability $\frac{1}{n}$, and the expected utility $\Ex[\sv_{\stateON_1} \mid \sigma_1 = \HIRE] = \opt$. 
	
	Now suppose $\sigma_1 = \NOHIRE$. For every subset $S \subseteq [n]$, let $\opt_{-S}$ and $\ap_{-S}$ be the expected utility of $\sender$ achieved by the (Shrinking) Pareto mechanism on the benchmark (disclosure) instance with candidates $[n] \setminus S$, respectively. For the expected value of the Shrinking Pareto mechanism we can derive the following recursive lower bound
	\begin{align*}
	&\frac{1}{n} \sum_{i \in [n]} \Pr[\stateON_1 = i, \sigma_1 = \HIRE] \cdot \sv_i + \Pr[\stateON_1 = i, \sigma_1 = \NOHIRE] \cdot \ap_{-\{i\}} \\
	&= \frac{\opt}{n} + \frac{1}{n} \left((1-\alpha)\ap_{-\{a\}} + \alpha \ap_{-\{b\}} + \sum_{i \neq a,b} \ap_{-\{i\}} \right)\\
	&\ge \frac{\opt}{n} + \frac{1}{n} \sum_{i \neq a,b} \ap_{-\{i\}}\\
	&\ge \frac{\opt}{n} + \frac{1}{n} \sum_{i \neq a,b} \left(\frac{1}{n-1} \sum_{j \in [n] \setminus\{i\}} \Pr[\stateON_2 = j, \sigma_2 = \HIRE] \cdot \sv_j + \Pr[\stateON_2 = j, \sigma_2 = \NOHIRE] \cdot \ap_{-\{i,j\}}\right)\\
	&\ge \frac{\opt}{n} + \frac{1}{n} \left(\sum_{i \neq a,b} \frac{\opt_{-i}}{n-1} + \frac{1}{n-1} \sum_{i \in [n]\setminus\{a,b\}} \sum_{j \in [n]\setminus\{i\}} \Pr[\stateON_2 = j,  \sigma_2 = \NOHIRE] \cdot \ap_{-\{i,j\}} 
	\right)\\
	&\ge \frac{\opt}{n} + \sum_{i \neq a,b} \frac{\opt_{-\{i\}}}{n(n-1)} + \sum_{i \in [n]\setminus\{a,b\}} \sum_{\substack{j \in [n]\setminus\{i\} \\ j \neq a_{-i},b_{-i}}} \frac{\opt_{-\{i,j\}}}{n(n-1)(n-2)} + \ldots 
	\end{align*}
	where $a_{-i}, b_{-i}$ are the candidates $a$ and $b$ identified by the Pareto procedure when applied to the candidate set $[n] \setminus \{i\}$. Applying the lower bound on $\sum_{i \neq a,b} \opt_{-i}$ from Lemma~\ref{lem:lossPerRoundCard} repeatedly in the formula above, we obtain
	\begin{align*}
	&\frac{\opt}{n} + \sum_{i \neq a,b} \frac{\opt_{-\{i\}}}{n(n-1)} + \sum_{i \in [n]\setminus\{a,b\}} \sum_{\substack{j \in [n]\setminus\{i\} \\ j \neq a_{-i},b_{-i}}} \frac{\opt_{-\{i,j\}}}{n(n-1)(n-2)} + \ldots \\
	%&\ge \frac{\opt}{n} + \frac{n-3}{n(n-1)}\opt + \sum_{i \in [n]\setminus\{a,b\}} \sum_{\substack{j \in [n]\setminus\{i\} \\ j \neq a_{-i},b_{-i}}} \frac{\opt_{-\{i,j\}}}{n(n-1)(n-2)} + \ldots \\ 
	&\ge \frac{\opt}{n} + \frac{n-3}{n(n-1)}\opt + \sum_{i \in [n]\setminus\{a,b\}} \frac{n-4}{n(n-1)(n-2)} \opt_{-\{i\}} + \ldots\\
	%&\ge \frac{\opt}{n} + \frac{n-3}{n(n-1)}\opt + \frac{(n-4)(n-3)}{n(n-1)(n-2)} \cdot \opt + \frac{(n-5)(n-4)(n-3)}{n(n-1)(n-2)(n-3)} \cdot \opt + \ldots\\
	&\ge \opt \cdot \frac{1}{n(n-1)(n-2)} \left(\sum_{i=1}^{n-3} (n-i)(n-i-1)\right)\\
	%&= \opt \cdot \frac{1}{3}\cdot \frac{n^3 - 3n^2 + 2n - 6}{n^3 - 3n^2 + 2n}\\
	&= \opt \cdot  \left(\frac{1}{3} - \frac{2}{n^3 - 3n^2 + 2n}\right)\enspace.
	\end{align*}
    \end{proof}	

    For the ordinal case, we show a bound that asymptotically matches the upper bound shown in Theorem~\ref{theo:senderOrdinalReceiverCardinalKnownUtilDisclosureUB}. In contrast to the cardinal case, this implies that the approximation guarantee of our mechanism is asymptotically optimal. The proof is conceptually similar to the previous theorem, but instead of Lemma~\ref{lem:lossPerRoundCard} relies on the improved guarantee in Lemma~\ref{lem:lossPerRoundOrd}. 

	\begin{theorem}\label{theo:senderOrdinalReceiverCardinalKnownUtilDisclosure}
		The Shrinking Pareto mechanism yields a success probability of at least $\left(\frac{1}{2}-o(1)\right)$ in the basic scenario with disclosure. 
	\end{theorem}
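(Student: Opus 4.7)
The plan is to mirror the proof of Theorem~\ref{theo:senderCardinalReceiverCardinalKnownUtilDisclosure} while exploiting the sharper estimate of Lemma~\ref{lem:lossPerRoundOrd}. The argument requires one extra recursive branch that arises specifically for ordinal sender utility.

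First I would set up the one-step recursion for the Shrinking Pareto mechanism on a candidate set $R$ of size $k$ containing $c_\sender$. For ordinal sender utility, $\sv_{c_\sender}=1$ and $\sv_i = 0$ for $i\neq c_\sender$, so in the non-degenerate case $\rv_{c_\sender} < \mu^{\receiver}$ the Pareto procedure outputs $a = c_\sender$ and $b = c_\receiver(R)$ with mixing probability $\alpha$, and a short computation gives $\alpha = \opt(R)$. Tracing round $1$: when $c_\sender$ arrives and is recommended (joint probability $\alpha/k$) the sender gains $1$; when $c_\receiver$ arrives but $c_\sender$ is the recommendation (joint probability $\alpha/k$), $c_\receiver$ is rejected, disclosed, and the mechanism continues on $R\setminus\{c_\receiver\}$; every other arriving candidate $i\neq c_\sender,c_\receiver$ (probability $1/k$) triggers $\NOHIRE$ and a recursion on $R\setminus\{i\}$; all remaining events contribute $0$. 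Summing yields
\[
    \ap(R) \;\geq\; \frac{\opt(R)}{k}\bigl(1 + \ap(R \setminus \{c_\receiver\})\bigr) + \frac{1}{k}\sum_{i \in R \setminus \{c_\sender, c_\receiver\}} \ap(R \setminus \{i\}).
\]
The new feature compared with the cardinal recursion is the extra summand $\tfrac{\opt(R)}{k}\,\ap(R\setminus\{c_\receiver\})$; the degenerate case $a=b=c_\sender$ gives $\ap(R)=\opt(R)=1$ and needs no argument.

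Next I would prove by induction on $k$ that $\ap(R)\ge \bigl(\tfrac{1}{2}-\tfrac{1}{k}\bigr)\opt(R)$ for every $R$ with $c_\sender \in R$, the base cases $k \leq 2$ being trivial since the right-hand side is non-positive. In the inductive step, applying the hypothesis to every $\ap$-term on the right-hand side produces the coefficient $\bigl(\tfrac{1}{2}-\tfrac{1}{k-1}\bigr)/k$ in front of the quantity
\[
    \opt(R)\cdot \opt(R \setminus \{c_\receiver\}) + \sum_{i \in R \setminus \{c_\sender, c_\receiver\}} \opt(R \setminus \{i\})\enspace,
\]
which is precisely what Lemma~\ref{lem:lossPerRoundOrd} bounds below by $\opt(R)\bigl(k-2-\tfrac{1}{k-1}\bigr)$. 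Substituting and performing a routine simplification of
\[
    \frac{1}{k} + \left(\frac{1}{2}-\frac{1}{k-1}\right)\frac{k-2-\tfrac{1}{k-1}}{k}
\]
confirms that it is at least $\tfrac{1}{2}-\tfrac{1}{k}$ (the residue reduces to $\tfrac{1}{2k(k-1)} + \tfrac{1}{k(k-1)^2} \ge 0$), closing the induction. Evaluating at $R=[n]$ then gives $\ap\ge \bigl(\tfrac{1}{2}-\tfrac{1}{n}\bigr)\opt = \bigl(\tfrac{1}{2}-o(1)\bigr)\opt$, which matches the upper bound of Theorem~\ref{theo:senderOrdinalReceiverCardinalKnownUtilDisclosureUB}.

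The main subtlety is the tight coupling between the recursion and Lemma~\ref{lem:lossPerRoundOrd}: the factor $\opt(R)$ multiplying $\opt(R\setminus\{c_\receiver\})$ on the left-hand side of the lemma must line up exactly with the coefficient $\opt(R)/k$ in the recursion, and this match occurs because recommending $c_\sender$ with probability $\alpha=\opt(R)$ is precisely the event that causes $c_\receiver$ to be rejected and disclosed. Without this alignment the induction would not close at $\tfrac{1}{2}$ and would only yield a weaker constant, explaining the gap to the $\tfrac{1}{3}$ obtained in the cardinal case where the analogous extra branch and coupling are absent.
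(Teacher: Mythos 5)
Your argument is correct and takes essentially the same route as the paper: you set up the identical one-step recursion for the Shrinking Pareto mechanism (the key observation that the sender-optimal branch recommending $c_\sender$ is exactly the event that rejects and discloses $c_\receiver$, making $\alpha = \opt(R)$ the coefficient of the $\ap(R\setminus\{c_\receiver\})$ term), and you invoke Lemma~\ref{lem:lossPerRoundOrd} as the paper does. The only difference is bookkeeping: the paper unrolls the recursion level by level and sums a telescoping series, arriving at the slightly sharper constant $\frac{1}{2}-\frac{1}{2n}$, while you package the same estimate into a clean induction $\ap(R) \ge (\frac{1}{2}-\frac{1}{|R|})\opt(R)$ that closes with a non-negative residue $\frac{1}{2k(k-1)}+\frac{1}{k(k-1)^2}$; your version is arguably easier to verify and the minor loss to $\frac{1}{2}-\frac{1}{n}$ is immaterial for the $\frac{1}{2}-o(1)$ statement. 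One small point worth making explicit in a write-up: the induction is over all size-$k$ subsets $R$ containing $c_\sender$, and Lemma~\ref{lem:lossPerRoundOrd} must be read as a statement about an arbitrary candidate pool (not literally $[n]$), which holds since its proof is instance-agnostic in that sense.
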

	\begin{proof}
		Let $SP$ be the success probability from the Shrinking Pareto mechanism.
		We show that
		\[ SP \ge \opt\cdot \left(\frac{1}{2} - \frac{1}{2n}\right)\enspace.\]
		The utility obtained in a round depends on the current candidate. With probability $\frac{1}{n}$, the current candidate is $a$. The signal $\HIRE$ is sent with probability $\alpha = \opt$. With probability $1-\opt$, $\NOHIRE$ is sent.
		Analogously, if the current candidate is $b$, $\NOHIRE$ is sent with probability $\opt$ while the signal is $\HIRE$ with probability $1-\opt$. Any other candidate is guaranteed to get a signal $\NOHIRE$. In the following, we denote by $SP_{-M}$ for any subset $M$ of candidates the expected utility from the Shrinking Pareto mechanism for the instance $[n] \setminus M$. We apply Lemma~\ref{lem:lossPerRoundOrd} recursively and obtain
		\begin{align*}
		SP &= \frac{1}{n}\opt + \frac{1}{n}\opt \cdot SP_{-\{c_\receiver\}} + \frac{1}{n} \sum_{i \ne c_\sender,c_\receiver} SP_{-\{i\}} \\
		&= \frac{\opt}{n} + \frac{\opt}{n} \left[ \frac{1}{n-1}\opt_{-\{c_\receiver\}} + \frac{1}{n-1}\opt_{-\{c_\receiver\}}\cdot SP_{-\{c_\receiver,c_\receiver'\}} + \frac{1}{n-1} \sum_{i \ne c_\sender,c_\receiver,c_\receiver'} SP_{-\{c_\receiver,i\}} \right] \\*
		&\hspace{25pt} + \frac{1}{n} \sum_{i \ne c_\sender,c_\receiver} \left[\frac{1}{n-1}\opt_{-\{i\}} + \frac{1}{n-1} \opt_{-\{i\}} SP_{-\{i,c_\receiver\}} + \frac{1}{n-1} \sum_{j \ne c_\sender,c_\receiver,i} SP_{-\{i,j\}}\right] \\
		&= \frac{\opt}{n} + \underbrace{\frac{\opt}{n} \cdot \frac{\opt_{-\{c_\receiver\}}}{n-1} + \frac{1}{n}\sum_{i \ne c_\sender,c_\receiver} \frac{\opt_{-\{i\}}}{n-1}}_{\ge \frac{1}{n(n-1)} \cdot \opt \cdot \left(n-2-\frac{1}{n-1}\right)} + \frac{\opt}{n}\cdot\frac{\opt_{-\{c_\receiver\}}}{n-1} \cdot SP_{-\{c_\receiver,c_\receiver'\}} \\
		&\hspace{25pt} + \frac{\opt}{n}\frac{1}{n-1} \sum_{i \ne c_\sender,c_\receiver,c'_\receiver} SP_{-\{c_\receiver,i\}} + \frac{1}{n} \sum_{i \ne c_\sender,c_\receiver} \frac{1}{n-1} \opt_{-\{i\}} \cdot SP_{-\{i,c_\receiver\}} \\
		&\hspace{25pt} + \frac{1}{n}\frac{1}{n-1} \sum_{i \ne c_\sender,c_\receiver} \sum_{j \ne c_\sender,c_\receiver,i} SP_{-\{i,j\}} \\
		&\ge \frac{\opt}{n} + \frac{1}{n}\frac{1}{n-1} \cdot \left(n-2-\frac{1}{n-1}\right)\cdot \opt + \frac{\opt}{n}\cdot\frac{\opt_{-\{c_\receiver\}}}{n-1} \cdot SP_{-\{c_\receiver,c_\receiver'\}} \\
		&\ge \frac{\opt}{n}\left[1 + \frac{1}{n-1} \left(n-2-\frac{1}{n-1}\right) + \frac{1}{n-1}\frac{1}{n-2} \cdot \left( n-2-\frac{1}{n-1}\right)\left( n-3-\frac{1}{n-2}\right) + \dots \right] +  \dots \\
		%		&\ge \frac{\opt}{n} \sum_{k=0}^{n-2} \prod_{i=1}^k \frac{n-i-1-\frac{1}{n-i}}{n-i} \\
		%		&\ge \frac{\opt}{n} \sum_{k=0}^{n-2} \prod_{i=1}^k \frac{n-i-2}{n-i-1} \hspace{180pt} \text{ since } \frac{n-i-1-\frac{1}{n-i}}{n-i} \ge \frac{n-i-2}{n-i-1}\\
		%		&= \frac{\opt}{n} \sum_{k=0}^{n-2} \frac{n-2-k}{n-2} \\
		%		&= \frac{\opt}{n} \left[ (n-1) - \sum_{k=0}^{n-2} \frac{k}{n-2} \right] \\
		%		&= \frac{\opt}{n} \left[ (n-1) - \frac{(n-2)(n-1)}{2(n-2)} \right] \\
		%		&= \frac{\opt}{n} \cdot \frac{n-1}{2} \\
		&\ge \opt\left(\frac{1}{2} - \frac{1}{2n}\right) \enspace. %Not only \ge but =, just for clarification due to the omission
		\end{align*}
	\end{proof}

\subsection{Secretary Scenario with Disclosure}
    Finally, consider the secretary scenario with disclosure and cardinal receiver utility. In this scenario, $\sender$ does not have any information on the valuation pairs. Moreover, the rejected candidates are revealed to $\receiver$. Initially, $\receiver$ has the same information as usual (knows all valuation pairs, knows they arrive in random order, knows mechanism $\phi$). At the end of each round with a rejection, $\receiver$ also learns the valuation pair of the candidate she just rejected.

    Obviously, $\sender$ can apply the \emph{trivial mechanism}: Recommending a candidate chosen uniformly at random. This mechanism is persuasive. It achieves a success probability $1/n$ and an expected utility of $1/n \cdot \sv_{\max}$, a trivial lower bound.

    We complement this insight with a strong upper bound for a sender with ordinal utility. The subsequent Corollary~\ref{cor:senderCardinalReceiverCardinalSecretaryDisclosure} shows a similar bound when the sender has cardinal utility.

    \begin{theorem}\label{theo:senderOrdinalReceiverCardinalSecretaryDisclosure}
	   In the secretary scenario with disclosure, there is no persuasive mechanism that guarantees $\sender$ a success probability greater than $\frac{2}{n} \cdot \opt$, where $\opt$ is the success probability in the corresponding benchmark instance.
    \end{theorem}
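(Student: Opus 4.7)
The plan is to exhibit a hard instance and leverage the restrictions of the secretary scenario to sharpen the $\tfrac{1}{2}$-style bound of Theorem~\ref{theo:senderOrdinalReceiverCardinalKnownUtilDisclosureUB} down to $\tfrac{2}{n}\cdot\opt$. I would use essentially the instance from that proof: one sender-favorite candidate $c_\sender$ with $(\rv,\sv)=((n-2)/(n-1),1)$, one ``decoy'' candidate $d$ with $(\rv,\sv)=(0,0)$, and $n-2$ ``filler'' candidates with $(\rv,\sv)=(1,0)$ (with infinitesimal perturbations on the sender-values to keep everything distinct, as assumed in the paper). The Pareto procedure yields $\mu^{\receiver}=(n-2)/(n-1)$ and, since $\rv_{c_\sender}\ge\mu^{\receiver}$, the benchmark optimum is $\opt=1$, so the target upper bound on the sender's success probability is $2/n$.

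For any persuasive mechanism $\phi$ in the secretary scenario with disclosure, I would first re-use the LP reductions from the proof of Theorem~\ref{theo:senderOrdinalReceiverCardinalKnownUtilDisclosureUB}: without loss of generality $\phi$ never recommends the decoy $d$ (recommending her wastes a recommendation worthless to both parties) and $x_1^C=1$ whenever the subgame state $C$ contains $c_\sender$. The persuasiveness constraint~\eqref{LP:constraintnested} applied in the subgame right after $d$ has been rejected and publicly disclosed then forces $x_{-1}^C\ge 1$, i.e., $\phi$ must signal $\HIRE$ with probability $1$ on the very next arrival after $d$ is gone. Exactly as in the analogous calculation of Theorem~\ref{theo:senderOrdinalReceiverCardinalKnownUtilDisclosureUB}, conditioning on the round $r_2$ in which $d$ arrives and using that $r_2$ has distribution $\Pr[r_2=r \mid d \text{ before } c_\sender]=(n-r)/\binom{n}{2}$, one gets $\Pr[c_\sender\text{ hired}\mid d\text{ before }c_\sender] \le \sum_{r=1}^{n-1}\frac{n-r}{\binom{n}{2}}\cdot\frac{1}{n-r}=\frac{2}{n}$, and this event contributes at most $\tfrac12\cdot\tfrac{2}{n}=\tfrac{1}{n}$ to the sender's success probability.

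The main obstacle — and the place where the secretary restriction must be exploited beyond the basic-disclosure argument — is the complementary event ``$c_\sender$ before $d$.'' In the basic scenario this branch could give up to $\tfrac12$ of full success, which is why Theorem~\ref{theo:senderOrdinalReceiverCardinalKnownUtilDisclosureUB} only yields the $\tfrac12$ bound. In the secretary scenario, however, $\phi$ is committed before the instance is revealed and must remain persuasive on every realization consistent with the observed history. I would argue via a companion-instance construction: any $\phi$ that recommends $c_\sender$ with non-negligible probability in the ``$c_\sender$ before $d$'' branch can be coupled with an alternative instance, obtained by permuting valuations so that the same observed prefix is equally consistent with $d$ having already arrived among the fillers, on which the same sequence of signals violates the receiver's smart-deviation bound (in the companion instance her continuation value, computed analogously to the DP used in Lemma~\ref{lem:GrowingParetoIC}, strictly exceeds $\rv_{c_\sender}$). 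Hence $\phi$'s recommendation probability on this branch is $O(1/n)$, and summing with the $1/n$ contribution from the $d$-first branch yields $P[\text{success}]\le 2/n=(2/n)\cdot\opt$. Formalizing the companion-instance reflection so that it absorbs exactly the $\tfrac12$ contribution of the favorable branch is the delicate step; the rest is an adaptation of the calculation from Theorem~\ref{theo:senderOrdinalReceiverCardinalKnownUtilDisclosureUB}.
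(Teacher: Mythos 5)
Your high-level instinct — construct a second instance that is indistinguishable from the first for the sender's information set, and argue that persuasiveness on both forces conservative behavior — is exactly the strategy the paper uses. But the execution has two genuine gaps. First, the LP characterization of Theorem~\ref{thm:disclosureLPs} and its constraint~\eqref{LP:constraintnested} are statements about the \emph{basic} scenario with disclosure, where the sender knows $C$ and can condition her signal probability on it; in the secretary scenario the sender never observes the full candidate set, so her signaling policy can only be a function of the arrived prefix $A_{t-1}$ and $\theta_t$. You cannot simply ``re-use the LP reductions'' to conclude $x_1^C = 1$ and $x_{-1}^C \ge 1$: those variables are not available to the sender. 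The paper instead parametrizes the mechanism directly by $(\theta_t, A_{t-1})$ and, in Lemma~\ref{lem:IncentiveConstraints}, derives inequalities like $p^t_{b,c}\ge p^t_{a,c}$ from requiring persuasiveness on \emph{both} of two explicit instances that share observable prefixes. Second, and more importantly, you correctly flag that ``$c_\sender$ before $d$'' is the hard branch, and you defer it to an unspecified ``companion-instance reflection.'' That deferral is the whole theorem: in the paper, the companion is Instance~II, which differs from Instance~I only by replacing the receiver-optimal candidate $b$ by another filler, and the consequence is that the sender must be \emph{weakly more} willing to recommend every other candidate than her own favorite $a$ at every history. A backwards induction then shows that the optimal policy subject to those constraints is to hire the very first candidate, giving success probability exactly $1/n$ on Instance~I, where $\opt = 1/2$. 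Nothing in your sketch pins down what the reflected instance should be, nor does it explain why $\sender$ loses \emph{all but} an $O(1/n)$ fraction of the favorable branch rather than some constant. I would also note that your base instance is less well suited to this bound: there $\rv_{c_\sender}=(n-2)/(n-1)$ is near-maximal for the receiver, so a $\HIRE$ on $c_\sender$ creates little tension for $\receiver$, whereas the paper's $c_\sender = a$ has $\rv_a = 0$, the receiver's worst, which is what lets the uncertainty about $b$ bite.

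In short, your proposal matches the paper's \emph{intent} (two indistinguishable instances plus induced incentive constraints) but is missing the concrete second instance, the extracted constraints in the sender's actual observation space, and the backwards induction that turns those constraints into the $1/n$ (hence $(2/n)\cdot\opt$) bound. Those are the substance of the proof.
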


    The lower bound follows from the following two different instances. In both instances, there is one candidate $a$ with value-pair $(\rv_a,\sv_a) = (0,1)$ and $n-2$ indistinguishable candidates (each termed $c$) with value-pair $(1/2,0)$. In instance I, there is one additional candidate $b$ with value-pair $(1,0)$. In instance II, there is just another $c$-candidate with $(1/2,0)$.

    Hence, the only uncertainty for $\sender$ in this set of instances is about the existence of $b$. 

    Now consider an arbitrary persuasive mechanism. Clearly, we can assume that the mechanism issues exactly one $\HIRE$ signal in exactly one of the rounds $t \in [n]$. If this signal is not followed, it just issues $\NOHIRE$ signals. This can only increase the incentive for $\receiver$ to follow the $\HIRE$ signal. We always assume that the mechanism issues a $\HIRE$ signal in the last round if it has not done so before. This is just a standard assumption. Note that in the last round, the arriving candidate is perfectly known to $\receiver$.

Now consider round $t = 3,\ldots,n-1$. Suppose the mechanism has only sent $\NOHIRE$ signals so far. Based on the set of candidates $A_{t-1}$ that arrived up to and including round $t-1$ (and whose arrival is now known to both $\sender$ and $\receiver$), we distinguish eight cases.

\begin{center}
\renewcommand{\arraystretch}{1.5}
\begin{tabular}{l||c|c|c|c|c|c|c|c}
$\theta_t$                & $a$      & $a$      & $b$      & $b$      & $c$      & $c$      & $c$      & $c$\\ \hline
Candidates in $A_{t-1}$   & only $c$ & $b$, $c$ & only $c$ & $a$, $c$ & only $c$ & $a$, $c$ & $b$, $c$ & $a$, $b$, $c$ \\ \hline
$\Pr[\sigma_t = \HIRE \mid A_{t-1},\theta_t]$   & $p_{a,c}^t$ & $p_{a,b}^t$ & $p_{b,c}^t$ & $p_{b,a}^t$ & $p_{c,c}^t$ & $p_{c,a}^t$ & $p_{c,b}^t$ & 1
\end{tabular}
\end{center}

Note that, obviously, in the last case when both $a$ and $b$ arrived and were rejected, it is clear to both $\sender$ and $\receiver$ that the underlying instance is instance I and there are only candidates $c$ to come in the subsequent rounds. Hence, in this case we can w.l.o.g.\ assume that $\sender$ issues a $\HIRE$ signal directly. In round $t=2$, this case does not exist:

\begin{center}
\renewcommand{\arraystretch}{1.5}
\begin{tabular}{l||c|c|c|c|c|c|c}
$\theta_2$              & $a$ & $a$ & $b$ & $b$ & $c$ & $c$ & $c$ \\ \hline
Candidate $\theta_1$    & $c$ & $b$ & $c$ & $a$ & $c$ & $a$ & $b$ \\ \hline
$\Pr[\sigma_2 = \HIRE \mid \theta_1, \theta_2]$ & $p_{a,c}^2$ & $p_{a,b}^2$ & $p_{b,c}^2$ & $p_{b,a}^2$ & $p_{c,c}^2$ & $p_{c,a}^2$ & $p_{c,b}^2$
\end{tabular}
\end{center}
 
Finally, for round $t=1$, there are only three cases depending on whether $\theta_1$ is $a$, $b$ or $c$. We denote the probabilities for a $\HIRE$ signal by $p_a^1$, $p_b^1$ and $p_c^1$, respectively.

In the next lemma, we show necessary constraints for persuasiveness.

\begin{lemma} \label{lem:IncentiveConstraints}
    For every mechanism that is persuasive in both instances I and II, it must hold for round $t=1$
\[
    p_b^1 \ge p_a^1 \hspace{1cm} \text{and} \hspace{1cm} p_c^1 \ge p_a^1
\]
and for every round $t=2,\ldots,n-1$
\[
    p_{b,c}^t \ge p_{a,c}^t \hspace{1cm} p_{b,a}^t \ge p_{c,a}^t  \hspace{1cm}  p_{c,b}^t \ge p_{a,b}^t     \hspace{1cm} p_{c,c}^t \ge p_{a,c}^t\enspace.
\]
\end{lemma}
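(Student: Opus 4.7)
The plan is to derive all six inequalities from a single necessary persuasiveness condition, applied round by round in whichever of the two instances makes the conditioning event $A_{t-1}$ feasible. Fix a round $t$, a history $A_{t-1}$, and let $R_t = [n]\setminus A_{t-1}$ be the set of remaining candidates. If the receiver rejects upon a $\HIRE$ signal, she learns $\theta_t$ and can commit to hiring a uniformly random remaining candidate in some later round. This deviation yields expected utility $\sum_{i\in R_t} p^h_i \cdot \tfrac{1}{|R_t|-1}\sum_{j\in R_t\setminus\{i\}}\rv_j$, exactly as in~\eqref{eq:optRev}, where $p^h_i = \Pr[\theta_t=i \mid \sigma_t=\HIRE, A_{t-1}]$ is proportional to $\Pr[\sigma_t = \HIRE \mid A_{t-1}, \theta_t = i]$. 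A short rearrangement of $\sum_i p^h_i \rv_i \ge \sum_i p^h_i \cdot \tfrac{1}{|R_t|-1}\sum_{j\neq i}\rv_j$ reduces persuasiveness in round $t$ to the compact necessary condition
\[
\sum_{i\in R_t} p^h_i \, \rv_i \;\ge\; \mu^{\receiver}_{R_t}.
\]

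For each of the six claimed inequalities, I would instantiate this condition in the appropriate instance with the appropriate $A_{t-1}$. As an illustrative case take $p_{b,c}^t \ge p_{a,c}^t$: work in instance~I conditioned on $A_{t-1}$ consisting of $t-1$ $c$-candidates. Then $R_t$ contains $a$, $b$, and $n-t-1$ $c$-candidates, so $\mu^{\receiver}_{R_t}=1/2$. Substituting the posterior $p^h_i \propto p_{i,c}^t$ into the necessary condition and multiplying through by the common denominator yields $2 p_{b,c}^t + (n-t-1)p_{c,c}^t \ge p_{a,c}^t + p_{b,c}^t + (n-t-1)p_{c,c}^t$, which collapses to $p_{b,c}^t \ge p_{a,c}^t$.

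The five remaining inequalities follow by the same recipe. The inequality $p_b^1 \ge p_a^1$ is the identical calculation in round $t=1$ of instance~I. The inequalities $p_c^1 \ge p_a^1$ and $p_{c,c}^t \ge p_{a,c}^t$ are obtained in instance~II (where $a$ is the unique non-$c$ remaining candidate), with the relevant averages $\mu^{\receiver}_{R_1} = \tfrac{n-1}{2n}$ and $\mu^{\receiver}_{R_t} = \tfrac{n-t}{2(n-t+1)}$. For $p_{c,b}^t \ge p_{a,b}^t$ one works in instance~I with $A_{t-1}=\{b,c,\ldots,c\}$, where $\mu^{\receiver}_{R_t} = \tfrac{n-t}{2(n-t+1)}$ as well. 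For $p_{b,a}^t \ge p_{c,a}^t$ one works in instance~I with $A_{t-1}=\{a,c,\ldots,c\}$, where $\mu^{\receiver}_{R_t}=\tfrac{n-t+2}{2(n-t+1)}$ and the coefficient algebra $2(n-t+1)-(n-t+2) = n-t$ appears symmetrically on both sides, yielding the claimed inequality after cancellation. There is no real conceptual obstacle: everything reduces to bookkeeping, namely picking the instance in which each conditioning event is feasible and then verifying a short piece of algebra against the corresponding $\mu^{\receiver}_{R_t}$.
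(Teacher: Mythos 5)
Your proof is correct and takes essentially the same approach as the paper's. You condition on a $\HIRE$ signal in round $t$ for a given history $A_{t-1}$, compare against the deviation of rejecting and hiring later, and derive the per-history constraint $\Ex[\rv_{\theta_t}\mid \sigma_t=\HIRE]\ge \mu^{\receiver}_{R_t}$; the paper phrases this as $\Ex[\rv_{\theta_t}\mid \sigma_t=\HIRE]\ge\Ex[\rv_{\theta_{t+1}}\mid\sigma_t=\HIRE]$, which is the same inequality after the algebraic rearrangement you note, and then both proofs instantiate it in the same six (instance, history) pairs and cancel terms to get each claimed inequality.
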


\begin{proof}
	Suppose there is a $\HIRE$ signal in round $t$. $\receiver$ must find it in her interest to follow the signal. More precisely, in what follows we condition on $\sigma_t = \HIRE$. Now, for persuasiveness the expected utility from $\theta_t$ must exceed the expected utility from $\theta_{t+1}$.
	
	First, consider round $t=1$:
	\begin{description}
		\item[Instance I:]
		We must have $\Ex[\rv_{\theta_1} \mid \sigma_1 = \HIRE] \; \ge \; \Ex[\rv_{\theta_2} \mid \sigma_1 = \HIRE]$, which implies
		\begin{align*}
		&p_a^1 \cdot 0 + p_b^1 \cdot 1 + (n-2) \cdot p_c^1 \cdot 1/2\\
		&\quad\ge \quad p_a^1 \cdot \frac{(n-2)\cdot 1/2 + 1}{n-1} + p_b^1 \cdot \frac{(n-2)\cdot 1/2}{n-1} + (n-2)\cdot p_c^1 \cdot \frac{(n-3)\cdot 1/2 + 1}{n-1}
		\end{align*}
		The constraint becomes $p_b^1 \ge p_a^1$.
		
		\item[Instance II:]
		We must have $\Ex[\rv_{\theta_1} \mid \sigma_1 = \HIRE] \; \ge \; \Ex[\rv_{\theta_2} \mid \sigma_1 = \HIRE]$, which implies
		\begin{align*}
		&p_a^1 \cdot 0 + (n-1) \cdot p_c^1 \cdot 1/2 \quad\ge \quad p_a^1 \cdot 1/2 + (n-1)\cdot p_c^1 \cdot \frac{(n-2)\cdot 1/2}{n-1}
		\end{align*}
		The constraint becomes $p_c^1 \ge p_a^1$.
	\end{description}
	
	Hence, the mechanism must be more likely to signal \HIRE\ in round 1 for each $c$ and $b$ than for $a$. Technically, these two constraints also hold for round $t=n$, since the probability to signal \HIRE\ is 1 for every candidate. The two constraints also emerge in the remaining rounds $t=2,\ldots,n-1$:
	
	\begin{description}
		\item[Instance I, $A_{t-1}$ contains only $c$:] $\Ex[\rv_{\theta_t} \mid \sigma_t = \HIRE] \; \ge \; \Ex[\rv_{\theta_{t+1}} \mid \sigma_t = \HIRE]$ implies
		\begin{align*}
		& p_{a,c}^t \cdot 0 + p_{b,c}^t \cdot 1 + (n-t-1)\cdot p_{c,c}^t \cdot 1/2\\
		&\quad\ge p_{a,c}^t \cdot \frac{(n-t-1)\cdot 1/2 + 1}{n-t} + p_{b,c}^t \cdot \frac{(n-t-1)\cdot 1/2}{n-t} + (n-t-1) \cdot p_{c,c}^t \cdot \frac{(n-t-2)\cdot 1/2 + 1}{n-t}
		\end{align*}
		Observe that the $p_{c,c}^t$ terms cancel, and the constraint becomes $p_{b,c}^t \ge p_{a,c}^t$.
		
		\item[Instance I, $A_{t-1}$ contains $a$:] $\Ex[\rv_{\theta_t} \mid \sigma_t = \HIRE] \; \ge \; \Ex[\rv_{\theta_{t+1}} \mid \sigma_t = \HIRE]$ implies
		\begin{align*}
		&p_{b,a}^t \cdot 1 + (n-t)\cdot p_{c,a}^t \cdot 1/2 \quad\ge \quad p_{b,a}^t \cdot 1/2 + (n-t) \cdot p_{c,a}^t \cdot \frac{(n-t-1)\cdot 1/2 + 1}{n-t}
		\end{align*}
		The constraint becomes $p_{b,a}^t \ge p_{c,a}^t$.
		
		\item[Instance I, $A_{t-1}$ contains $b$:] $\Ex[\rv_{\theta_t} \mid \sigma_t = \HIRE] \; \ge \; \Ex[\rv_{\theta_{t+1}} \mid \sigma_t = \HIRE]$ implies
		\begin{align*}
		&p_{a,b}^t \cdot 0 + (n-t)\cdot p_{c,b}^t \cdot 1/2 \quad\ge \quad p_{a,b}^t \cdot 1/2 + (n-t) \cdot p_{c,b}^t \cdot \frac{(n-t-1)\cdot 1/2}{n-t}
		\end{align*}
		The constraint becomes $p_{c,b}^t \ge p_{a,b}^t$.
		
		\item[Instance I, $A_{t-1}$ contains $a$, and $b$:] For $t=2$ this case does not occur. For $t\ge 3$, both players are aware that there are only $c$-candidates left. $\receiver$ follows any signaling strategy that guarantees a $\HIRE$ signal in the remaining rounds.
		
		\item[Instance II, $A_{t-1}$ contains only $c$:] $\Ex[\rv_{\theta_t} \mid \sigma_t = \HIRE] \; \ge \; \Ex[\rv_{\theta_{t+1}} \mid \sigma_t = \HIRE]$ implies
		\begin{align*}
		&p_{a,c}^t \cdot 0 + (n-t)\cdot p_{c,c}^t \cdot 1/2 \quad\ge \quad p_{a,c}^t \cdot 1/2 + (n-t) \cdot p_{c,c}^t \cdot \frac{(n-t-1)\cdot 1/2}{n-t}
		\end{align*}
		The constraint becomes $p_{c,c}^t \ge p_{a,c}^t$.
		
		\item[Instance II, $A_{t-1}$ contains $a$:]
		In this case $\receiver$ knows that there are only $c$-candidates left. $\receiver$ follows any signaling strategy that guarantees a $\HIRE$ signal in the remaining rounds.
	\end{description}
\end{proof}

%The lemma is proved in Appendix~\ref{app:IncentiveConstraints}. It shows that persuasiveness implies constraints on the probability for a $\HIRE$ signal. It must be higher for $b$ and $c$ than for $a$, in every round $t=1,\ldots,n$. 
The proof of the theorem now follows by analyzing the approximation ratio when applying such a mechanism in instance I.

\begin{proof}[Proof of Theorem~\ref{theo:senderOrdinalReceiverCardinalSecretaryDisclosure}]
	Consider instance I. Instance I in the benchmark scenario yields an optimal success probability of 1/2 (recommend $a$ or $b$ each with probability 1/2). Let us analyze the performance of a mechanism on instance I that satisfies Lemma~\ref{lem:IncentiveConstraints}. Such a mechanism is much more careful to recommend $a$. If candidate $b$ would not exist, there would be a lack of interest of $\receiver$ (that knows instance II is present) to accept a recommendation of $a$. This risk arising from the potential presence of instance II is captured by the constraints in Lemma~\ref{lem:IncentiveConstraints}.
	
	Formally, if the mechanism reaches round $n$ without a previous $\HIRE$ signal, then, obviously, it is optimal to signal $\HIRE$ in round $n$ with probability 1. We will prove by induction that there is an optimal mechanism $\phi$ with the following property. If $\phi$ reaches any round $t$ without a previous $\HIRE$ signal, it is optimal to hire in round $t$ with probability 1. Thus, there is an optimal mechanism that hires in round 1 with probability 1. Obviously, this mechanism has success probability $1/n$.
	
	Suppose the inductive assumption is true for rounds $t+1,\ldots,n$, i.e., if the mechanism reaches round $t+1$ without a previous $\HIRE$ signal, it is optimal to hire any candidate in that round with probability 1. 
	
	Now consider round $t \ge 2$.
	
	\begin{description}
		\item[$A_{t-1}$ contains only $c$:]
		If $a$ arrives in round $t$, then the success probability is $p_{a,c}^t$. If $b$ arrives, the success probability is $(1-p_{b,c}^t)\cdot\frac{1}{n-t}$, since by assumption the mechanism hires every candidate in the next round, which due to random arrival is $a$ with uniform probability. If $c$ arrives, the success probability is $(1-p_{c,c}^t)\cdot\frac{1}{n-t}$ by similar arguments. Overall, we want to maximize
		\[
		p_{a,c}^t + \frac{1-p_{b,c}^t}{n-t} + (n-t-1)\cdot \frac{1-p_{c,c}^t}{n-t}\enspace.
		\]
		Due to Lemma~\ref{lem:IncentiveConstraints} $p_{b,c}^t \ge p_{a,c}^t$ and $p_{c,c}^t \ge p_{a,c}^t$. Hence, the expression is maximized if both constraints hold with equality, in which case it becomes
		\[
		p_{a,c}^t + \frac{1-p_{a,c}^t}{n-t} + (n-t-1)\cdot \frac{1-p_{a,c}^t}{n-t} = 1\enspace,
		\]
		i.e., \emph{independent} of the value of $p_{a,c}^t$. Thus, $p_{a,c}^t = p_{b,c}^t = p_{c,c}^t = 1$ is an optimal choice.
		
		\item[$A_{t-1}$ contains $a$:]
		If $a$ has arrived and been rejected, the success probability is 0. Setting $p_{b,a} = p_{c,a} = 1$ is an optimal choice.
		
		\item[$A_{t-1}$ contains $b$:]
		If $a$ arrives in round $t$, then the success probability is $p_{a,b}^t$. If $c$ arrives, the success probability is $(1-p_{c,b}^t)\cdot\frac{1}{n-t}$, since by assumption the mechanism hires every candidate in the next round, which due to random arrival is $a$ with uniform probability. Overall, we want to maximize
		\[
		p_{a,b}^t + (n-t)\cdot \frac{1-p_{c,b}^t}{n-t}\enspace.
		\]
		Due to Lemma~\ref{lem:IncentiveConstraints} $p_{c,b}^t \ge p_{a,b}^t$. Hence, the expression is maximized when the constraints holds with equality, in which case it becomes
		\[
		p_{a,b}^t + (n-t)\cdot \frac{1-p_{a,b}^t}{n-t} = 1\enspace.
		\]
		Thus, $p_{a,b}^t = p_{c,b}^t = 1$ is an optimal choice.
		
		\item[$A_{t-1}$ contains $a$, and $b$:] If $a$ has arrived and been rejected, the success probability is 0. We already observed above that in this case we can directly hire in round $t$ with probability 1.
	\end{description}
	
	For round $t=1$ we make a final similar observation. The success probability is $p_a^1$ if $a$ arrives, $(1-p_b^1) \cdot \frac{1}{n-1}$ if $b$ arrives and $(1-p_c^1) \cdot \frac{1}{n-1}$ if $c$ arrives. Overall, we want to maximize
	\[
	p_a^1 + \frac{1-p_b^1}{n-1} + (n-2)\cdot \frac{1-p_c^1}{n-1}
	\]
	Due to Lemma~\ref{lem:IncentiveConstraints} $p_b^1 \ge p_a^1$ and $p_c^1 \ge p_a^1$. Hence, the expression is maximized when both constraints hold with equality, in which case it becomes
	\[
	p_a^1 + (n-1)\cdot \frac{1-p_a^1}{n-1} = 1\enspace.
	\]
	Thus, $p_a^1 = p_b^1 = p_c^1 = 1$ is an optimal choice.
	
	This proves the theorem.
\end{proof}

The proof can be applied literally by simply replacing ``success probability'' with ``expected utility'' for $\sender$, since in both instances I and II the best candidate $a$ is the only one that yields positive utility for $\sender$. This implies the following corollary.

\begin{corollary}\label{cor:senderCardinalReceiverCardinalSecretaryDisclosure}
	There is no persuasive mechanism that guarantees $\sender$ an expected utility of more than $\frac{2}{n}\cdot\opt$, where $\opt$ is the optimal expected utility in the corresponding benchmark instance.
\end{corollary}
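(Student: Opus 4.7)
The plan is to invoke Theorem~\ref{theo:senderOrdinalReceiverCardinalSecretaryDisclosure} on the exact same pair of instances I and II used in its proof, and to observe that its entire argument transfers verbatim to the cardinal sender objective. The decisive structural fact is that candidate $a$ is the unique sender-valued candidate in both instances: $\sv_a = 1$ and $\sv_i = 0$ for every other candidate $i$. Consequently, for any mechanism $\phi$ acting on either instance, the expected sender utility equals the probability that $\phi$ ultimately hires $a$.

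First, I would verify that the benchmark optimum under cardinal sender utility on instance I is still $\opt = 1/2$. We have $\mu^{\receiver} = (0 + 1 + (n-2)/2)/n = 1/2$, and the Pareto procedure identifies the segment from $(\rv_a, \sv_a) = (0,1)$ to $(\rv_b, \sv_b) = (1,0)$, which intersects the line $\rv = 1/2$ at $(1/2, 1/2)$. The resulting lottery hires $a$ and $b$ each with probability $1/2$, yielding expected sender utility $1/2$. This matches the ordinal benchmark value already computed in the proof of Theorem~\ref{theo:senderOrdinalReceiverCardinalSecretaryDisclosure}.

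Next, the upper-bound argument itself carries over without modification. Lemma~\ref{lem:IncentiveConstraints} depends only on receiver valuations, which are identical across the two interpretations, so all persuasiveness constraints on the probabilities $p_{\cdot,\cdot}^t$ remain in force. The backward induction on rounds $t = n, n-1, \ldots, 1$ never uses sender values; it only manipulates the probability of eventually hiring $a$. The induction shows that in instance I this probability is at most $1/n$, hence the expected sender utility is at most $1/n$ as well. Combined with $\opt = 1/2$, this gives the claimed bound $\tfrac{2}{n}\cdot\opt$.

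There is essentially no obstacle: once one recognizes that the instance was designed so that the sole positively-valued candidate for $\sender$ is $a$, the cardinal and ordinal sender objectives coincide on these instances, and the corollary is simply a re-reading of the theorem. The only thing worth double-checking is that the benchmark value $\opt$ in the cardinal case coincides with the ordinal one, which the Pareto-frontier computation above confirms.
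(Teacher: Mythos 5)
Your proposal is correct and takes essentially the same approach as the paper: the paper likewise notes that the proof of Theorem~\ref{theo:senderOrdinalReceiverCardinalSecretaryDisclosure} applies verbatim because candidate $a$ is the unique candidate with positive sender utility in both instances I and II, so expected sender utility equals the probability of hiring $a$. Your additional explicit verification that the cardinal benchmark $\opt = 1/2$ coincides with the ordinal one (via the Pareto segment from $(0,1)$ to $(1,0)$ intersecting $\rho = \mu^{\receiver} = 1/2$) is a sensible sanity check that the paper leaves implicit.
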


\subsection*{Acknowledgements}
Niklas Hahn and Martin Hoefer are funded by the German-Israel Foundation grant I-1419-118.4/2017.
Rann Smorodinsky is funded by the joint United States-Israel Binational Science Foundation and National Science Foundation grant 2016734,
German-Israel Foundation grant I-1419-118.4/2017, Israel Ministry of Science and Technology grant 19400214, Technion VPR grants, and the Bernard M. Gordon Center for Systems Engineering at the Technion.

\bibliographystyle{plain}
\bibliography{literature,martin}

\begin{thebibliography}{10}

\bibitem{AnunrojwongIL19}
Jerry Anunrojwong, Krishnamurthy Iyer, and David Lingenbrink.
\newblock Persuading risk-conscious agents: A geometric approach.
\newblock In {\em Proc.\ 15th Conf.\ Web and Internet Econ.\ (WINE)}, page 334.
  Springer, 2019.
\newblock Full paper:
  \url{https://papers.ssrn.com/sol3/papers.cfm?abstract_id=3386273}.

\bibitem{ArieliB16}
Itai Arieli and Yakov Babichenko.
\newblock Private {B}ayesian persuasion.
\newblock {\em J. Econ.\ Theory}, 182:185--217, 2019.

\bibitem{Au15}
Pak~Hung Au.
\newblock Dynamic information disclosure.
\newblock {\em RAND J.\ Econ.}, 46(4):791--823, 2015.

\bibitem{AumannM66}
Robert Aumann and Michael Maschler.
\newblock Game theoretic aspects of gradual disarmament.
\newblock {\em Report of the US Arms Control and Disarmament Agency}, 80:1--55,
  1966.

\bibitem{BabaioffIKK08}
Moshe Babaioff, Nicole Immorlica, David Kempe, and Robert Kleinberg.
\newblock Online auctions and generalized secretary problems.
\newblock {\em SIGecom Exchanges}, 7(2), 2008.

\bibitem{BabichenkoB17}
Yakov Babichenko and Siddharth Barman.
\newblock Algorithmic aspects of private {B}ayesian persuasion.
\newblock In {\em Proc.\ 8th Symp.\ Innov.\ Theoret.\ Comput.\ Sci.\ (ITCS)},
  pages 34:1--34:16. Schloss Dagstuhl - Leibniz-Zentrum fuer Informatik, 2017.

\bibitem{Beckmann90}
Martin Beckmann.
\newblock Dynamic programming and the secretary problem.
\newblock {\em Comput.\ Math.\ Appl.}, 19(11):25--28, 1990.

\bibitem{ChenHKLM15}
Ning Chen, Martin Hoefer, Marvin K{\"{u}}nnemann, Chengyu Lin, and Peihan Miao.
\newblock Secretary markets with local information.
\newblock {\em Distributed Computing}, 32(5):361--378, 2019.

\bibitem{Dinitz13}
Michael Dinitz.
\newblock Recent advances on the matroid secretary problem.
\newblock {\em {SIGACT} News}, 44(2):126--142, 2013.

\bibitem{DughmiNPW19}
Shaddin Dughmi, Rad Niazadeh, Alexandros Psomas, and S.~Matthew Weinberg.
\newblock Persuasion and incentives through the lens of duality.
\newblock In {\em Proc.\ 15th Conf.\ Web and Internet Econ.\ (WINE)}, pages
  142--155. Springer, 2019.

\bibitem{DughmiX15}
Shaddin Dughmi and Haifeng Xu.
\newblock Algorithmic {B}ayesian persuasion.
\newblock In {\em Proc.\ 48th Symp.\ Theory Comput.\ (STOC)}, pages 412--425.
  {ACM}, 2016.

\bibitem{DughmiX17}
Shaddin Dughmi and Haifeng Xu.
\newblock Algorithmic persuasion with no externalities.
\newblock In {\em Proc.\ 18th Conf.\ Econ.\ Comput.\ (EC)}, pages 351--368.
  {ACM}, 2017.

\bibitem{Dynkin63}
Eugene Dynkin.
\newblock The optimum choice of the instant for stopping a {M}arkov process.
\newblock In {\em Sov. Math. Dokl}, volume~4, pages 627--629, 1963.

\bibitem{Ely17}
Jeffrey Ely.
\newblock Beeps.
\newblock {\em Amer.\ Econ.\ Rev.}, 107(1):31--53, 2017.

\bibitem{ElyFK15}
Jeffrey Ely, Alexander Frankel, and Emir Kamenica.
\newblock Suspense and surprise.
\newblock {\em J.\ Political Econ.}, 123(1):215--260, 2015.

\bibitem{FeldmanSZ18}
Moran Feldman, Ola Svensson, and Rico Zenklusen.
\newblock A simple \emph{O}(log log(rank))-competitive algorithm for the
  matroid secretary problem.
\newblock {\em Math.\ Oper.\ Res.}, 43(2):638--650, 2018.

\bibitem{Ferguson89}
Thomas Ferguson.
\newblock Who solved the secretary problem?
\newblock {\em Statistical Science}, 4(3):282--289, 1989.

\bibitem{Gardner60}
Martin Gardner.
\newblock {\em New Mathematical Diversions from Scientific American}.
\newblock Simon \& Schuster, 1966.
\newblock Reprint of the original column published in February 1960 with
  additional comments.

\bibitem{GoebelHKSV14}
Oliver G\"obel, Martin Hoefer, Thomas Kesselheim, Thomas Schleiden, and
  Berthold V\"ocking.
\newblock Online independent set beyond the worst-case: Secretaries, prophets
  and periods.
\newblock In {\em Proc.\ 41st Int.\ Colloq.\ Autom.\ Lang.\ Programming
  (ICALP)}, volume~2, pages 508--519. Springer, 2014.

\bibitem{HoeferK17}
Martin Hoefer and Bojana Kodric.
\newblock Combinatorial secretary problems with ordinal information.
\newblock In {\em Proc.\ 44th Int.\ Colloq.\ Autom.\ Lang.\ Programming
  (ICALP)}, pages 133:1--133:14. Schloss Dagstuhl - Leibniz-Zentrum fuer
  Informatik, 2017.

\bibitem{Holmstrom84}
Bengt~Robert Holmstrom.
\newblock On the theory of delegation.
\newblock In Marcel Boyer and Richard Kihlstrom, editors, {\em Bayesian Models
  in Economic Theory}, pages 115--141. Elsevier, 1984.

\bibitem{KamenicaG11}
Emir Kamenica and Matthew Gentzkow.
\newblock Bayesian persuasion.
\newblock {\em Amer.\ Econ.\ Rev.}, 101(6):2590--2615, 2011.

\bibitem{KesselheimRTV13}
Thomas Kesselheim, Klaus Radke, Andreas T\"onnis, and Berthold V\"ocking.
\newblock An optimal online algorithm for weighted bipartite matching and
  extensions to combinatorial auctions.
\newblock In {\em Proc.\ 21st European Symp.\ Algorithms (ESA)}, pages
  589--600. Springer, 2013.

\bibitem{KleinbergK18}
Jon Kleinberg and Robert Kleinberg.
\newblock Delegated search approximates efficient search.
\newblock In {\em Proc.\ 19th Conf.\ Econ.\ Comput.\ (EC)}, pages 287--302.
  {ACM}, 2018.

\bibitem{Kleinberg05}
Robert Kleinberg.
\newblock A multiple-choice secretary algorithm with applications to online
  auctions.
\newblock In {\em Proc.\ 16th Symp.\ Disc.\ Algorithms (SODA)}, pages 630--631.
  {SIAM}, 2005.

\bibitem{Reiffenhaeuser19}
Rebecca Reiffenh{\"{a}}user.
\newblock An optimal truthful mechanism for the online weighted bipartite
  matching problem.
\newblock In {\em Proc.\ 30th Symp.\ Disc.\ Algorithms (SODA)}, pages
  1982--1993. {SIAM}, 2019.

\bibitem{Rubinstein16}
Aviad Rubinstein.
\newblock Beyond matroids: {S}ecretary problem and prophet inequality with
  general constraints.
\newblock In {\em Proc.\ 48th Symp.\ Theory Comput.\ (STOC)}, pages 324--332.
  {ACM}, 2016.

\bibitem{Young91}
Robert Young.
\newblock Euler's constant.
\newblock {\em Math.\ Gaz}, 75(472):187--190, 1991.

\end{thebibliography}

\clearpage
\appendix

%\section{Omitted Proofs}

%  \subsection{Proof of Lemma~\ref{lem:lossPerRoundOrd}}
%  \label{app:lossPerRoundOrd}

%    \subsection{Proof of Corollary~\ref{cor:valueRoundTOrd}}
%    \label{app:valueRoundTOrd}

%   \subsection{Proof of Theorem~\ref{theo:senderOrdinalReceiverCardinalSecretaryNoDisclosure}}
%   \label{app:senderOrdinalReceiverCardinalSecretaryNoDisclosure}

%  \subsection{Proof of Theorem~\ref{thm:disclosureLPs}}
%  \label{app:disclosureLPs}

%  \subsection{Proof of Lemma~\ref{lem:shrinkingParetoIC}}
%  \label{app:shrinkingParetoIC}

%  \subsection{Proof of Theorem~\ref{theo:senderOrdinalReceiverCardinalKnownUtilDisclosure}}
%  \label{app:senderOrdinalReceiverCardinalKnownUtilDisclosure}

%  \subsection{Proof of Lemma~\ref{lem:IncentiveConstraints}}
%  \label{app:IncentiveConstraints}

%  \subsection{Proof of Theorem~\ref{theo:senderOrdinalReceiverCardinalSecretaryDisclosure}}
%  \label{app:senderOrdinalReceiverCardinalSecretaryDisclosure}

\section{Ordinal Utility for $\receiver$}
\label{sec:ordinal}

Let us now consider the problem when $\receiver$ is only interested in hiring her (unique) best candidate, i.e., the case of ordinal receiver utility. 

\subsection{Basic Scenario without Disclosure}
In this scenario, the set of valuation pairs is known a-priori to both parties. While $\sender$ observes the candidates, $\receiver$ only observes the signals sent by $\sender$. $\receiver$ is indifferent about all candidates which are not her best. As such, for a persuasive mechanism for $\sender$, we can restrict to signal $\HIRE$ for exactly one of the two optimal candidates $c_\receiver$ and $c_\sender$. 

In the Elementary mechanism, we decide upfront whether to signal $\HIRE$ for either $c_\receiver$ or $c_\sender$: Draw $x \sim Unif[0,1]$. If $x \le 1/n$, signal $\HIRE$ upon arrival of $c_\receiver$, otherwise, signal $\HIRE$ upon arrival of $c_\sender$. Signal $\NOHIRE$ for any other candidate.

\begin{proposition}\label{prop:receiverOrdinalKnownUtilNoDisclosure}
	The Elementary mechanism is persuasive in the basic scenario without disclosure. It yields a success probability of $(1-o(1))$ and an expected utility of $(1-o(1))\cdot \sv_{\max}$ for $\sender$.
\end{proposition}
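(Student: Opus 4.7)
The plan is to first establish persuasiveness and then derive the two utility guarantees as immediate corollaries of the mechanism's design. The key structural observation is that the upfront coin flip is independent of the arrival order, and the Elementary mechanism issues exactly one $\HIRE$ signal whose target is $c_\receiver$ with probability $1/n$ (call this Case~A) and $c_\sender$ with probability $1-1/n$ (Case~B). Because $\prob[\stateON_t = c_\receiver] = \prob[\stateON_t = c_\sender] = 1/n$ under the uniform arrival order, Bayes' rule yields $\prob[\text{Case A} \mid \signal_t = \HIRE] = 1/n$ and $\prob[\text{Case A} \mid \signal_1 = \cdots = \signal_t = \NOHIRE] = 1/n$ for every $t$, i.e., the case posterior is unchanged along every history.

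The only nontrivial step is verifying that $\receiver$ has no profitable deviation against her ordinal utility, and I plan a one-shot deviation analysis at each history. Upon $\signal_t = \HIRE$, following yields $\prob[\stateON_t = c_\receiver \mid \signal_t = \HIRE] = \prob[\text{Case A} \mid \signal_t = \HIRE] = 1/n$. Rejecting $\stateON_t$ and hiring any later $\stateON_{t'}$ instead gives $0$ in Case~A (where $c_\receiver$ has just been passed up) and, in Case~B, $1/(n-1)$ by symmetry of the $n-1$ non-$c_\sender$ candidates over the $n-1$ remaining positions, totaling $\tfrac{n-1}{n}\cdot\tfrac{1}{n-1} = \tfrac{1}{n}$. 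Upon $\NOHIRE$ at round $t$, hiring $\stateON_t$ similarly gives $0$ in Case~A and $1/(n-1)$ in Case~B, again $1/n$ in total. Because the mechanism emits $\HIRE$ at most once and there is no disclosure, no subsequent signal carries further information beyond the identity of the HIRE round, so these one-shot computations also cover arbitrary multi-step deviations. All strategies thus tie at expected utility $1/n$, and the paper's tie-breaking convention in favor of $\sender$ settles persuasiveness.

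The utility guarantees then fall out immediately: conditional on $\receiver$ following, the hired candidate equals $c_\sender$ precisely in Case~B, so the success probability for $\sender$ is $(n-1)/n = 1-o(1)$, and $\Ex[\sv] \geq \frac{n-1}{n}\sv_{\max} = (1-o(1))\cdot\sv_{\max}$. I expect the main obstacle to be the persuasiveness check at a long $\NOHIRE$ history, where $\receiver$ might hope to extract positional information about $c_\receiver$ from the length of the prefix of $\NOHIRE$s. The resolution is that $c_\receiver$ in Case~A and $c_\sender$ in Case~B play perfectly symmetric roles in determining the $\HIRE$ round, so the case posterior is unchanged by such a prefix and every non-target candidate stays exchangeable over the remaining positions.
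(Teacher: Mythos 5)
Your proof is correct and follows essentially the same reasoning as the paper's: both verify persuasiveness by computing $\Pr[\theta_t = c_\receiver]$ after a \HIRE\ signal, after a \NOHIRE\ history, and after a rejected \HIRE, and in each case obtain $1/n$, so that $\receiver$ is always indifferent and tie-breaking in favor of $\sender$ gives the result. Your explicit Case~A/Case~B decomposition and the remark that the case posterior is invariant along \NOHIRE\ histories (because both $c_\receiver$ and $c_\sender$ are equally likely to have been absent) is a slightly cleaner packaging of the same Bayes computations the paper does inline, and your observation that one-shot deviations suffice because no further information flows after a rejected \HIRE\ is implicit in the paper's ``must choose $r$ without additional information.''
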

\begin{proof}
	First, we prove persuasiveness.
	
	In round $t$ with $\sigma_t = \HIRE$ we have $\Pr[\theta_t = c_\receiver] = \frac{1/n}{1/n+(n-1)/n} = 1/n$. Suppose $\receiver$ deviates to hire in a later round $r > t$. Since $\sender$ will not signal $\HIRE$ again, $\receiver$ must choose $r$ without additional information. Then $\Pr[\theta_r = c_\receiver] = (1-1/n) \cdot 1/(n-1) = 1/n$. Thus, it is optimal for $\receiver$ to hire $\theta_t$.
	
	If $\sigma_{t'} = \NOHIRE$ for all $t' \in [t]$, then $\Pr[\theta_t = c_{\receiver}] = \frac{\frac{n-1}{n}}{n-2+\frac{1}{n}+\frac{n-1}{n}} = \frac{1}{n}$. Hence, it is optimal for $c_{\receiver}$ to wait for the round with signal $\HIRE$.
	
	Additionally, the Elementary mechanism clearly implies the sender-optimal candidate $c_\sender$ is hired with probability $1-o(1)$ and yields an expected utility of $(1-o(1))\sv_{\max}$ for $\sender$. 
\end{proof}

\subsection{Secretary Scenario without Disclosure}	
In this case, we assume that $\receiver$ has the same information as in the basic scenario. Now the valuation pairs are unknown to $\sender$. In this scenario, she can use the following mechanism that relies on the classic secretary algorithm due to Dynkin~\cite{Dynkin63}. 

The Simple Secretary mechanism decides once in the beginning whether to run the classic algorithm based either on $\sv$-values or $\rv$-values. In the classic algorithm, we sample the first $s = \lfloor n/e \rfloor$ candidates. Then we signal $\HIRE$ for the first candidate that is the best one so far (in terms of either $\sv$- or $\rv$-values, depending on the variant that is used). The classic algorithm hires the best candidate with probability $1/e - o(1)$. For our mechanism, with probability $1-e/n$, we run the classic algorithm using the $\sv$-values of $\sender$, with probability $e/n$ the $\rv$-values of $\receiver$. Note that $e/n = o(1)$, so as the number of candidates increases, the mechanism tends to run the classic algorithm almost exclusively in the sender-optimal version. By assumption, if the mechanism decided to signal $\sigma_{t} = \NOHIRE$ in all rounds $t = 1,\ldots,n-1$, it sets $\sigma_n = \HIRE$.

\begin{theorem}\label{theo:receiverOrdinalSecretaryNoDisclosure}
	The Simple Secretary mechanism is persuasive in the secretary scenario without disclosure. It yields a success probability of $1/e-o(1)$ and an expected utility of $(1/e-o(1))\cdot \sv_{\max}$ for $\sender$.
\end{theorem}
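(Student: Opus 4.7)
The plan is to verify persuasiveness in both signal directions and then read off the sender's guarantee from the construction. Let $E_R$ denote the event that the mechanism runs Dynkin's algorithm on $\receiver$'s values (probability $e/n$) and $E_S$ the sender-value variant (probability $1-e/n$). The key structural observation is that Dynkin's stopping probability at round $t>s$ depends only on the relative ranking of values among the first $t$ arrivals, so $\Pr[\sigma_t = \HIRE \mid E_R] = \Pr[\sigma_t = \HIRE \mid E_S] = s/(t(t-1))$. By Bayes' rule, this yields the clean identity $\Pr[E_R \mid \sigma_t = \HIRE] = e/n$ in every round.

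For \HIRE-persuasiveness I would use the standard Dynkin calculation $\Pr[\theta_t = c_\receiver \mid \sigma_t = \HIRE, E_R] = t/n$, since under $E_R$ the joint event occurs exactly when $c_\receiver$ is at position $t$ and the best-receiver among $\theta_1,\ldots,\theta_{t-1}$ falls in the sample, a probability of $(1/n)\cdot s/(t-1)$. A symmetry argument then yields $\Pr[\theta_r = c_\receiver \mid \sigma_t = \HIRE, E_\cdot] = 1/n$ in both branches and for every $r>t$: conditioning on $c_\receiver$ at position $r$ leaves the first $t$ positions filled by a uniformly random sequence of the other $n-1$ candidates, so the Dynkin-stopping probability at $t$ is unchanged at $s/(t(t-1))$. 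Combining,
\[
\Pr[\theta_t = c_\receiver \mid \sigma_t = \HIRE] \;\ge\; \frac{e}{n} \cdot \frac{t}{n} \;\ge\; \frac{e}{n} \cdot \frac{s+1}{n} \;\ge\; \frac{1}{n} \;=\; \Pr[\theta_r = c_\receiver \mid \sigma_t = \HIRE],
\]
using $s = \lfloor n/e\rfloor$, so $\receiver$ weakly prefers to follow any \HIRE\ in rounds $s+1,\ldots,n-1$; round $t=n$ is trivial since not hiring yields utility $0$.

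For \NOHIRE-persuasiveness, during sampling ($t\le s$) the posterior on $c_\receiver$'s position remains uniform and gives $\Pr[\theta_t = c_\receiver \mid \NOHIRE] = 1/n$, while the receiver's continuation utility is at least $(e/n)(1/e-o(1)) > 1/n$ from the $E_R$ branch alone (Dynkin's classical success probability). In post-sampling rounds the event $\sigma_t = \NOHIRE$ under $E_R$ forces $c_\receiver \neq \theta_t$ (else $c_\receiver$, being overall best, would have triggered \HIRE), so the $E_R$ contribution to the posterior vanishes; a parallel symmetry calculation, using $\Pr[\NOHIRE\text{ through } t \mid E_\cdot] = s/t$ to preserve $\Pr[E_R \mid \NOHIRE\text{ through }t] = e/n$, bounds the $(1-e/n)$-weighted $E_S$ posterior against the continuation utility. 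The sender's guarantee is then immediate: conditional on $E_S$, Dynkin on sender values hires $c_\sender$ with probability $1/e-o(1)$, so the unconditional sender success probability is at least $(1-e/n)(1/e-o(1)) = 1/e - o(1)$, and since the only candidate yielding $\sv_{\max}$ is $c_\sender$, the expected utility is at least $(1/e-o(1))\sv_{\max}$. The main technical obstacle I anticipate is the post-sampling \NOHIRE\ analysis under $E_S$: the posterior on $\theta_t = c_\receiver$ depends on $c_\receiver$'s sender-value rank, and one must verify that the receiver's continuation utility still dominates the $(1-e/n)$-weighted $E_S$ posterior; I expect this to go through by the same symmetry reasoning once the conditioning is carefully tracked.
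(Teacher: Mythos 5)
Your high-level plan matches the paper's: both exploit that the HIRE-signal distribution is identical under $E_S$ and $E_R$, so the posterior $\Pr[E_R \mid \text{any signal history}]$ stays at $e/n$, and both compute $\Pr[\theta_t = c_\receiver \mid \sigma_t = \HIRE, E_R] = t/n$ and compare against the deviation value $1/n$ via a symmetry argument. The HIRE-persuasiveness calculation and the sender's-guarantee step are correct and essentially identical to the paper's.

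The gap you flag at the end is, however, exactly where the paper does its real work, and it does not ``go through by the same symmetry reasoning.'' After the sample phase, a $\NOHIRE$ signal under $E_S$ makes the posterior $\Pr[\theta_t = c_\receiver \mid \NOHIRE, E_S] = \frac{t}{n(t-1)} > \frac{1}{n}$ (when utilities are negatively correlated), so the deviation value $(1-e/n)\cdot\frac{t}{n(t-1)}$ strictly exceeds the naive $1/n$ and actually grows as $t \downarrow s+1$. The paper has to compute the genuine conditional continuation value (its equation for the probability of eventually hiring $c_\receiver$ given all $\NOHIRE$ through $t$), reduce the comparison to a harmonic-sum inequality
$\frac{n-t}{(n-1)(t-1)\sum_{r=t-1}^{n-2} 1/r} \le e/n$,
prove monotonicity in $t$ (Claim~\ref{cl:monotone}) to localize to $t=s+1$, and then verify that base case by explicit harmonic-number estimates (handling small $n$ by inspection). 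None of this is a symmetry observation; it's a quantitative inequality that happens to hold for the choice $s = \lfloor n/e\rfloor$ and could fail for other thresholds.

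Two smaller issues. First, your in-sample bound $(e/n)(1/e-o(1)) > 1/n$ is not a valid inequality as written, since $(e/n)\cdot(1/e) = 1/n$ exactly and the sign of the $o(1)$ correction is not obvious (in fact the paper reduces the $t\le s$ case to the same harmonic inequality rather than bounding the $E_R$-branch alone). Second, you do not address what happens when utilities are not negatively correlated; the paper closes this with a separate argument showing that raising $c_\receiver$'s sender-rank only helps the receiver's incentives, and that argument needs to be present to conclude persuasiveness in general.
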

\begin{proof}
	We begin the proof by showing persuasiveness.
	
	Due to random order and the symmetric structure of the secretary algorithm, it produces the same distribution of $\HIRE$ signals, both for the sender-optimal or the receiver-optimal version of the secretary algorithm. In particular, for every round $t > \lfloor n/e \rfloor$, and for any set $A_t$ of arrived candidates in the rounds $1,\ldots,t$, the mechanism sends a hire signal in round $t$ when the best candidate from $A_t$ arrives in round $t$ and the second best arrived in the sample phase. Hence, $\Pr[\sigma_t = \HIRE \mid A_t] = \frac{1}{t} \cdot \frac{s}{t-1}$, which is the same for every set $A_t$ and for both the sender-optimal or receiver-optimal variant. There is no disclosure, so $\receiver$ cannot distinguish which variant is used.
	
	We first show persuasiveness for negatively correlated utilities, i.e., when the $i$-th best candidate for $\sender$ is the $(n-i+1)$-th best candidate for $\receiver$.
	
	Consider a round $t \in [n-1]$.
	\begin{description}
		\item[Case $\sigma_t = \HIRE$:] With probability $e/n$ the receiver-optimal version is run and the candidate is the best so far in terms of $\rv$. The probability that this candidate is actually $c_\receiver$ is $t/n$. If the sender-optimal version is run, there is no $\HIRE$ signal for $c_\receiver$. Hence, overall in this case $\Pr[\theta_t = c_\receiver] = et/n^2$.
		
		If $\receiver$ decides to deviate to a later round, she does not receive additional information from $\sender$. Due to random order arrival and the independence of the $\HIRE$ signal of the set of arrived candidates, in any given round $r > t$ we have $\Pr[\theta_r = c_\receiver] = \frac{n-t}{n} \cdot \frac{1}{n-t} = \frac{1}{n}$. Since $t \ge s+1 > n/e$ we have $et/n^2 > 1/n$. Hence, following the $\HIRE$ signal is optimal.
		
		\item[Case $t > s$ and $\sigma_i = \NOHIRE$ for all $i \le t$:] With probability $e/n$ the receiver-optimal version is run and the current candidate in round $t$ is not the best so far in terms of $\rv$. Then it cannot be $c_\receiver$. With probability $1-e/n$ the sender-optimal version is run and the candidate in round $t$ is not the best so far in terms of $\sv$. Then, $c_\receiver$ has probability $t/n$ to be among the set $A_t$ of the first $t$ candidates. Given that $c_\receiver \in A_t$, it arrives in round $t$ with probability $\frac{1}{t-1}$ -- there have been only $\NOHIRE$ signals, so the best candidate from $A_t$ must be in the sample phase. Hence, overall in this case $\Pr[\theta_t = c_\receiver] = \left(1-\frac{e}{n}\right) \cdot \frac{t}{n(t-1)} = \frac{(n-e)t}{n^2(t-1)}$.
		
		If instead $\receiver$ follows the mechanism, then in round $r > t$ the candidate is best so far with probability $\frac{1}{r} \cdot \frac{s}{r-1} = \frac{s}{r(r-1)}$. Note that in this case, we condition on $\NOHIRE$ signals in all rounds $\le t$. Hence, candidate $\theta_r$ is best so far with probability $\frac{s/(r(r-1))}{s/t}$. This candidate is $c_\receiver$ with probability $er/n^2$ as noted above. 
		
		Moreover, in the last round, there is always a $\HIRE$ signal. The probability that no candidate after the sample phase is ever the best so far is $\frac{s}{n}$. Since we condition on $\NOHIRE$ signals in all rounds $\le t$, this gives a probability of $\frac{s/n}{s/t}$. This candidate can be $c_\receiver$ only when we run the sender-optimal variant. Hence, the probability that this candidate is $c_\receiver$ is $\left(1-\frac{e}{n}\right) \cdot \frac{1}{n-1}$. Hence, in this case, following the mechanism $\receiver$ obtains $c_\receiver$ with probability
		\begin{equation}
		\label{eq:conditionedProp}
		\sum_{r=t+1}^n \frac{s/(r(r-1))}{s/t} \cdot \frac{er}{n^2} + \frac{s/n}{s/t} \cdot \left(1-\frac{e}{n}\right)\cdot \frac{1}{n-1} = \frac{t}{n^2} \left( \frac{n-e}{n-1} + \sum_{r=t+1}^n \frac{e}{r-1} \right)
		\end{equation}
		Thus, deviation by hiring in round $t$ is unprofitable if
		\[ 
		\frac{n-e}{t-1} \le \frac{n-e}{n-1} + e\cdot \sum_{r=t+1}^n \frac{1}{r-1} 			\]
		or, equivalently
		\[ 
		\frac{1}{t-1} - \frac{1}{n-1}	\le \frac{e}{n} \left(\frac{1}{t-1} - \frac{1}{n-1} + \sum_{r=t+1}^n \frac{1}{r-1} \right) 	\]
		and, thus,
		\begin{align}	
		\label{eq:crazy} 
		\frac{n-t}{(n-1)(t-1)\left(\sum_{r=t-1}^{n-2} \frac{1}{r} \right)} \le \frac{e}{n}\enspace.
		\end{align}
		
		\begin{claim}
			\label{cl:monotone}
			The lower bound in equation~\eqref{eq:crazy} is monotonically decreasing in $t$. Hence, the strongest lower bound arises in round $t = s+1$.
		\end{claim}
		
		\begin{proof}
			We want to prove that 
			$$
			\frac{n-t}{(n-1)(t-1)\left(\sum_{r=t-1}^{n-2} \frac{1}{r} \right)} \le
			\frac{n-t+1}{(n-1)(t-2)\left(\sum_{r=t-2}^{n-2} \frac{1}{r} \right)}
			$$
			We drop the terms $(n-1)$, bring the summations up, and cluster the non-summation terms on the right side. This yields
			$$\sum_{r=t-2}^{n-2} \frac{1}{r} \le \left(1 + \frac{n-1}{(n-t)(t-2)}\right) \cdot \left(\sum_{r=t-1}^{n-2} \frac{1}{r} \right) $$
			which implies
			$$ \frac{1}{t-2} \le \frac{n-1}{(n-t)(t-2)} \cdot \left(\sum_{r=t-1}^{n-2} \frac{1}{r} \right) $$
			and, hence
			$$ n-t \le (n-1) \left(\sum_{r=t-1}^{n-2} \frac{1}{r} \right)\enspace. $$
			This is true for $t = n-1$. Suppose it is true for some value of $t$, then it is also true for $t'= t-1$ -- the left-hand side increases by 1, the right-hand side increases by $\frac{n-1}{t-2} > 1$. This proves the claim.
		\end{proof}
		
		Using the claim, we can restrict attention to
		\begin{align*} 
		\frac{e}{n} &\ge \frac{n-s-1}{(n-1)s\left(\sum_{r=s}^{n-2} \frac{1}{r} \right)}
		\end{align*}
		Recall that $s = \lfloor n/e \rfloor$. For small $n \le 8$, the bound follows by direct inspection. For $n \ge 9$, we use the following inequality (cf \cite{Young91}), where $\gamma$ is the Euler-Mascheroni constant and $H_n$ the $n$-th harmonic number: % (Young 1991, ``Euler's Constant''):
		\[ \frac{1}{2(n+1)} < H_n - \ln n - \gamma < \frac{1}{2n} \]
		and bound
		\begin{equation}
		\label{eqn:harmonicnumber}
		\sum_{t=\left\lfloor \frac{n}{e} \right\rfloor}^{n-2} \frac{1}{t} = H_{n-2} - H_{\left\lfloor \frac{n}{e} \right\rfloor-1} \ge  \frac{1}{2(n-1)} + \ln(n-2) - \frac{1}{2(\left\lfloor \frac{n}{e}\right\rfloor-1)} - \ln \left(\left\lfloor \frac{n}{e} \right\rfloor -1\right).
		\end{equation}
		Further, we have $n \ge 3+2e$ and thus $2n-2-2e \ge n+1$. Hence, the following holds:
		\begin{align*}
		1 \quad & \ge \quad \frac{n+1}{2(n-(1+e))} %\\
		\quad = \quad \frac{n^2+n}{2n^2-2(1+e)n} \\
		& \ge \quad \frac{2n^2(\frac{1}{e}-\frac{1}{e^2})+n(e+\frac{2}{e}-3)}{2n^2-2(1+e)n+2e} %\\
		%			&= \frac{2\frac{n^2}{e}-3n+en-2\frac{n^2}{e^2}+2\frac{n}{e}}{2n^2-2(1+e)n+2e} \\
		%			&= \frac{2\frac{n^2}{e}-2n-2\frac{n^2}{e^2}+2\frac{n}{e}-n(1+e)}{2(n-1)(n-e)} \\
		%			&= \frac{(2n-2\frac{n}{e})(\frac{n}{e}-1)-n+ne}{2(n-1)(n-e)} \\
		%			&= \frac{2n(1-\frac{1}{e})(\frac{n}{e}-1)-(n-e)+(n-1)e}{2(n-1)(n-e)} \\
		\quad = \quad \frac{2n(1-\frac{1}{e})}{2(n-1)e} - \frac{1}{2(n-1)} + \frac{1}{2(\frac{n}{e}-1)} \\
		& \ge \quad \frac{n(1-\frac{1}{e})}{(n-1)e} - \frac{1}{2(n-1)} + \frac{1}{2(\lfloor \frac{n}{e} \rfloor -1)}
		\end{align*}
		Combining this with (\ref{eqn:harmonicnumber}) and the fact that $n-2 \ge e \left(\left\lfloor \frac{n}{e} \right\rfloor -1\right)$ and hence $\ln \frac{n-2}{\left\lfloor \frac{n}{e} \right\rfloor-1} \ge 1$, we get 
		\[ \ln \frac{n-2}{\left\lfloor \frac{n}{e} \right\rfloor-1} + \frac{1}{2(n-1)} - \frac{1}{2\left(\left\lfloor \frac{n}{e} \right\rfloor -1\right)} \ge \frac{n\left(1-\frac{1}{e}\right)}{(n-1)e}
		\]
		and therefore 
		\[\frac{e}{n} \ge \frac{n-\left\lfloor \frac{n}{e} \right\rfloor - 1}{(n-1)\left\lfloor \frac{n}{e} \right\rfloor \sum_{t=\left\lfloor \frac{n}{e} \right\rfloor}^{n-2} \frac{1}{t}}\]
		as desired.
		
		\item[Case $t\le s$:]
		Finally, suppose \receiver\ deviates and hires a candidate during round $t \in [s]$. The probability of hiring $c_\receiver$ is $1/n$, no matter which variant of the algorithm is running. 
		
		If we want to bound the probability to hire $c_\receiver$ when following the mechanism, we can apply the bound in equation~\eqref{eq:conditionedProp}. However, since $t \le s$, we always have $\NOHIRE$ signals until round $t$. We remove the conditioning on having $\NOHIRE$ signals until round $t$, and the probability becomes 
		\[ 
		\sum_{r=s+1}^n \frac{s}{r(r-1)} \cdot \frac{er}{n^2} + \frac{s}{n}\cdot \frac{n-e}{n(n-1)} = \frac{s}{n^2} \left( \frac{n-e}{n-1} + \sum_{r=s+1}^n \frac{e}{r-1} \right)\enspace.
		\]
		Thus, accepting the candidate in round $t \le s$ is unprofitable if
		\begin{align*} 
		\frac{n}{s} &\le \frac{n-e}{n-1} + \sum_{r=s+1}^n \frac{e}{r-1}\enspace.
		\end{align*}
		This implies
		\begin{align*} 
		\frac{e}{n} \left(\sum_{r=s}^{n-2} \frac{1}{r} \right) &\ge \frac{1}{s} - \frac{1}{n-1}
		\end{align*}
		and, thus,
		\begin{align*} 
		\frac{e}{n} &\ge \frac{n-s-1}{(n-1)s\left(\sum_{r=s}^{n-2} \frac{1}{r} \right)}
		\end{align*}
		which we proved true in the previous case. 
	\end{description}
	Hence, the mechanism is persuasive if utilities are negatively correlated.
	
	If $c_\receiver$ is not the worst candidate for $\sender$, the following changes to the probabilities occur.
	
	Obviously, the probabilities of hiring $c_\receiver$ when running the receiver-optimal algorithm are the same. When running the sender-optimal algorithm, the probability of getting $c_\receiver$ upon a signal $\HIRE$ increases. Suppose the sender-rank of $c_\receiver$ is $x < n$. This means that there are $x-1$ candidates which $\sender$ prefers over $c_\receiver$. Thus, a signal $\HIRE$ for $c_\receiver$ implies that all $x-1$ candidates have to arrive after the current round. If they arrive during the sample phase, $c_\receiver$ will not get a $\HIRE$ signal, if they arrive between sample phase and the current round, they would have gotten the signal $\HIRE$ instead of $c_\receiver$.
	
	Thus, instead of 0, the probability of getting $c_\receiver$ upon a $\HIRE$ signal in the sender-optimal variant is $\frac{t}{n} \cdot \frac{n-t}{n-1} \cdot \frac{n-t-1}{n-2} \cdot \dots \cdot \frac{n-t-(x-2)}{n-(x-1)} \ge 0.$
	
	Further, a candidate that is not recommended for hire has an even smaller probability of being $c_\receiver$. The exact probability amounts to $\frac{t}{n}\cdot\frac{1}{t-1}\cdot \left(1-\frac{n-t}{n-1} \cdot \frac{n-t-1}{n-2} \cdot \dots \cdot \frac{n-t-(x-2)}{n-(x-1)}\right) \quad \le \quad \frac{t}{n(t-1)}$.
	
	The probabilities of getting $c_\receiver$ when hiring in a round during the sampling phase or in a round after the \HIRE\ signal remain $1/n$ by similar calculations as above.
	
	Overall, the incentive to follow the algorithm weakly increases for decreasing rank $x$. The mechanism is persuasive when $e_\receiver$ is the worst candidate for $\sender$, and also of $c_\receiver$ is at any rank $1 \le x < n$ in the preference of \sender. 
	
	Together with the fact that the mechanism clearly hires the sender-optimal candidate $c_\sender$ with probability $1/e-o(1)$ and thus yields an expected utility of $(1/e-o(1))\sv_{\max}$ for $\sender$, this proves the theorem.
\end{proof}

\subsection{Basic Scenario with Disclosure}
In this section, we consider the basic scenario with disclosure, i.e., both $\sender$ and $\receiver$ have the same a-priori information as in the basic scenario. Now the rejected candidates are revealed to $\receiver$. This additional information obtained by disclosure does not significantly change the results for $\sender$.

In round $t$, the Adaptive Elementary mechanism signals as follows. If there has been no $\HIRE$ signal in any earlier round and $c_\sender$ arrives in round $t$, it signals $\HIRE$. If there has been no $\HIRE$ signal in any earlier round and $c_\receiver$ arrives in round $t$, it signals $\HIRE$ with probability $1/(n-t)$. It signals $\NOHIRE$ in any other case.
\begin{lemma}
	The Adaptive Elementary mechanism is persuasive in the basic scenario with disclosure.
\end{lemma}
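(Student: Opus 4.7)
My plan is to analyze an arbitrary round $t$ from $\receiver$'s viewpoint, conditional on her having obediently rejected in all previous rounds $1,\ldots,t-1$. Her entire information at the moment of decision consists of the disclosed rejected set $R_{t-1}$ together with the current signal $\sigma_t$. The first observation to exploit is that the event ``no prior $\HIRE$'' forces $c_\sender \notin R_{t-1}$: had $c_\sender$ arrived in an earlier round $t' < t$, the mechanism would have deterministically signaled $\HIRE$ in round $t'$, and an obedient $\receiver$ would have hired, contradicting that we are still in round $t$. I then split on whether $c_\receiver \in R_{t-1}$. If yes, $\receiver$ has zero probability of ever hiring $c_\receiver$ regardless of her actions, so she is indifferent and the tie-breaking rule in favor of $\sender$ lets her follow the signal. (The aligned case $c_\sender = c_\receiver$ is trivial and can be dispensed with separately.)

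The substantive case is $c_\receiver \notin R_{t-1}$. Here $\theta_t$ is uniform over the $n-t+1$ remaining candidates, which include both $c_\sender$ and $c_\receiver$. A direct computation gives
\[
\Pr[\sigma_t = \HIRE \mid R_{t-1}] \;=\; \frac{1}{n-t+1} + \frac{1}{(n-t+1)(n-t)} \;=\; \frac{1}{n-t},
\]
and then Bayes yields
\[
\Pr[\theta_t = c_\receiver \mid \sigma_t = \HIRE,\, R_{t-1}] \;=\; \Pr[\theta_t = c_\receiver \mid \sigma_t = \NOHIRE,\, R_{t-1}] \;=\; \frac{1}{n-t+1}.
\]
Next, let $V(t)$ denote $\receiver$'s probability of eventually hiring $c_\receiver$ under obedience when $c_\receiver \notin R_{t-1}$. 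The natural recurrence
\[
V(t) \;=\; \frac{1}{(n-t)(n-t+1)} + \frac{n-t-1}{n-t+1}\, V(t+1),
\]
with base $V(n)=1$, yields $V(t) = \frac{1}{n-t+1}$ by backward induction on $t$.

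It then remains to rule out profitable deviations. Hiring upon $\HIRE$ yields exactly $\frac{1}{n-t+1}$ by the posterior above. Passing upon $\HIRE$ is the delicate case: once the mechanism has signaled $\HIRE$, its rule ``no prior $\HIRE$'' forces it to signal $\NOHIRE$ in every later round, so $\receiver$'s continuation is essentially signal-free over the $n-t$ remaining candidates. I plan to verify by a short independent dynamic-programming argument that no disclosure-based strategy achieves more than $\frac{1}{n-t}$ in that setting; combined with the fact that with probability $\frac{1}{n-t+1}$ the candidate just rejected was $c_\receiver$ itself, the deviation value becomes $\frac{n-t}{n-t+1}\cdot\frac{1}{n-t} = \frac{1}{n-t+1}$. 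Hiring upon $\NOHIRE$ likewise gives $\frac{1}{n-t+1}$, whereas passing and continuing obediently gives $\frac{n-t}{n-t+1}\, V(t+1) = \frac{1}{n-t+1}$. Every comparison is tied, so tie-breaking in favor of $\sender$ makes obedience the prescribed choice in both signal outcomes. The main technical obstacle is precisely this signal-free continuation bound after a $\HIRE$ deviation, since disclosure in principle supplies $\receiver$ with nontrivial side information; a minor edge concern is round $t=n$, where $\frac{1}{n-t}$ is formally undefined and the standard convention of a deterministic $\HIRE$ in the final round supplies both the well-definedness of the mechanism and the induction base $V(n)=1$.
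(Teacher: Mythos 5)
Your proposal is correct and follows essentially the same route as the paper: condition on whether $c_\receiver$ has been rejected, compute that the posterior $\Pr[\theta_t = c_\receiver]$ equals $\frac{1}{n-t+1}$ after either signal, establish the obedient continuation value $\frac{1}{n-t+1}$ via a survival recurrence (the paper writes the equivalent telescoping product $\prod_{i=t}^{r-1}(1-\frac{1}{n-i+1})\cdot\frac{1}{n-r+1}$), and observe that every deviation ties. The one step you defer --- bounding the signal-free continuation by $\frac{1}{n-t}$ after ignoring a $\HIRE$ --- is a one-line backward induction $W(r) = \max\bigl\{\frac{1}{n-r+1}, \frac{n-r}{n-r+1}W(r+1)\bigr\} = \frac{1}{n-r+1}$; the paper glosses over this (``she will not get any additional information'') even though disclosure does reveal identities, so your explicit treatment is, if anything, slightly more careful.
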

\begin{proof}
	If $c_\receiver$ has already arrived and was rejected, $\receiver$ will not get her best candidate. Recall that the rejected candidates are revealed. Since $\receiver$ is indifferent among the remaining candidates, it is optimal to follow the remaining signals of $\sender$. Hence, for the remainder of the persuasiveness argument, we assume that $c_{\receiver}$ has not been rejected.
	
	If the signal in round $t$ is $\HIRE$, the current candidate is $c_\receiver$ with probability 
	$\frac{1}{n-t} \Big/ \left(1+\frac{1}{n-t}\right) = \frac{1}{n-t+1}$. If $\receiver$ decides to deviate and hire in some later round $r$, she will not get any additional information. Thus, due to random-order arrival, the probability of hiring $c_{\receiver}$ in round $r$ is $\left(1 - \frac{1}{n-t+1}\right)  \cdot \frac{1}{n-t} = \frac{1}{n-t+1}$. Hence, it is optimal to follow the signal.
	
	Now suppose up to and including round $t$ there have been only $\NOHIRE$ signals. If $\receiver$ deviates and hires the current candidate in round $t$, she hires $c_\receiver$ with probability $\frac{n-t-1}{n-t}\Big/ \left(n-t-1+\frac{n-t-1}{n-t}\right) = \frac{1}{n-t} \Big/ \left(1+\frac{1}{n-t}\right) = \frac{1}{n-t+1}$. If instead $\receiver$ follows the mechanism, then suppose the $\HIRE$ signal comes in round $r > t$. We showed above that this gives a success probability of $\frac{1}{n-r+1}$ -- if $c_{\receiver}$ \emph{was not rejected in one of the rounds $i = t,\ldots,r-1$}. By the previous paragraph, in round $i$ with signal $\signal_i = \NOHIRE$, candidate $c_\receiver$ is getting rejected with probability $ \frac{1}{n-i+1}$. Thus, if $\receiver$ follows the mechanism, the success probability is $\prod_{i=t}^{r-1} \left(1-\frac{1}{n-i+1}\right) \cdot \frac{1}{n-r+1} = \frac{1}{n-t+1}$.
	
	Hence, it is optimal for $\receiver$ to follow the mechanism.
\end{proof}
\begin{proposition}\label{prop:receiverOrdinalKnownUtilDisclosure}
	In the basic scenario with disclosure, the Adaptive Elementary mechanism yields a success probability of $1-o(1)$ and an expected utility of $(1-o(1))\cdot \sv_{\max}$ for $\sender$.
\end{proposition}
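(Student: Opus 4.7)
My plan is to directly compute the probability that $c_\sender$ is hired under the Adaptive Elementary mechanism and show it is $1-1/n$. The key observation is that a $\HIRE$ signal is only ever issued to $c_\sender$ or $c_\receiver$, so $c_\sender$ fails to be hired in exactly one situation: $c_\receiver$ arrives before $c_\sender$ and the mechanism's randomized $\HIRE$ signal at that round actually fires. In every other outcome (either $c_\sender$ arrives first, or $c_\receiver$ arrives first but is rejected), the signal $\HIRE$ is issued exactly when $c_\sender$ arrives, and by the preceding persuasiveness lemma $\receiver$ complies.

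Let $T_\sender, T_\receiver \in [n]$ be the arrival rounds of the two distinguished candidates; under the uniform random arrival their pair is uniform over ordered pairs of distinct elements of $[n]$. First I would split on the order of arrival. On the event $\{T_\sender < T_\receiver\}$, which has probability $1/2$, the sender always wins. On the event $\{T_\receiver < T_\sender\}$, conditional on $T_\receiver = t$, the mechanism signals $\HIRE$ for $c_\receiver$ with probability $1/(n-t)$, so the sender wins with probability $(n-t-1)/(n-t)$. Since $\Pr[T_\receiver = t,\, T_\sender > t] = (n-t)/(n(n-1))$, I would sum
\begin{equation*}
\sum_{t=1}^{n-1} \frac{n-t}{n(n-1)} \cdot \frac{n-t-1}{n-t} \;=\; \frac{1}{n(n-1)} \sum_{k=0}^{n-2} k \;=\; \frac{n-2}{2n},
\end{equation*}
so that the total success probability is $\tfrac{1}{2} + \tfrac{n-2}{2n} = 1 - \tfrac{1}{n}$.

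Finally, since all sender valuations are non-negative and $c_\sender$ realizes value $\sv_{\max}$, the expected utility for $\sender$ is at least $(1-1/n)\,\sv_{\max}$. Both bounds are $(1-o(1))$ and $(1-o(1))\,\sv_{\max}$ respectively, which concludes the proposition. No step looks delicate here; the only thing to be careful about is ensuring that $c_\sender$ really does receive a $\HIRE$ signal upon arrival whenever no previous $\HIRE$ has been sent, which is immediate from the definition of the mechanism (note the distinctness assumption $c_\sender \neq c_\receiver$ coming from disjointness of utilities).
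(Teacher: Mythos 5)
Your proof is correct, and it takes a genuinely different route from the paper's. The paper proceeds by a backward recursion: it defines $A_j$ as the probability of hiring $c_\sender$ when $j$ candidates remain (both $c_\sender$ and $c_\receiver$ among them), derives the one-step recurrence $A_j = \tfrac{1}{j} + \tfrac{1}{j}\cdot\tfrac{j-2}{j-1} + \tfrac{j-2}{j}A_{j-1}$ with base case $A_2 = \tfrac{1}{2}$, and unrolls it to obtain $A_n = 1 - \tfrac{1}{n}$. You instead condition directly on the relative arrival order of $c_\sender$ and $c_\receiver$ and, in the case $T_\receiver < T_\sender$, further condition on $T_\receiver = t$; since the mechanism's coin flip fails with probability $\tfrac{n-t-1}{n-t}$ independently of the arrival permutation, you can sum $\Pr[T_\receiver = t,\,T_\sender > t]\cdot\tfrac{n-t-1}{n-t}$ over $t$ in closed form, again yielding $1 - \tfrac{1}{n}$. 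Your argument is arguably a bit more transparent — it avoids guessing and verifying a recursion solution by exposing the one event on which the sender loses ($c_\receiver$ arrives first and the randomized signal fires) — while the paper's recursion is more mechanical to generalize but requires an algebraic unroll. Both correctly defer incentive compatibility to the accompanying persuasiveness lemma, and your final observation that the expected utility is at least $(1-1/n)\sv_{\max}$ (since $c_\sender$ realizes $\sv_{\max}$ and all utilities are non-negative) is exactly right.
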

\begin{proof}
	
	Conditional on the fact that the sender-optimal candidate $c_\sender$ has not shown up in rounds ${1,\dots,t-1}$, $c_\sender$ arrives in round $t$ with probability $1/(n-t+1)$. A similar probability holds for the receiver-optimal candidate $c_\receiver$ (i.e., conditional on $c_\receiver$ not having arrived in an earlier round, the current candidate being $c_\receiver$). 
	
	We first bound the success probability if $\receiver$ follows the mechanism. Let $A_j$ be the probability of hiring $c_\sender$ if $j \ge 2$ candidates remain and $c_\sender$ and $c_\receiver$ are both among those remaining candidates. The base case is $A_2 = 1/2$ as whichever candidate shows up first will be recommended with probability 1. We can express $A_j$ by the following recursion:
	\[A_j = \underbrace{\frac{1}{j}\cdot 1}_{c_\sender \text{ arrives}} + \underbrace{\frac{1}{j}\cdot \frac{j-2}{j-1}}_{c_{\receiver} \text{ arrives}} + \underbrace{\frac{j-2}{j}\cdot A_{j-1}}_{\text{neither}}\]
	Solving the recursion gives us
	\begin{align*}
	A_n \quad &= \quad \frac{1}{n} + \frac{1}{n}\cdot \frac{n-2}{n-1} + \frac{n-2}{n}\cdot A_{n-1} %\\
	%		&= \frac{1}{n}\cdot \frac{2n-3}{n-1} + \frac{n-2}{n}\cdot \left(\frac{1}{n-1} \cdot \frac{2n-5}{n-2} + \frac{n-3}{n-1}\cdot A_{n-2} \right)\\
	\quad = \quad \frac{2n-3}{n(n-1)} + \frac{2n-5}{n(n-1)} + \frac{n-2}{n}\cdot \frac{n-3}{n-1}\cdot A_{n-2} \\
	%		&= \frac{2n-3}{n(n-1)} + \frac{2n-5}{n(n-1)} + \frac{(n-2)(n-3)}{n(n-1)}\cdot \left(\frac{2n-7}{(n-2)(n-3)} + \frac{n-4}{n-2}\cdot A_{n-3}\right) \\
	&= \quad \frac{2n-3}{n(n-1)} + \frac{2n-5}{n(n-1)} + \frac{2n-7}{n(n-1)} + \frac{(n-4)(n-3)}{n(n-1)}\cdot A_{n-3} \\
	%		&= \sum_{i=1}^{n-2} \frac{2(n-i)-1}{n(n-1)} + \frac{(n-(n-1))(n-(n-2))}{n(n-1)}A_2 \\
	&= \quad \sum_{i=1}^{n-2} \frac{2(n-i)-1}{n(n-1)} + \frac{2}{n(n-1)} \cdot \frac{1}{2} %\\
	%		&= \sum_{i=1}^{n-1} \frac{2(n-i)-1}{n(n-1)} \\
	%		&= \sum_{i=1}^{n-1} \frac{2i}{n(n-1)} - \frac{1}{n} \\
	\quad = \quad 1-\frac{1}{n}\enspace.
	\end{align*}
	
	This proves the approximation guarantee. 
\end{proof}

\subsection{Secretary Scenario with Disclosure}

\begin{algorithm}[t]
	\caption{\label{algo:FirstBest} First-Opt Mechanism}
	\DontPrintSemicolon
	\KwIn{Number of candidates $n$, sample size $s > 1$}
	$A_0 \leftarrow \emptyset$ \;
	\For{$t=1$ to $n-1$}{
		%observe $t$-th candidate $\theta_t$ \\
		$A_t \leftarrow A_{t-1} \cup \{\theta_t\}$\\ 
		\lIf{$t\le s$}{Signal $\NOHIRE$\hfill \texttt{// only $\NOHIRE$ in the sample phase}} 
		\ElseIf{$(\theta_t = \arg\max_{i \in A_t} \sv_i)$ or $(\theta_t = \arg\max_{i \in A_t} \rv_i)$}{Signal $\HIRE$ and end mechanism}
		\lElse{Signal $\NOHIRE$}}
	Signal $\HIRE$ on $n$-th candidate. \hfill \texttt{// by definition}
\end{algorithm}

In the secretary scenario with disclosure, $\receiver$ sees in each round $t$ a signal $\sigma_t$. When $\receiver$ decides to reject the candidate in round $t$, she gets to see its identity $\theta_t$. Recall that we assume $\receiver$ knows the set of candidates upfront, while $\sender$ only learns the values of a candidate when it arrives. We consider the First-Opt mechanism (see Algorithm~\ref{algo:FirstBest}). It rejects the first $s$ candidates. We will choose $s = \lfloor n/2 \rfloor$. Subsequently, it recommends for hire the first candidate that is best among the arrived candidates, either for $\sender$ or for $\receiver$. In the last round, $\sender$ always signals $\HIRE$.

\begin{lemma}
	The First-Opt mechanism is persuasive in the secretary scenario with disclosure.
\end{lemma}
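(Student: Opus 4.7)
The plan is to invoke the one-shot deviation principle: fix a round $t$ and a history in which $\sigma_1 = \cdots = \sigma_{t-1} = \NOHIRE$ together with revealed identities $\theta_1,\ldots,\theta_{t-1}$, and, following the convention of earlier sections that after any receiver deviation from a $\HIRE$ signal the mechanism issues only $\NOHIRE$ thereafter, verify that obeying $\sigma_t$ yields weakly higher success probability than the single best deviation. If $c_\receiver \in A_{t-1}$ the receiver is indifferent over every continuation, so I condition throughout on $c_\receiver \notin A_{t-1}$; under this conditioning $\theta_t$ is uniform over the $n-t+1$ unrevealed candidates. Let $H_t$ denote the unrevealed candidates that would trigger $\HIRE$ upon arrival in round $t$, i.e.\ those strictly exceeding $\max_{i \in A_{t-1}} \sv_i$ or $\max_{i \in A_{t-1}} \rv_i$; note that $c_\receiver \in H_t$ since $c_\receiver$ is the global $\rv$-maximizer.

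Suppose first $\sigma_t = \HIRE$ (which forces $t>s$). A direct Bayes computation gives $\Pr[\theta_t = c_\receiver \mid \sigma_t = \HIRE] = 1/|H_t|$. If the receiver rejects, $\theta_t$ is revealed and she faces a residual game with $n-t$ remaining candidates, only $\NOHIRE$ signals, and ongoing disclosure. A one-line backward induction on the number of remaining candidates (the actions ``hire now'' and ``reject and continue'' yield the same value at each state) shows her optimal continuation success probability equals $1/(n-t)$ whenever $c_\receiver$ is still remaining. Hence deviation pays $(1-1/|H_t|)\cdot 1/(n-t)$ while obedience pays $1/|H_t|$, and the persuasiveness inequality reduces to $|H_t| \le n-t+1$, which is trivial because $H_t$ is a subset of the $n-t+1$ unrevealed candidates.

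Suppose instead $\sigma_t = \NOHIRE$. If $t>s$, the fact that $c_\receiver \in H_t$ would have forced $\sigma_t = \HIRE$ implies $\Pr[\theta_t = c_\receiver \mid \sigma_t = \NOHIRE]=0$, so deviating to hire yields $0$ and is trivially dominated. The main step is the case $t\le s$, where hiring yields $1/(n-t+1)$ because sample-phase signals are uninformative, and I must show that the obedient value $P_t$ also satisfies $P_t \ge 1/(n-t+1)$. For this I will isolate the single event that $c_\receiver$ occupies position $s+1$ of the uniform permutation of the unrevealed candidates over positions $t,\ldots,n$. This event has probability $1/(n-t+1)$, and on it the mechanism necessarily recommends $c_\receiver$: no $\HIRE$ can fire during the sample phase, and at round $s+1$ the candidate $c_\receiver$ is the best-so-far $\rv$-maximizer in $A_{s+1}$, triggering $\sigma_{s+1}=\HIRE$ and thus the recommendation of $c_\receiver$. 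Therefore $P_t \ge 1/(n-t+1)$ and obedience is (weakly) preferred, completing persuasiveness.

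The only non-routine step is the lower bound on $P_t$ for $t\le s$: computing $P_t$ directly would require bookkeeping over the probability that a premature $\HIRE$ driven by the sender side fires before $c_\receiver$ arrives, which is intricate because sender-triggered $\HIRE$ signals interact with the random arrival of $c_\receiver$. The ``$c_\receiver$ in position $s+1$'' shortcut avoids this bookkeeping entirely and already delivers exactly the bound $1/(n-t+1)$ needed.
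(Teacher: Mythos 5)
Your proof is correct and follows the same case structure as the paper's: condition on $c_\receiver \notin A_{t-1}$ (indifference otherwise), then compare obedience to the single best deviation separately for $\HIRE$ signals, post-sample $\NOHIRE$ signals, and sample-phase $\NOHIRE$ signals, using the convention that a rejected $\HIRE$ silences the sender. Your treatment is in places a bit tighter than the paper's own exposition. For $\HIRE$ you compute the exact posterior $1/|H_t|$ and the exact deviation value $(1-1/|H_t|)/(n-t)$ and compare them, reducing persuasiveness to the trivial inclusion $|H_t|\le n-t+1$, whereas the paper only passes through the worst-case bounds $\ge 1/(n-t+1)$ and $\le 1/(n-t+1)$. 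For the sample-phase $\NOHIRE$ case, the paper derives the factor $1/(n-s)$ by arguing that the mechanism hires a uniformly random candidate among those exceeding the sample-phase maxima (which is true, since the first such arrival is the one hired and is uniform over that set), while your ``$c_\receiver$ in position $s+1$'' event supplies the same bound $1/(n-t+1)$ directly without having to identify that set or argue about its symmetry; this is a clean shortcut that sidesteps the bookkeeping the paper implicitly relies on.
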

\begin{proof}
	If $c_\receiver$ has already arrived and was rejected (and, thus, revealed), $\receiver$ is indifferent among the remaining candidates. As such, it is optimal to follow the remaining signals of $\sender$. For the remainder of the proof of persuasiveness, we assume $c_{\receiver}$ has not been rejected.
	
	Suppose $\sigma_t = \HIRE$ in round $t \le n-1$, then $\Pr[\theta_t = c_\receiver \mid \sigma_t = \HIRE] \ge \frac{1}{n-t+1}$. If $\receiver$ decides to deviate and hire in some later round $r > t$, she will not get any additional information from $\sender$. Due to random-order arrival, the success probability is $\Pr[\theta_r = c_{\receiver} \mid \sigma_t = \HIRE] \le \left(1 - \frac{1}{n-t+1}\right) \cdot \frac{1}{n-t} = \frac{1}{n-t+1}$. Hence, it is optimal to follow the signal.
	
	Now suppose $\sigma_{t'} = \NOHIRE$ for all $t' \in [t]$. If $t \ge s+1$ after the sample phase, the candidate is not the best one among the arrived ones, so $\theta_t \neq c_\receiver$. It is optimal for $\receiver$ to wait for a round with a $\HIRE$ signal. Otherwise, if $t \le s$ during the sample phase, then $\Pr[\theta_t = c_\receiver] = \frac{1}{n-t+1}$ (note that we assume $c_\receiver$ has not been rejected). In contrast, if $\receiver$ follows the mechanism, then $\Pr[\theta_t, \theta_{t+1},\ldots,\theta_s \neq c_\receiver] = \prod_{t'=t}^{s} \left(1 - \frac{1}{n-t'+1}\right) = \frac{n-s}{n-t+1}$. Then, the sample phase ends and, due to disclosure, $\receiver$ can determine all remaining candidates that are better (in terms of $\sv$, $\rv$, or both) than all arrived (and revealed) candidates during the sample phase. All remaining $n-s$ candidates might fulfill this condition, with $c_\receiver$ being one of them. If $\receiver$ follows the mechanism, she will end up hiring a random one of these candidates. Hence, the success probability when following the mechanism is at least $\frac{n-s}{n-t+1} \cdot \frac{1}{n-s} = \frac{1}{n-t+1}$.
	Hence, it is optimal for $\receiver$ to follow the mechanism.
\end{proof}
\begin{theorem}\label{theo:receiverOrdinalSecretaryDisclosure}
	In the secretary scenario with disclosure, the First-Opt mechanism with $s = \lfloor n/2 \rfloor$ yields a success probability of at least $1/4-o(1)$ and an expected utility of at least $(1/4-o(1))\cdot \sv_{\max}$ for $\sender$.
\end{theorem}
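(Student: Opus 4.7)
The plan is to characterize the event that First-Opt hires $c_\sender$ in terms of the random arrival order, and then to bound its probability from below by a telescoping sum. My first observation is: conditional on $c_\sender$ arriving at round $t > s$, First-Opt hires $c_\sender$ iff no $\HIRE$ is signaled in any of the rounds $s+1, \dots, t-1$, and the latter is equivalent to both the $\sv$-maximizer and the $\rv$-maximizer of $A_{t-1}$ being located in positions $\{1, \dots, s\}$. Indeed, a post-sample round $r \in \{s+1, \dots, t-1\}$ triggers $\HIRE$ precisely when $\theta_r$ strictly increases the running $\sv$- or $\rv$-max on $A_r$; since $c_\sender \notin A_{t-1}$ in this regime, demanding that both running maxima stay frozen throughout these rounds is equivalent to the two maximizers of $A_{t-1}$ already lying in $A_s$.

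Next, I fix $t \in \{s+1, \dots, n\}$ and condition on the event that $c_\sender$ arrives in round $t$, which has probability $1/n$. Under this conditioning the other $n-1$ candidates occupy positions $\{1, \dots, n\}\setminus\{t\}$ in a uniformly random order, so $A_{t-1}$ is a uniformly random $(t-1)$-subset of the non-sender-optimal candidates. Given also the subset $A_{t-1}$, the two maximizers $i^* = \arg\max_{i \in A_{t-1}} \sv_i$ and $j^* = \arg\max_{j \in A_{t-1}} \rv_j$ are deterministic; their positions are uniform over the two-element subsets of $\{1, \dots, t-1\}$ when $i^* \neq j^*$, and uniform over $\{1, \dots, t-1\}$ otherwise. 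Hence the conditional probability that both maximizers land in $\{1, \dots, s\}$ equals $s(s-1)/((t-1)(t-2))$ in the former case and $s/(t-1)$ in the latter. Because $t \geq s+1$ implies $s/(t-1) \geq s(s-1)/((t-1)(t-2))$, the common lower bound $s(s-1)/((t-1)(t-2))$ applies regardless of $A_{t-1}$.

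Summing over $t$ and using the telescoping identity $\sum_{t=s+1}^n 1/((t-1)(t-2)) = 1/(s-1) - 1/(n-1)$ yields
\[
\Pr[\,\text{hire } c_\sender\,] \;\geq\; \frac{s(s-1)}{n}\left(\frac{1}{s-1} - \frac{1}{n-1}\right) \;=\; \frac{s(n-s)}{n(n-1)}.
\]
Plugging in $s = \lfloor n/2 \rfloor$, a short case analysis on the parity of $n$ shows $s(n-s)/(n(n-1)) = 1/4 + \Theta(1/n) \geq 1/4$ for all $n \geq 2$, which gives the success-probability bound. The cardinal claim $(1/4 - o(1))\sv_{\max}$ follows immediately, since every hired candidate yields non-negative $\sv$ and the mechanism yields exactly $\sv_{\max}$ whenever it hires $c_\sender$.

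The main obstacle I anticipate is the case $i^* = j^*$, which occurs genuinely in the model (e.g., for small subsets $A_{t-1}$, or when a single candidate happens to be simultaneously the $\sv$- and $\rv$-maximizer of the random subset $A_{t-1}$). The clean resolution is the pointwise inequality $s/(t-1) \geq s(s-1)/((t-1)(t-2))$ valid for $t \geq s+1$: this single uniform lower bound covers both sub-cases and reduces the remainder of the argument to the routine telescoping identity above, with no branching propagating through the summation.
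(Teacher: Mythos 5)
Your proof is correct and follows essentially the same route as the paper: both arguments fix the arrival round $t$ of $c_\sender$, bound the probability of reaching round $t$ without a prior $\HIRE$ signal by $s(s-1)/((t-1)(t-2))$, and telescope the resulting sum to $s(n-s)/(n(n-1))$, optimized at $s=\lfloor n/2\rfloor$. The only (minor) difference is packaging: the paper multiplies per-round bounds $\Pr[\sigma_k = \NOHIRE \mid A_k] \ge (k-2)/k$, while you observe directly that avoiding $\HIRE$ in rounds $s+1,\ldots,t-1$ is exactly the event that the $\sv$- and $\rv$-maximizers of $A_{t-1}$ both land in the sample positions, and your pointwise inequality $s/(t-1)\ge s(s-1)/((t-1)(t-2))$ cleanly absorbs the $i^*=j^*$ case that the paper's ``at most two triggering candidates'' bound handles implicitly.
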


\begin{proof}
	Let $A_t$ denote the random set of candidates observed up to round $t$. The first $s$ candidates get a $\NOHIRE$ signal. For each round $t \ge s + 1$, we can determine the candidate in round $t$ by first drawing randomly the set $A_t$, and then drawing a uniform random candidate from $A_t$ to arrive in round $t$. In $A_t$ there are (one or) two candidates that can potentially generate a $\HIRE$ signal in round $t$, the best one for $\sender$ and the one best for $\receiver$ (could be the same one). Hence, the probability of signal $\HIRE$ is at most $\frac{2}{t}$. Overall, the probability $\prob[\signal_t = \NOHIRE \mid A_t]$ is at least
	\begin{equation}\label{eqn:inequalitySignalNot}
	\prob[\signal_t = \NOHIRE \mid A_t] \ge \begin{cases}
	1 & t = 1, \dots, s \\
	\frac{t-2}{t} & t = s+1, \dots, n-1
	\end{cases}.
	\end{equation}
	The success probability for $\sender$ in round $t > s$ conditioned on $A_t$ is
	\begin{align*}
	\Pr[\theta_t = c_\sender \mid A_t] &= \Pr[c_\sender \in A_t] \cdot \Pr[\theta_t = c_\sender \mid c_\sender \in A_t] \cdot \prob[\signal_1,\dots,\signal_{t-1} = \NOHIRE \mid A_{t-1}] \\
	&= \frac{t}{n} \cdot \frac{1}{t} \cdot \prod_{k=s+1}^{t-1}\prob[\signal_k = \NOHIRE \mid A_k] %\\
	\ge \frac{1}{n} \prod_{k=s+1}^{t-1} \frac{k-2}{k} \\
	&= \frac{1}{n}\frac{(s-1)s}{(t-2)(t-1)}\enspace,
	\end{align*}
	and, since these probabilities are independent of $A_t$, the overall success probability for $\sender$ is at least
	\begin{align*}
	\sum_{t=s+1}^{n} \frac{1}{n} \cdot \frac{(s-1)s}{(t-2)(t-1)} \quad = \quad \frac{(s-1)s}{n} \sum_{t= s+1}^n \left( \frac{1}{t-2} - \frac{1}{t-1} \right) \quad = \quad \frac{s}{n}\left(1 - \frac{s-1}{n-1}\right).
	\end{align*}
	The expression is optimized for $s= \lfloor n/2 \rfloor$ and yields a success probability of $\frac{1}{4} - o(1)$. The mechanism is entirely symmetric in terms of sender and receiver, so the same result applies to the success probability of $\receiver$.
\end{proof}

We now show that the First-Opt mechanism is indeed optimal, in the sense that there are cases in which no mechanism can achieve a better success probability. We study negatively correlated utilities, i.e., the candidate with rank $i$ for $\sender$ has rank $n-i+1$ for $\receiver$ for all $i \in [n]$. Observe that, beyond defining the ranking, cardinal utility values are irrelevant for the objectives of $\sender$ and $\receiver$. Moreover, given the ranking of known candidates, $\sender$ can infer no additional information from their values about the position of the current candidates in the overall ranking. As such, we will assume that $\sender$ ignores the cardinal values.

\begin{theorem}
	\label{thm:BeckmannUB}
	If utilities of sender and receiver are negatively correlated, the First-Opt mechanism maximizes the success probability for $\sender$ among all persuasive mechanisms in the secretary scenario with disclosure.
\end{theorem}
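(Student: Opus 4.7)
The plan is a backward induction on rounds, formulated as a dynamic program that tracks the sender and receiver continuation values jointly; this is the generalized DP technique hinted at in the theorem statement. As in the persuasiveness proof of Theorem~\ref{theo:receiverOrdinalSecretaryDisclosure}, once $c_\receiver$ has been disclosed as a rejected candidate, $\receiver$ becomes indifferent and will follow any subsequent signals, so the binding persuasiveness constraints lie on histories where $c_\receiver$ still remains in the pool. By the negative correlation and the fact that $\sender$ observes only ordinal information, I would first argue that any persuasive mechanism may without loss be represented by HIRE-probabilities depending only on the round $t$, the size $k$ of the remaining pool, the indicators that $c_\sender$ and $c_\receiver$ are still remaining, and the sender-rank of the current candidate among already-arrived candidates.

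Let $U(t,k)$ and $W(t,k)$ denote the sender- and receiver-success probabilities from round $t$ onwards achieved by a given persuasive mechanism, conditioned on both $c_\sender$ and $c_\receiver$ still being in the remaining pool of size $k = n-t+1$. Writing $q_t$ for the HIRE probability in round $t$ and $\mu_t$ for the conditional distribution over which candidate triggers HIRE, one has $U(t,k) = q_t\,\mu_t(c_\sender) + (1-q_t)\,U(t+1,k-1)$ and an analogous recursion for $W$. Persuasiveness at round $t$ forces $\mu_t(c_\receiver) \ge (1-\mu_t(c_\receiver))\,W(t+1,k-1)$, i.e.\ the HIRE distribution must place enough mass on $c_\receiver$ to outweigh what $\receiver$ obtains by declining the recommendation and following the continuation. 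Maximizing $U(t,k)$ over $q_t$ and $\mu_t$ subject to these constraints yields a linear program whose extremum places all HIRE mass on the two candidates that are best-so-far for $\sender$ and best-so-far for $\receiver$, in equal total probability $1/k$ each -- which is exactly the First-Opt rule.

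Unrolling the recursion reproduces the calculation in the proof of Theorem~\ref{theo:receiverOrdinalSecretaryDisclosure}, giving the upper bound $U(s+1,n-s) = \frac{s}{n}\left(1 - \frac{s-1}{n-1}\right)$, optimized at $s=\lfloor n/2 \rfloor$, which is $1/4 - o(1)$ and matches First-Opt.

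The main obstacle is controlling persuasiveness against the full set of receiver deviations, and not merely the one-step ``hire now'' deviation used above. In principle $\receiver$ may reject the current HIRE, exploit disclosure of the rejected candidate to see additional candidates, and then adaptively choose when to hire in later rounds; the value of such multi-step deviations depends on the entire continuation of the mechanism. I would handle this inductively: under the hypothesis that the continuation from round $t+1$ is (equivalent to) First-Opt, the receiver's best multi-step deviation has value exactly $W(t+1,k-1)$ because First-Opt is simultaneously optimal for $\receiver$ among all persuasive continuations (by the symmetry of the negative-correlation setup under swapping $\sender$ and $\receiver$). Hence the one-step persuasiveness constraint is already tight against every multi-step deviation, the joint DP collapses to the symmetric form above, and the theorem follows.
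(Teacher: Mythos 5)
Your DP framing has the right general flavor (the paper also reduces to a backward-induction threshold argument in the spirit of Beckmann), but the persuasiveness constraint you impose is wrong in a way that breaks the proof. You write that rejecting a $\HIRE$ in round $t$ gives $\receiver$ a deviation value governed by $W(t+1,k-1)$, the \emph{mechanism's} continuation value for the receiver, and then try to justify this via a sender--receiver symmetry. But in this model, once $\receiver$ deviates from a $\HIRE$ signal, the mechanism signals $\NOHIRE$ forever; there is no ``persuasive continuation'' for her to ride. Her deviation value is what she can guarantee \emph{on her own}, observing only disclosed rejections, which with ordinal utility and no further signals is simply $\tfrac{1}{n-t}$ on the event that $\theta_t \ne c_\receiver$. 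The correct constraint is $\mu_t(c_\receiver) \ge \bigl(1-\mu_t(c_\receiver)\bigr)\cdot\tfrac{1}{n-t}$, i.e.\ $\mu_t(c_\receiver) \ge \tfrac{1}{n-t+1}$, a quantity entirely decoupled from $W(t+1,k-1)$. The symmetry you invoke (that First-Opt is simultaneously receiver-optimal among persuasive mechanisms) is true, but it bears on a different question and cannot substitute the unconstrained deviation value for the mechanism's continuation value.

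Beyond this, you assert without proof that the LP extremum ``places all HIRE mass on the two best-so-far candidates, in equal total probability $1/k$ each,'' and that this is First-Opt; but First-Opt hires any best-so-far candidate with probability one, giving per-round $\HIRE$ probability roughly $2/t$, and your state space conflates ``remaining'' and ``arrived'' candidates. More fundamentally, the hard part of the upper bound is missing: the paper first restricts to Best-So-Far mechanisms (Lemma~\ref{lem:BeckmannBestSoFar}), then uses an \emph{adversarial configuration of unseen candidates} to force $p_t^\receiver \ge p_t^\sender$ for every history (Lemma~\ref{lem:BeckmannSameP}) -- a step that crucially exploits that $\sender$ does not know the candidate set and must be persuasive against the worst-case remaining pool -- and then shows the optimum within this class is deterministic (Lemma~\ref{lem:Beckmann0-1}) and has threshold structure via the $u_t/v_t$ recursion (Lemma~\ref{lem:BeckmannRecursion}). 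Unrolling your recursion only recomputes First-Opt's performance; it does not show that no other threshold $s$ or non-threshold rule does better. You need a genuine ``stop if $t/(2n) \ge v_t$, else wait'' backward-induction argument, plus the explicit solution of the recursion, to nail down $s=\lfloor n/2\rfloor$ as optimal.
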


We first show the structural Lemmas~\ref{lem:BeckmannBestSoFar}-\ref{lem:Beckmann0-1}. We concentrate on the following class of randomized \emph{Best-So-Far} mechanisms. A Best-So-Far mechanism signals $\HIRE$ in rounds $1,\ldots,n-1$ with probability > 0 only if the candidate in the current round is best so far for either $\sender$ or $\receiver$. A Best-So-Far mechanism always signals $\HIRE$ in the last round if it has not done so before. We show that we can turn any mechanism into a Best-So-Far mechanism. This yields higher success probabilities for $\sender$ and does not hurt the incentives for $\receiver$, since hiring non-best-so-far candidates is unprofitable for both $\sender$ and $\receiver$.
\begin{lemma}
	\label{lem:BeckmannBestSoFar}
	If utilities of sender and receiver are negatively correlated, then for every persuasive mechanism there is a persuasive Best-So-Far mechanism with weakly higher success probability for $\sender$.
\end{lemma}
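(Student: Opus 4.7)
The plan is to construct $\phi'$ from $\phi$ by surgically replacing every $\HIRE$ signal issued on a non-best-so-far candidate with $\NOHIRE$, and forcing $\HIRE$ in round $n$ if no earlier $\HIRE$ was sent. Concretely, I would couple $\phi'$ to $\phi$ on a common probability space (shared arrival order and internal randomness), setting $\signal'_t = \HIRE$ in round $t < n$ exactly when $\signal_t = \HIRE$ and $\stateON_t \in \{\arg\max_{i \in A_t} \sv_i, \arg\max_{i \in A_t} \rv_i\}$. By construction $\phi'$ is Best-So-Far. I would also assume, without loss of generality, that both mechanisms react to any receiver deviation by sending $\NOHIRE$ in every subsequent round below $n$ and $\HIRE$ at round $n$; this is the most restrictive response and can only tighten the incentive constraints.

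For the sender-success comparison, on each coupled realization let $T$ be the first round in which $\phi$ signals $\HIRE$. If $\stateON_T$ is best-so-far for at least one party, then $\phi'$ also fires $\HIRE$ at $T$ and produces the same hired candidate. Otherwise $\stateON_T$ is not best-so-far for $\sender$, hence $\stateON_T \neq c_\sender$ globally, and $\phi$'s hire contributes $0$ to the sender's success in this realization; $\phi'$ defers and, applying the same rule recursively, fires $\HIRE$ whenever $\phi$ would do so on a future best-so-far candidate, in particular whenever $c_\sender$ arrives. Summing over realizations yields that the sender's success under $\phi'$ is at least that under $\phi$.

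The main obstacle is persuasiveness of $\phi'$. The critical observation is that $c_\receiver$ is automatically best-so-far for $\receiver$ at its arrival round, so $\{\stateON_t = c_\receiver,\, \signal_t = \HIRE\} = \{\stateON_t = c_\receiver,\, \signal'_t = \HIRE\}$. Combined with $\prob[\signal'_t = \HIRE \mid h] \le \prob[\signal_t = \HIRE \mid h]$ for every $\receiver$-observable history $h$, this gives
\[
\prob\!\left[\stateON_t = c_\receiver \growingmid \signal'_t = \HIRE,\, h\right] \;\ge\; \prob\!\left[\stateON_t = c_\receiver \growingmid \signal_t = \HIRE,\, h\right],
\]
with the reverse inequality for $\NOHIRE$. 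Under the $\NOHIRE$-after-deviation convention, on the non-trivial branch $c_\receiver \notin A_{t-1}$ the $\receiver$'s best deferred strategy is to wait until round $n$ and hire $\stateON_n$ only when she can verify it is $c_\receiver$, yielding deviation value $(1-p)/(n-t)$, where $p$ is the relevant $\HIRE$-posterior on $c_\receiver$. Persuasiveness of $\phi$ gives $p_\phi \ge 1/(n-t+1)$, and combined with $p_{\phi'} \ge p_\phi$ this forces $p_{\phi'} \ge (1-p_{\phi'})/(n-t)$, establishing $\HIRE$-persuasiveness of $\phi'$. For $\NOHIRE$-persuasiveness I would apply the same coupling to each sub-instance to obtain that $\receiver$'s compliance value under $\phi'$ dominates that under $\phi$ at every state-history, and then chain this with the reverse posterior inequality to bound the deviation value from hiring $\stateON_t$ now. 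The delicate point is that the $\receiver$ information sets induced by $\phi$ and $\phi'$ do not coincide, so this comparison has to be carried out at the state-history level by induction on rounds, with the degenerate branch $c_\receiver \in A_{t-1}$ (where all relevant quantities vanish) treated separately.
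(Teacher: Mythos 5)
Your construction of $\phi'$ (suppress any $\HIRE$ on a candidate that is not best-so-far for $\sender$ or $\receiver$, default to $\HIRE$ in round $n$) and your key observation---that $\{\signal'_t = \HIRE\}$ is the sub-event of $\{\signal_t = \HIRE\}$ on which $\theta_t$ is best-so-far, so the posterior on $c_\receiver$ only rises---both coincide with the paper's proof. Your sender-success comparison also goes through, though your description that after a suppressed $\HIRE$ at round $T$ ``$\phi'$ fires $\HIRE$ whenever $\phi$ would do so on a future best-so-far candidate'' is misstated: $\phi$ is a direct mechanism that issues its (unique) $\HIRE$ at $T$ and then only $\NOHIRE$, so on the coupled realization $\phi'$ simply waits until round $n$. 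The bound still holds because in that realization $\phi$ hires a non-best-so-far candidate, which contributes $0$ to $\sender$'s success, while $\phi'$ contributes $\ge 0$.

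The genuine gap is in $\NOHIRE$-persuasiveness. You correctly flag that ``the $\receiver$ information sets induced by $\phi$ and $\phi'$ do not coincide,'' but the proposed fix---carrying out the comparison ``at the state-history level by induction on rounds''---is not a proof: persuasiveness of $\phi$ is an inequality that holds in expectation over $\phi$'s information sets, not pointwise over state-histories, and it cannot simply be transported to $\phi'$'s (strictly coarser) information sets without an aligning device. The paper supplies exactly such a device: temporarily allow $\phi'$ to emit a separate \emph{transition signal} whenever it suppresses a $\HIRE$. This tells $\receiver$ precisely when and where $\phi$ diverged from $\phi'$, so that her information partition under $\phi'$ refines the one under $\phi$; the pointwise dominance of compliance values over deviation values, which you carry out correctly for the $\HIRE$ branch, then becomes legitimate on every branch, including the transition and $\NOHIRE$ branches. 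Dropping the transition signal afterwards only coarsens $\receiver$'s information and hence preserves persuasiveness. Your sketch is missing this alignment step, and without it the $\NOHIRE$-persuasiveness argument does not close.
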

\begin{proof}
	Consider a persuasive mechanism $\phi$ for $\sender$. We construct a new Best-So-Far mechanism $\phi'$ as follows. The new mechanism runs $\phi$. Whenever $\phi$ decides to signal $\HIRE$ on a candidate that is neither best so far for $\sender$ nor $\receiver$, $\phi'$ changes the signaling strategy. We assume for the moment that in this case $\phi'$ sends to $\receiver$ a separate transition signal. $\phi'$ then signals $\NOHIRE$ in all rounds $t,\ldots,n-1$ and $\HIRE$ in round $n$. 
	
	Clearly, for any $\HIRE$ signal received in rounds $1,\ldots,n-1$, $\phi'$ yields a higher conditional probability that the recommended candidate is $c_\receiver$. If the transition signal is received, the mechanism switches to a deterministic $\HIRE$ signal in the last round. This only increases both, the probability to hire the sender- and receiver-optimal candidates $c_\sender$ and $c_\receiver$, since in this case $\phi$ would have hired a suboptimal candidate for both $\sender$ and $\receiver$.
	
	If $\phi$ comes to round $n-1$ without a $\HIRE$ signal, $\phi'$ signals $\HIRE$ in round $n$. This, too, weakly increases both the probabilities of hiring $c_\sender$ and of hiring $c_\receiver$.
	
	Hence, given that $\phi$ is persuasive, $\phi'$ is persuasive as well. It increases the success probability for both $\sender$ and $\receiver$. Finally, $\phi'$ remains persuasive even when we omit the transition signal, since $\receiver$ is given only less information.
\end{proof}

For the rest of the argument, we concentrate on Best-So-Far mechanisms. Note that Best-So-Far mechanisms are not always persuasive. For example, simply flipping a coin and running the sender-optimal or the receiver-optimal secretary algorithm throughout is a Best-So-Far mechanism. In the secretary scenario without revelation this was our persuasive approach, but here it might give $\receiver$ incentives to deviate: By the first time $\receiver$ sees that a currently best candidate for her was rejected, $\receiver$ knows the sender-optimal algorithm is running. Then she might have an incentive to deviate, since the sender-optimal algorithm never recommends $c_\receiver$.

We now enlarge the class of mechanisms under consideration. Suppose we are given any history $A_{t-1}$ of arrived candidates until round $t-1$. In round $t$, let $p_t^{\sender}$ be the probability of signaling $\sigma_t = \HIRE$ if $\theta_t$ is the best candidate so far for $\sender$. We define $p_t^{\receiver}$ accordingly. 
\begin{lemma}
	\label{lem:BeckmannSameP}
	If a Best-So-Far mechanism is persuasive for negatively correlated utilities, it satisfies $p_t^{\receiver} \ge p_t^{\sender}$ for all rounds $t \in [n]$ and all histories $A_{t-1}$. 
\end{lemma}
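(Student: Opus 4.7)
My plan is a proof by contradiction: I assume there exists a round $t$ and a history $A_{t-1}$ with $p_t^{\sender} > p_t^{\receiver}$, and exhibit a receiver deviation that strictly improves on being obedient, contradicting persuasiveness. I may restrict to $c_\receiver \notin A_{t-1}$, since otherwise the receiver has zero success probability under every strategy. Under negative correlation, for $t \ge 2$ the events ``$\stateON_t$ is best-so-far for $\sender$'' and ``$\stateON_t$ is best-so-far for $\receiver$'' are disjoint. Letting $r$ be the sender-rank of the sender-best candidate in $A_{t-1}$ and $r'$ the receiver-rank of the receiver-best one, so that $r' \ge 2$, a direct count of the remaining $n-t+1$ candidates yields
\[
q_{\HIRE} \;:=\; \prob[\stateON_t = c_\receiver \mid \signal_t = \HIRE, A_{t-1}] \;=\; \frac{p_t^{\receiver}}{(r-1)p_t^{\sender} + (r'-1)p_t^{\receiver}}.
\]

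The deviation I would use is: at the target node, reject the \HIRE\ signal and thereafter follow the mechanism. Because the mechanism is globally persuasive, following from round $t+1$ onwards is an optimal receiver strategy, and I write its conditional success probability (given $c_\receiver \notin A_t$) as $V_{t+1}(A_t)$. A key structural property of BSF mechanisms is that whenever $c_\receiver$ arrives in a later round $t''$ still unrevealed, it is automatically best-so-far for $\receiver$ and so is recommended with probability $p_{t''}^{\receiver}$. Persuasiveness at \HIRE\ then requires $q_{\HIRE} \ge (1 - q_{\HIRE})\, V_{t+1}(A_t)$; substituting the expression for $q_{\HIRE}$ and rearranging gives
\[
p_t^{\receiver}\bigl[1 + (2-r')\, V_{t+1}(A_t)\bigr] \;\ge\; (r-1)\, p_t^{\sender}\, V_{t+1}(A_t),
\]
which forces $p_t^{\receiver} \ge p_t^{\sender}$ precisely when $V_{t+1}(A_t) \ge 1/(r+r'-3)$ (the relevant case being $r + r' > 3$).

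The main obstacle is establishing this lower bound on $V_{t+1}(A_t)$. A key combinatorial observation is that under negative correlation the sender-ranks in $A_{t-1}$ lie in $[r, n-r'+1]$, so $t-1 \le n-r-r'+2$, i.e., $r + r' \le n - t + 3$. In the extremal case of equality, the trivial bound $V_{t+1}(A_t) \ge 1/(n-t) = 1/(r+r'-3)$, attained by the receiver simply committing to hire in round $t+1$, already suffices. For non-extremal histories I would proceed by backward induction on $t$: assuming the lemma already holds for every $t'' > t$, future \HIRE\ signals on $c_\receiver$ at least match those on competing sender-best arrivals, and a careful summation over the round in which $c_\receiver$ eventually arrives sharpens $V_{t+1}(A_t)$ to the required $1/(r+r'-3)$. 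I expect this inductive sharpening --- in particular accounting for the probability that an earlier sender-best \HIRE\ intercepts $c_\receiver$ --- to be the most delicate ingredient of the proof.
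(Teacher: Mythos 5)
Your framework is set up carefully and the computation of $q_{\HIRE} = \frac{p_t^{\receiver}}{(r-1)p_t^{\sender} + (r'-1)p_t^{\receiver}}$ together with the derived constraint
\[
p_t^{\receiver}\bigl[1 + (2-r')\, V_{t+1}\bigr] \;\ge\; (r-1)\, p_t^{\sender}\, V_{t+1}
\]
is correct. However, your proof has a genuine gap: the lower bound $V_{t+1}(A_t) \ge 1/(r+r'-3)$ that you need for non-extremal histories is asserted but never established. The sketch of an "inductive sharpening" does not resolve this, and it is not clear that the bound even holds in the generality you would need. As written, the argument is incomplete.

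More importantly, you are missing the key observation that makes the paper's proof short and makes your sharpened bound on $V_{t+1}$ unnecessary. In the secretary scenario, $\sender$ never learns the absolute ranks $r$ and $r'$: she only observes the relative ordering of the candidates that have arrived so far, and she chooses $p_t^{\sender}, p_t^{\receiver}$ as a function of that relative pattern alone. Consequently the same pair $(p_t^{\sender}, p_t^{\receiver})$ must be persuasive simultaneously for \emph{every} global ranking consistent with the observed relative pattern. It therefore suffices to check the constraint in a single, adversarially chosen consistent instance. The paper picks precisely the extremal instance $r' = 2$, $r = n-t+1$ (which exists for every observed pattern): there the $n-t$ remaining non-$c_\receiver$ candidates are exactly the sender-top-$n-t$ and are all sender-best-so-far, and the trivial deviation "reject the $\HIRE$ and hire in round $t+1$" already gives $V_{t+1} \ge 1/(n-t) = 1/(r+r'-3)$, immediately forcing $q_{\HIRE} \ge 1/(n-t+1)$ and hence $p_t^{\receiver} \ge p_t^{\sender}$. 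Once you see that only this worst-case instance needs to be checked, the backward induction and the general $V_{t+1}$ estimate become superfluous, and the argument collapses to a few lines.
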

\begin{proof}
	Consider the beginning of round $t \le n-1$, where $n-t+1$ candidates are still to arrive. Since $\receiver$ knows all candidates upfront and sees all rejected ones, she can determine which candidates would qualify in round $t$ as best so far for $\sender$ or $\receiver$. Suppose $c_\receiver$ has not arrived yet. We construct a worst-case scenario for $\receiver$ as follows: The second-best candidate for $\receiver$ has arrived, so $c_\receiver$ is the only one that generates a $\HIRE$ signal in favor of $\receiver$. All other $n-t$ candidates are the top $n-t$ candidates for $\sender$.
	
	In this worst-case scenario, if the probability $p_{t}^{\sender} > p_t^{\receiver}$, then a $\HIRE$ signal in round $t$ leads to hiring of $c_\receiver$ with probability of $p_t^{\receiver} \Big/ (p_t^{\receiver} + (n-t)p_t^{\sender}) < 1/(n-t+1)$. In contrast, if a $\NOHIRE$ signal is sent and $\receiver$ deviates, she gets $c_\receiver$ with probability of $(1-p_t^{\receiver}) \Big/ (1 - p_t^{\receiver} + (n-t)(1-p_t^{\sender})) > 1/(n-t+1)$. Hence, the mechanism is clearly not persuasive.
	
	$\sender$ (unlike $\receiver$) is entirely unaware of whether the situation in the current round represents a worst-case scenario or not. In fact, in every round, every history of arrived candidates could give rise to such a worst-case scenario. As such, in order to guarantee persuasiveness, the mechanism needs to satisfy $p_t^{\receiver} \ge p_t^{\sender}$ for all rounds $t \in [n-1]$ and all histories of arrived candidates.
\end{proof}
Consider the class of Best-So-Far mechanisms with $p_t^{\receiver} \ge p_t^{\sender}$. It could potentially extend \emph{slightly beyond} persuasive ones. We now optimize the success probability for $\sender$ within this class. By the previous two lemmas, this gives an upper bound on the success probability of any persuasive mechanism. For larger values of $p_t^{\receiver}$, we obviously have smaller success probability for $\sender$. Thus, the best mechanism in the class satisfies $p_t = p_t^{\receiver} = p_t^{\sender}$ in each round and for each history. Moreover, we can show that $p_t \in \{0,1\}$.
\begin{lemma}
	\label{lem:Beckmann0-1}
	There is an optimal persuasive Best-So-Far mechanism for negatively correlated utilities such that $p_t \in \{0,1\}$ for all $t \in [n-1]$.
\end{lemma}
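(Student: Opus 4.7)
My plan is to observe that the sender's expected success probability, viewed as a function of a single variable $p_t$ for a fixed round $t$ and fixed history $A_{t-1}$ (with every other $p_{t'}$ held constant), is an affine function of $p_t$. Since an affine function on $[0,1]$ attains its maximum at an endpoint, any optimal $p_t$ may be replaced by $0$ or $1$ without decreasing $\sender$'s value. Performing this one-coordinate modification in a single backward sweep over all $(t, A_{t-1})$ therefore converts an optimal mechanism in the class under consideration into one with $p_t \in \{0,1\}$ everywhere.

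I would establish the affine dependence via backward induction. For every $t$ and history $A_{t-1}$, define $V_t(A_{t-1})$ to be $\sender$'s expected probability of hiring $c_\sender$ from round $t$ onward, conditional on no prior HIRE signal and rejected set $A_{t-1}$. The boundary case is $V_n(A_{n-1}) = \mathbf{1}[\theta_n = c_\sender]$ for the unique remaining candidate $\theta_n$. For $t < n$, conditioning on the identity of $\theta_t$ (uniform over $[n]\setminus A_{t-1}$), I would write
\begin{align*}
V_t(A_{t-1}) = \frac{1}{n-t+1} \sum_{\theta_t \notin A_{t-1}} \Big( & \beta(\theta_t, A_{t-1})\big[p_t\, \mathbf{1}[\theta_t = c_\sender] + (1-p_t)\, V_{t+1}(A_{t-1} \cup \{\theta_t\})\big] \\
& + (1 - \beta(\theta_t, A_{t-1}))\, V_{t+1}(A_{t-1} \cup \{\theta_t\}) \Big),
\end{align*}
where $\beta(\theta_t, A_{t-1}) \in \{0,1\}$ indicates whether $\theta_t$ is best-so-far for $\sender$ or $\receiver$ given $A_{t-1}$. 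Because $V_{t+1}(A_{t-1} \cup \{\theta_t\})$ depends only on the $p_{t'}$ with $t' > t$, it is independent of $p_t$; the right-hand side is thus affine in $p_t$ and is maximized at $p_t \in \{0,1\}$.

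Iterating from $t = n-1$ down to $t = 1$ rounds every $p_t$ to $\{0,1\}$ without decreasing $\sender$'s success probability. The main point to verify is that the rounded mechanism is still persuasive. The necessary condition $p_t^\receiver \ge p_t^\sender$ from Lemma~\ref{lem:BeckmannSameP} is automatically preserved because we keep $p_t^\sender = p_t^\receiver = p_t$ throughout, as argued in the text preceding the lemma. The remaining $\receiver$-deviation inequalities (following the worst-case template of Lemma~\ref{lem:BeckmannSameP}) are themselves affine in each $p_t$, so a parallel endpoint argument shows they translate cleanly to the binary setting; this affinity of the persuasiveness constraints in parallel with the objective is the main technical point I would need to pin down carefully to complete the proof.
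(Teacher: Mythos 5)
Your core observation—that for fixed $t$ and history, $\sender$'s success probability is affine in $p_t$, so the optimum is attained at an endpoint—is exactly the engine of the paper's proof, and your recursive definition of $V_t$ correctly exhibits that affine dependence. Where you diverge is in trying to verify that the rounded mechanism remains persuasive. The paper deliberately does \emph{not} do this at this stage. In the paragraph preceding the lemma, the authors relax to the class of Best-So-Far mechanisms satisfying $p_t^\receiver \ge p_t^\sender$, explicitly noting that it ``could potentially extend slightly beyond persuasive ones.'' Within that relaxed class, the only constraint besides $p_t \in [0,1]$ is $p_t^\receiver \ge p_t^\sender$, which is satisfied trivially at $p_t^\receiver = p_t^\sender = p_t$, so the one-dimensional affine argument immediately gives a binary optimum with no further conditions to worry about. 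Persuasiveness of the resulting mechanism (which resolves to First-Opt) is certified by an entirely separate argument earlier in the section.

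The ``main technical point'' you flag at the end—showing that the other $\receiver$-deviation inequalities, being affine in $p_t$, ``translate cleanly to the binary setting''—is where your approach, as stated, actually fails. An affine objective maximized over $[0,1]$ subject to additional affine constraints is not, in general, optimized at $p_t \in \{0,1\}$: the constraint may truncate the interval to $[l,u]$ with $l,u$ strictly inside $(0,1)$, and the maximizer sits at such an interior endpoint. The parallel-endpoint heuristic does not rescue this. So if you insist on proving binary-optimality within the class of persuasive mechanisms directly, you would need a genuinely different argument. The paper's relaxation sidesteps the issue altogether: prove binary optimality in the larger class, observe the resulting recurrence yields First-Opt, and conclude optimality among persuasive mechanisms because First-Opt is both an upper bound (via the relaxation plus Lemmas~\ref{lem:BeckmannBestSoFar} and \ref{lem:BeckmannSameP}) and achievable (it is persuasive). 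You should adopt that framing rather than try to carry persuasiveness through the rounding.
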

\begin{proof}
	The proof is a simple backwards induction. Given round $n$, we assume by definition that a Best-So-Far mechanism has $p_n = 1$. Consider round $n-1$ and condition on the event that the current candidate is best so far for either $\sender$ or $\receiver$.\footnote{Note that since $p_{n-1} = p_{n-1}^{\sender} = p_{n-1}^{\receiver}$, we may not distinguish between the cases best so far for $\sender$ or $\receiver$.} Now there are two options, hire in this round or hire in round $n$. $\sender$ can determine probabilities of hiring the sender-optimal candidate $c_\sender$ for both these options. It is then optimal for $\sender$ to decide deterministically for the option (i.e., either $p_{n-1} = 1$ or $p_{n-1} = 0$), whichever gives higher success probability.
	
	Thus, starting in round $n-1$, the optimal policy uses $p_n, p_{n-1} \in \{0,1\}$. In round $t < n-1$, we have a similar situation. If the candidate is best so far, there are two options - either hire in this round or reject and invoke the optimal policy for rounds $t+1,t+2,\ldots,n$. $\sender$ can determine probabilities of hiring $c_\sender$ for both these options. It is then optimal for $\sender$ to decide deterministically for the option (i.e., either $p_t = 1$ or $p_t = 0$), whichever gives higher success probability.
	
	Hence, by induction, the optimal Best-So-Far mechanism with $p_t = p_t^{\receiver} = p_t^{\sender}$ has all $p_t \in \{0,1\}$.
\end{proof}
\begin{proof}[Proof of Theorem~\ref{thm:BeckmannUB}]
	Based on Lemma~\ref{lem:Beckmann0-1}, our goal is to determine optimal values $p_t \in \{0,1\}$. Our proof generalizes ideas for the standard secretary problem~\cite{Beckmann90}. Let $B_t$ be the event that the candidate in round $t$ is best so far for either $\sender$ or $\receiver$. Let $R_t$ be the event that candidate $\theta_t$ is rejected. Let $R_{-t}$ be the event that all candidates $\theta_1,\ldots,\theta_{t-1}$ were rejected. Let $u_t = \Pr[c_\sender \text{ hired in rounds } t,\ldots,n \mid B_t, R_{-t}]$ and $v_t = \Pr[c_\sender \text{ hired in rounds } t+1,\ldots,n \mid R_t, R_{-t}]$.
	
	To reach round $t+1$, we must have $R_{-t}$ and either $R_t$, or $B_t$ and $p_t = 0$. In both cases, consider the conditional probability of $B_{t+1}$ i.e., $\Pr[B_{t+1} \mid R_t, R_{-t}]$ and $\Pr[B_{t+1} \mid B_t, R_{-t}]$.
	
	Note that due to random order arrival, conditioned on every set $A_t$ of candidates arrived in rounds $k = 1,\ldots,t$, we have the same probabilities of generating $B_k$ and $R_k$ events and, thus, the same probability of $B_t$, $R_{-t}$ and $R_t$. Now, conditioned on $A_{t+1}$, the probability to have $B_{t+1}$ in round $t+1$ is $\Pr[B_{t+1} \mid A_{t+1}] = \frac{2}{t+1}$, which is the same for every $A_{t+1}$. Since  $\Pr[R_{-t} \mid A_t]$ and $\Pr[R_t \mid A_t]$ are independent of the set $A_t$, we also have $\Pr[B_{t+1} \mid A_{t+1}, R_t, R_{-t}] = \Pr[B_{t+1} \mid R_t, R_{-t}] = \frac{2}{t+1}$. Moreover, since $\Pr[B_t \mid A_t]$ is the same for every $A_t$, we have $\Pr[B_{t+1} \mid A_{t+1}, B_t, R_{-t}] = \Pr[B_{t+1} \mid B_t, R_{-t}]=\frac{2}{t+1}$. Thus, we see that
	\begin{equation}\label{eqn:recursion-v}
	v_t =  \frac{2}{t+1} \cdot u_{t+1} + \frac{t-1}{t+1} \cdot v_{t+1} \enspace.
	\end{equation}
	Now suppose $\theta_t$ is best so far, i.e., condition on $B_t$. If we hire $\theta_t$, we get the sender-optimal $c_\sender$ with probability $\frac{t}{2n}$. Otherwise, if we reject it, we get $c_\sender$ with probability $v_t$. Thus, an optimal Best-So-Far mechanism will choose the better alternative and obtain
	\begin{equation}\label{eqn:recursion-u}
	u_t = \max \left\{\frac{t}{2n}, v_t\right\} \enspace.
	\end{equation}
	We resolve the recurrences for $u_t$ and $v_t$ by backwards induction. The base cases are $v_n = 0$ and $u_n = \frac{1}{2}$. Note that $v_{n-1} = \frac{n-2}{n} \cdot v_n + \frac{2}{n} \cdot u_n = \frac{1}{n}$, which yields $u_{n-1} = \max \left\{ \frac{n-1}{2n}, v_{n-1} \right\} = \frac{n-1}{2n}$. Repeating this argument, we have $u_t > v_t$ as long as $\frac{t}{2n} > v_t$, and thus should set $p_t = 1$. If on the other hand $v_t \ge \frac{t}{2n}$, it is optimal to wait and set $p_t = 0$, even though the current candidate is best so far. More generally, $u_t$ and $v_t$ can be given as follows which we prove below.
	
	\begin{lemma}
		\label{lem:BeckmannRecursion}
		$u_t$ and $v_t$ are given by
		\begin{equation*}
		u_t = \begin{cases}
		\frac{t}{2n} & t \ge \frac{n+1}{2} \\
		\frac{n}{4(n-1)} & t < \frac{n+1}{2}, n \text{ even,} \\
		\frac{n+1}{4n} & t < \frac{n+1}{2}, n \text{ odd,}
		\end{cases}
		\hspace{2cm}
		v_t = \begin{cases}
		\frac{t(n-t)}{n(n-1)} & t \ge \frac{n+1}{2} \\
		\frac{n}{4(n-1)} & t < \frac{n+1}{2}, n \text{ even,} \\
		\frac{n+1}{4n} & t < \frac{n+1}{2}, n \text{ odd.}
		\end{cases}
		\end{equation*}
	\end{lemma}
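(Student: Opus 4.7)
The plan is to prove the lemma by backward induction on $t$, using the recursion \eqref{eqn:recursion-v} together with \eqref{eqn:recursion-u} and the base cases $v_n = 0$ and $u_n = 1/2$.

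First, I would establish the formulas for the regime $t \ge (n+1)/2$. The base case $t=n$ is immediate from $u_n = n/(2n) = 1/2$ and $v_n = n(n-n)/(n(n-1)) = 0$. For the inductive step, assume $u_{t+1} = (t+1)/(2n)$ and $v_{t+1} = (t+1)(n-t-1)/(n(n-1))$; plugging into \eqref{eqn:recursion-v} yields
\[
v_t \;=\; \frac{2}{t+1}\cdot\frac{t+1}{2n} + \frac{t-1}{t+1}\cdot\frac{(t+1)(n-t-1)}{n(n-1)} \;=\; \frac{1}{n} + \frac{(t-1)(n-t-1)}{n(n-1)},
\]
which simplifies to $t(n-t)/(n(n-1))$. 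Then one checks that $t/(2n) \ge t(n-t)/(n(n-1))$ if and only if $2t \ge n+1$, so for $t \ge (n+1)/2$ the max in \eqref{eqn:recursion-u} is attained by $t/(2n)$, confirming $u_t = t/(2n)$.

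Next, I would treat the regime $t < (n+1)/2$. The key observation is that once $u_{t+1} = v_{t+1}$, the recursion collapses:
\[
v_t \;=\; \frac{2}{t+1}v_{t+1} + \frac{t-1}{t+1}v_{t+1} \;=\; v_{t+1},
\]
so $v_t$ is constant below the threshold, and (as $v_t \ge t/(2n)$ there by the complementary inequality) also $u_t = v_t$. It only remains to compute the boundary value. For odd $n$, $(n+1)/2$ is an integer, so by the formula just proved $v_{(n+1)/2} = \tfrac{(n+1)/2 \cdot (n-1)/2}{n(n-1)} = \tfrac{n+1}{4n}$, which then propagates down unchanged. For even $n$, I would compute $v_{n/2}$ one extra step from $u_{n/2+1}$ and $v_{n/2+1}$: a direct calculation gives $v_{n/2} = 1/n + (n-2)^2/(4n(n-1)) = n/(4(n-1))$, which then propagates down unchanged.

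I do not expect any serious obstacle; the only subtlety is the parity bookkeeping at the threshold, where for even $n$ the value $(n+1)/2$ is not itself attained by an integer $t$ and so the boundary constant has to be computed by taking one more step of the recursion, rather than by reading it off the formula for the $t \ge (n+1)/2$ regime. Once the two regimes are matched at the boundary, the inductive propagation in each regime is a one-line calculation.
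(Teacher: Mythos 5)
Your proof is correct and follows essentially the same backward-induction structure as the paper's: verify the closed form in the regime $t \ge (n+1)/2$ inductively, then note that at the threshold $u$ and $v$ coincide so the recursion collapses to a constant below it. The only cosmetic difference is in the odd-$n$ boundary, where you read $v_{(n+1)/2}=u_{(n+1)/2}=(n+1)/(4n)$ directly off the already-proved formula and let the collapse carry it down, whereas the paper recomputes $v_{(n-1)/2}$ by one explicit application of the recursion; both yield the same value.
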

	Hence, it is optimal to signal $\NOHIRE$ in the first $s = \lfloor n/2 \rfloor$ rounds and then signal $\HIRE$ for the first candidate that is best so far, for either $\receiver$ or $\sender$. This is the First-Opt mechanism, and Theorem~\ref{thm:BeckmannUB} is proved.
\end{proof}
\begin{proof}[Proof of Lemma \ref{lem:BeckmannRecursion}]
	First, consider the case $t \ge \frac{n+1}{2}$ and start with $t = n$. Due to the base cases, $v_n = 0 = \frac{n(n-n)}{n(n-1)}$ and $u_n = \frac{1}{2} = \max \left\{\frac{n}{2n}, 0 \right\}$.
	
	Assume the lemma holds for $t+1 \ge \frac{n+1}{2}+1$. We show that it holds for $t$ as well. First, observe that $\frac{t}{2n} \ge \frac{t(n-t)}{n(n-1)}$ for $t \ge \frac{n+1}{2}$ and thus $u_t = \frac{t}{2n}$. For $v_t$ we get the following:
	\begin{align*}
	v_t &= \frac{t-1}{t+1} \cdot v_{t+1} + \frac{2}{t+1} \cdot u_{t+1} \\
	&= \frac{t-1}{t+1} \cdot \frac{1}{n} \cdot \frac{(t+1)(n-(t+1))}{n-1} + \frac{2}{t+1} \frac{t+1}{2n} \\
	&= \frac{1}{n} \cdot \frac{(t-1)(n-t-1) + (n-1)}{n-1} \\
	&= \frac{1}{n} \cdot \frac{t(n-t)}{n-1}
	\end{align*}
	Now consider the second case in which $t < \frac{n+1}{2}$. The base case is $t = \frac{n}{2}$ or $t = \frac{n-1}{2}$, depending on the parity of $n$. 
	
	\begin{description}
		\item[Case $n$ even:] It holds
		\begin{align*}
		v_{\frac{n}{2}} &= \frac{n/2-1}{n/2+1}\cdot v_{\frac{n}{2}+1} + \frac{2}{n/2+1}\cdot u_{\frac{n}{2}+1} \\
		&= \frac{n-2}{n+2} \cdot \frac{1}{n} \cdot \frac{(n+2)(n-2)}{4(n-1)} + \frac{2}{n/2+1} \cdot \frac{n/2+1}{2n} \\
		&= \frac{1}{n} \cdot \frac{(n-2)^2+4(n-1)}{4(n-1)} \\
		&= \frac{1}{4n(n-1)} \left( n^2 - 4n + 4 + 4n -4 \right) \\
		&= \frac{n}{4(n-1)}
		\end{align*}
		Since $t = \frac{n}{2} < \frac{n+1}{2}$ and thus $\frac{\frac{n}{2}}{2n} = \frac{1}{4} < \frac{n}{4(n-1)} = v_t$,  we now have $u_t = v_t$.
		This obviously leads to $v_t = v_{t+1}$ and thus $u_t = v_t$ for all $t < \frac{n+1}{2}$.
		
		\item[Case $n$ odd:] The case is similar:
		\begin{align*}
		v_{\frac{n-1}{2}} &= \frac{(n-1)/2-1}{(n+1)/2}\cdot v_{\frac{n+1}{2}} + \frac{2}{(n+1)/2}\cdot u_{\frac{n+1}{2}} \\
		&= \frac{n-3}{n+1} \cdot \frac{1}{n} \cdot \frac{(n+1)/2((n-1)/2)}{n-1} + \frac{2}{(n+1)/2}\cdot \frac{(n+1)/2}{2n} \\
		&= \frac{1}{n} \cdot \frac{1}{4(n-1)} \left((n-3) (n-1) + 4(n-1) \right) \\
		&= \frac{1}{n} \cdot \frac{1}{4(n-1)} \left(n^2 - 4n + 3 + 4n-4 \right) \\
		&= \frac{1}{n} \cdot \frac{1}{4(n-1)} (n+1)(n-1) \\
		&= \frac{n+1}{4n}
		\end{align*}
		Again, $t = \frac{n-1}{2} < \frac{n+1}{2}$ and thus $\frac{\frac{n-1}{2}}{2n} = \frac{1}{4} - \frac{1}{4n} < \frac{n+1}{4n} = v_t$. Hence, $u_t = v_t$ for all $t < \frac{n+1}{2}$.
	\end{description}
	This proves the lemma.
\end{proof}

\end{document}